\theoremstyle{plain}
\newtheorem{theorem}{Theorem}[section]
\newtheorem{lemma}[theorem]{Lemma}
\newtheorem{corollary}[theorem]{Corollary}
\newtheorem{observation}[theorem]{Observation}
\theoremstyle{definition}
\newtheorem{definition}[theorem]{Definition}
\theoremstyle{remark}
\newcommand{\abbrev}[2]{\expandafter\newcommand\csname #1\endcsname{#2\xspace}}
\newcommand{\lt}{\left}
\newcommand{\rt}{\right}
\DeclareFontFamily{U}{mathx}{\hyphenchar\font45}
\DeclareFontShape{U}{mathx}{m}{n}{
      <5> <6> <7> <8> <9> <10>
      <10.95> <12> <14.4> <17.28> <20.74> <24.88>
      mathx10
      }{}
\DeclareSymbolFont{mathx}{U}{mathx}{m}{n}
\DeclareMathAccent{\widecheck}{0}{mathx}{"71}
\DeclareMathAccent{\wideparen}{0}{mathx}{"75}
\newcommand{\down}[1]{\check{#1}}
\newcommand{\up}[1]{{\hat{#1}}}
\newcommand{\UP}[1]{{\widehat{#1}}}
\newcommand{\set}[1]{\left\{#1\right\}}
\newcommand{\midd}{\,\middle\vert\,}
\newcommand{\eps}{\ensuremath{\varepsilon}}
\DeclareMathOperator{\poly}{poly}
\DeclareMathOperator{\aff}{aff}
\DeclareMathOperator*{\argmax}{arg\,max}
\DeclareMathOperator{\conv}{conv}
\newcommand{\DD}{\Delta}
\DeclareMathOperator{\dir}{dir}
\DeclareMathOperator{\inter}{int}
\DeclareMathOperator{\lspan}{span}
\DeclareMathOperator{\pos}{pos}
\DeclareMathOperator{\rank}{rank}
\DeclareMathOperator{\relint}{relint}
\DeclareMathOperator{\sd}{sd}
\DeclareMathOperator{\sgn}{sgn}
\DeclareMathOperator{\pred}{pred}
\DeclareMathOperator{\suc}{succ}
\newcommand{\paranthesis}[1]{\left(#1\right)}
\newcommand{\convv}[1]{\conv\paranthesis{#1}}
\newcommand{\poss}[1]{\pos\paranthesis{#1}}
\newcommand{\N}{\ensuremath{\mathbb{N}}}
\newcommand{\Q}{\ensuremath{\mathbb{Q}}}
\newcommand{\R}{\ensuremath{\mathbb{R}}}
\newcommand{\Rp}{\ensuremath{\mathbb{R}_+}}
\newcommand{\Z}{\ensuremath{\mathbb{Z}}}
\newcommand{\ve}[1]{{\boldsymbol #1}}
\newcommand{\0}{\ve{0}}
\newcommand{\1}{\ve{1}}
\newcommand{\e}{\ve{e}}
\newcommand{\bb}{\ve{b}}
\newcommand{\cc}{\ve{c}}
\newcommand{\mm}{\ve{\mu}}
\newcommand{\pp}{\ve{p}}
\newcommand{\qq}{\ve{q}}
\newcommand{\rr}{\ve{r}}
\newcommand{\er}{\up{\ve{r}}}
\newcommand{\vv}{\ve{v}}
\newcommand{\ww}{\ve{w}}
\newcommand{\xx}{\ve{x}}
\newcommand{\cclasss}[2]{\abbrev{#1}{\textsf{#2}}}
\newcommand{\cclass}[1]{\cclasss{#1}{#1}}
\newcommand{\prob}[1]{\textsc{#1}}
\newcommand{\defprob}[2]{\abbrev{#1}{\prob{#2}}}
\newcommand{\mc}[1]{\ensuremath{\mathcal{#1}}}
\DeclareMathOperator{\enOperator}{enc}
\newcommand{\en}[1]{{\enOperator\lt(#1\rt)}}
\DeclareMathOperator{\colOperator}{col}
\newcommand{\col}[1]{{\colOperator\lt(#1\rt)}}
\DeclareMathOperator{\suppOperator}{supp}
\newcommand{\supp}[1]{{\suppOperator\lt(#1\rt)}}
\DeclareMathOperator{\indOperator}{ind}
\newcommand{\ind}[1]{{\indOperator\lt(#1\rt)}}
\newcommand{\pc}[1]{\mathcal{#1}}
\newcommand{\QQ}{\pc{Q}}
\renewcommand{\SS}{\pc{S}}
\newcommand{\FS}{\Sigma}
\newcommand{\FF}{\pc{F}}
\newcommand{\PP}{\pc{P}}
\newcommand{\PC}{\pc{P}^\CC}
\newcommand{\MM}{\pc{M}}
\newcommand{\LR}{L^\Phi}
\newcommand{\wv}{\ve{w}}
\newcommand{\LC}{L^\CC}
\newcommand{\facet}[1]{\down{#1}}
\newcommand{\apx}{\approx}
\newcommand{\CA}{C^\approx}
\newcommand{\ba}{\bb^\approx}
\DeclareMathOperator{\PolyMath}{\mathbb{P}}
\newcommand{\Poly}[1]{\PolyMath\!\left[#1\right]}
\newcommand{\TwoRowVec}[2]{\begin{pmatrix}#1\\ #2\end{pmatrix}}
\newcommand{\Landau}[2]{\expandafter\newcommand\csname #1\endcsname[1]{#2\left(##1\right)}}
\title{The Rainbow at the End of the Line --- A \PPAD Formulation of
  the Colorful \Caratheodory Theorem with Applications}
\author{%
  Fr\'ed\'eric Meunier\thanks{%
    Universit\'e Paris Est, CERMICS (ENPC),
    \texttt{\{frederic.meunier,pauline.sarrabezolles\}@enpc.fr}.
    }
\and
  Wolfgang Mulzer\thanks{%
    Institut f\"ur Informatik, Freie Universit\"at Berlin,
    \{\texttt{mulzer,yannikstein}\}\texttt{@inf.fu-berlin.de}.  WM was
    supported in part by DFG Grants MU 3501/1 and MU 3501/2. YS was
    supported by the Deutsche Forschungsgemeinschaft within the
    research training group `Methods for Discrete Structures' (GRK
    1408) and by GIF grant 1161.
    }
\and
  Pauline Sarrabezolles\footnotemark[1]
\and
  Yannik Stein\footnotemark[2]
}
\date{}
\begin{document}

\maketitle
\thispagestyle{empty}

\begin{abstract}

Let $C_1,...,C_{d+1}$ be $d+1$ point sets in $\R^d$, each containing
the origin in its convex hull. A subset $C$ of $\bigcup_{i=1}^{d+1}
C_i$ is called a colorful choice (or rainbow) for $C_1, \dots,
C_{d+1}$, if it contains exactly one point from each set $C_i$. The
colorful \Caratheodory theorem states that there always exists a
colorful choice for $C_1,\dots,C_{d+1}$ that has the origin in its
convex hull. This theorem is very general and can be used to
prove several other existence theorems in high-dimensional discrete
geometry, such as the centerpoint theorem or Tverberg's theorem. The
colorful \Caratheodory problem (\CCP) is the computational problem of
finding such a colorful choice. Despite several efforts in the past,
the computational complexity of \CCP in arbitrary dimension is still
open.

We show that \CCP lies in the intersection of the complexity classes
\PPAD and \PLS. This makes it one of the few geometric problems in
\PPAD and \PLS that are not known to be solvable in polynomial time.
Moreover, it implies that the problem of computing centerpoints, 
computing Tverberg
partitions, and computing points with large simplicial depth is
contained in $\PPAD \cap \PLS$. This is the first
nontrivial upper bound on the complexity of these problems.

Finally,
we show that our \PPAD formulation leads to a polynomial-time
algorithm for a special case of \CCP in which we have only two color
classes $C_1$ and $C_2$ in $d$ dimensions, each with the origin in its
convex hull, and we would like to find a set with half the points from
each color class that contains the origin in its convex hull.
\end{abstract}

\section{Introduction}
Let $P\subset\R^d$ be a $d$-dimensional point set. We say $P$
\emph{embraces} a point $\pp \in \R^d$ or $P$ is
\emph{$\pp$-embracing} if $\pp \in \conv(P)$, and we
say $P$ \emph{ray-embraces} $\pp$ if $\pp \in \pos(C)$, where $\pos(P)
= \big\{ \sum_{\pp \in P} \alpha_{\pp} \pp
\mid \alpha_{\pp} \geq 0 \text{ for all $\pp \in P$} \big\}$.
\Caratheodory's
theorem~\cite[Theorem~1.2.3]{Matouvsek2002} states that if $P$
embraces the origin, then there exists a subset $P' \subseteq P$ of
size $d+1$ that also embraces the origin. This was generalized by
\Barany~\cite{Barany1982} to the \emph{colorful} setting: let
$C_1,\dots,C_{d+1} \subset \R^d$ be point sets that each embrace the
origin. We call a set $C = \{\cc_1, \dots, \cc_{d+1}\}$ 
a \emph{colorful choice} (or
\emph{rainbow}) for $C_1, \dots, C_{d+1}$,
if $\cc_i \in C_i$, for $i = 1, \dots, d+1$. The
\emph{colorful} \Caratheodory theorem states that there always exists
a $\0$-embracing colorful choice that contains the origin in its
convex hull. \Barany also gave the following generalization.

\begingroup
\newcommand{\scalefactor}{0.6}
\begin{figure}[htbp]
  \centering
  \subfloat[]{\label{fig:colcara:conv}\includegraphics[scale=\scalefactor]{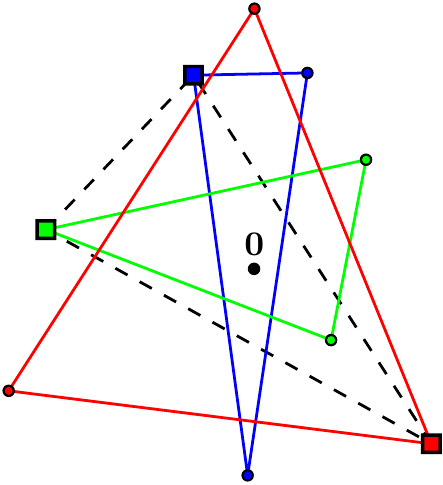}}
  \hspace{2cm}
  \subfloat[]{\label{fig:colcara:cone}\includegraphics[scale=\scalefactor,page=2]{colcara.pdf}}
  \caption{\protect\subref{fig:colcara:conv} Example of the convex version of
    Theorem~\ref{thm:colcara} in two dimensions.
    \protect\subref{fig:colcara:cone} Example of the cone version of
    Theorem~\ref{thm:colcara} in two dimensions.}
  \label{fig:colcara}
\end{figure}
\endgroup

\begin{theorem}[Colorful \Caratheodory Theorem, Cone Version
  \cite{Barany1982}]
\label{thm:colcara}
Let $C_1,\dots,C_d \subset \R^d$ be point sets and $\bb \in \R^d$ a
point with $\bb \in \pos(C_i)$, for $i = 1, \dots, d$. Then, there is a
colorful choice $C$ for $C_1, \dots, C_d$ that ray-embraces $\bb$. \qed
\end{theorem}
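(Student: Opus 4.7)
The plan is to adapt the classical distance-minimization argument of \Barany to the cone setting. First I would reduce to general position by perturbing each point infinitesimally; then every $d$-element subset of $\bigcup_i C_i$ that contains one point per color spans a simplicial cone of full dimension, and the conclusion for the original input follows from a standard limit argument since the number of colorful choices is finite.

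Having fixed general position, I would consider, among the finitely many colorful choices $C=\{\cc_1,\dots,\cc_d\}$ with $\cc_i\in C_i$, one that minimizes the Euclidean distance from $\bb$ to the closed convex cone $\pos(C)$. Let $\pp$ denote the unique nearest point in $\pos(C)$ to $\bb$. If $\pp=\bb$ we are done, so assume $\pp\neq\bb$ and put $\ve{n}=\bb-\pp$. From the nearest-point characterization applied to $\0\in\pos(C)$ and $2\pp\in\pos(C)$ I obtain $\langle \ve{n},\pp\rangle=0$, and therefore $\langle\ve{n},\xx\rangle\le 0$ for every $\xx\in\pos(C)$, while $\langle\ve{n},\bb\rangle=\|\ve{n}\|^{2}>0$. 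Thus the hyperplane through the origin orthogonal to $\ve{n}$ strictly separates $\bb$ from $\pos(C)$.

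Next, I would extract the color to swap. Because $\pp\neq\bb$, the point $\pp$ lies on a proper face of the simplicial cone $\pos(C)$; equivalently $\pp=\sum_{i\in I}\alpha_{i}\cc_{i}$ with $\alpha_{i}>0$ and some index $j\in[d]\setminus I$. Since $\bb\in\pos(C_{j})$, writing $\bb=\sum_{\cc\in C_{j}}\beta_{\cc}\cc$ with $\beta_{\cc}\ge 0$ and taking inner product with $\ve{n}$ yields $\sum_{\cc}\beta_{\cc}\langle\ve{n},\cc\rangle>0$, so at least one $\cc'_{j}\in C_{j}$ satisfies $\langle\ve{n},\cc'_{j}\rangle>0$.

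Finally I would perform the exchange: form $C'=(C\setminus\{\cc_{j}\})\cup\{\cc'_{j}\}$, which is again a colorful choice. For every sufficiently small $t>0$ the point $\pp+t\cc'_{j}$ lies in $\pos(C')$, and
\[
\|\bb-\pp-t\cc'_{j}\|^{2}=\|\bb-\pp\|^{2}-2t\langle\ve{n},\cc'_{j}\rangle+t^{2}\|\cc'_{j}\|^{2}<\|\bb-\pp\|^{2},
\]
contradicting the minimality of $C$. The main subtlety I expect is the degenerate case where $\pos(C)$ is not full-dimensional or $\pp$ admits several representations, which is precisely what the general-position perturbation is meant to hide; verifying that the perturbation does not destroy the hypothesis $\bb\in\pos(C_{i})$ (or working directly with the unique minimal face of $\pos(C)$ containing $\pp$) is the one bookkeeping step that requires care.
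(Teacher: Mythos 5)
Your argument is correct, and it is essentially \Barany's original potential-function proof transported to the cone setting; all the individual steps check out (the identity $\langle\ve{n},\pp\rangle=0$ from testing the nearest-point inequality at $\0$ and $2\pp$, the fact that $\pp$ lies on a proper face so that $\pp\in\pos(C\setminus\{\cc_j\})\subseteq\pos(C')$ after the swap, and the strict decrease of the squared distance for small $t$). However, this is \emph{not} the route the paper takes for this theorem: the paper cites the statement to \Barany and instead devotes Sections~\ref{sec:ppad} and~\ref{sec:ccpppad} to a genuinely different, new proof that builds a parameterized family of linear programs $\LC_\mm$, extracts from their optimal faces a polytopal subdivision $\QQ_\DD$ of the standard simplex together with a Sperner labeling, and then invokes (a constructive version of) Sperner's lemma, Theorem~\ref{thm:sperner}, to produce a fully-labeled simplex encoding the desired colorful basis. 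The trade-off is exactly the one the paper is after: your distance-minimization argument is shorter and is what underlies the \PLS{} membership in Section~\ref{sec:app:pls} (the cost there is precisely $\min_{\xx\in\pos(C)}\|\xx-\bb\|_2$ and the neighborhood is your single-color swap), but it only yields a local-search algorithm with no bound on the number of improvement steps; the Sperner-based proof is what gives the \PPAD{} formulation and the polynomial-time two-color special case. One point to tighten in your write-up: the reduction to general position and the ``standard limit argument'' are not entirely free, since the perturbation must preserve $\bb\in\pos(C_i)$ and the limit requires a separating-hyperplane/continuity argument rather than a naive passage to the limit of conic coefficients; the paper spells this out in Section~\ref{sec:app:eqccp}, in particular Lemma~\ref{stm:p3}, and you would need the analogous care.
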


The classic (convex) version of the
colorful \Caratheodory theorem follows easily from
Theorem~\ref{thm:colcara}: lift the sets $C_1,\dots,C_{d+1}
\subset \R^d$ to $\R^{d+1}$ by appending a $1$ to each element, and set
$\bb=(0,\dots,0,1)^T$. See Figure~\ref{fig:colcara} for an example of both
versions in two dimensions.

Even though the cone version of the colorful \Caratheodory theorem
guarantees the existence of a colorful choice that ray-embraces the
point $\bb$, it is far from clear how to find it efficiently. We call
this computational problem the \emph{colorful \Caratheodory problem}
(\CCP).  To this day, settling the complexity of \CCP remains an
intriguing open problem, with a potentially wide range of
consequences.  We can use linear programming to check in polynomial
time whether a 
given colorful choice ray-embraces a point, so \CCP
lies in \emph{total function \NP}
(\TFNP)~\cite{Papadimitriou1994}, the complexity class of total 
search problems that
can be solved in non-deterministic polynomial time. This implies that
\CCP cannot be \NP-hard unless
$\NP=\coNP$~\cite{JohnsonPaYa1988}.
However, the complexity landscape
inside \TFNP is far from understood, and there exists a rich body of
work that studies subclasses of \TFNP meant to capture
different aspects of mathematical existence proofs,
such as the pigeonhole principle (\PPP), potential function
arguments (\PLS,
\CLS), or various parity arguments (\PPAD, \PPA,
\PPADS)~\cite{DaskalakisPa11,JohnsonPaYa1988,Papadimitriou1994}.

While the
complexity of \CCP remains elusive, related problems are known
to be complete for \PPAD or for
\PLS. For example, given $d+1$ point sets
$C_1,\dots,C_{d+1} \subset \Q^d$ consisting of two points each and a
colorful choice $C$ for $C_1, \dots, C_{d+1}$ that embraces the
origin, it is \PPAD-complete to find another
colorful choice that embraces the origin~\cite{MeunierSa2014}.
Furthermore, given $d+1$ point sets $C_1,\dots,C_{d+1} \subset \Q^d$,
we call a colorful
choice $C$ for $C_1, \dots, C_{d+1}$
\emph{locally optimal} if the $L_1$-distance of $\conv(C)$ to the
origin cannot be decreased by swapping a point of color $i$ in $C$
with another
point from the same color. Then, computing a locally optimal
colorful choice is \PLS-complete~\cite{MulzerSt2015}.

Understanding the complexity of \CCP becomes even more interesting in the
light of the fact that the colorful \Caratheodory theorem plays a
crucial role in proving several other prominent theorems in convex
geometry, such as Tverberg's theorem~\cite{Sarkaria1992} (and hence
the centerpoint theorem~\cite{Rado1946}) and the first
selection lemma~\cite{Matouvsek2002,Barany1982}. In fact, these proofs
can be
interpreted as polynomial time reductions from the respective
computational
problems, \Tverberg, \Centerpoint, and \SimCenter,
to \CCP. See Section~\ref{sec:app:reductions} for more details.

Several approximation algorithms have been proposed for \CCP. 
\Barany and Onn~\cite{BaranyOn1997} describe an
exact algorithm that can be stopped early to find a colorful choice
whose convex hull is ``close'' to the origin.  More precisely, let 
$\eps, \rho > 0$ be
parameters.  We call a set \emph{$\eps$-close} if its convex hull has
$L_2$-distance at most $\eps$ to the origin.  Given sets
$C_1,\dots,C_{d+1}\subset \R^d$ such that (i) each $C_i$ contains a ball
of radius $\rho$ centered at the origin in its convex hull; and (ii) all
points $\pp \in \bigcup_{i = 1}^{d+1} C_i$ fulfill $1\leq \|\pp\| \leq 2$
and can be encoded using $L$ bits, one can find an
$\eps$-close colorful choice in time
$O(\text{poly}(L, \log(1/\eps),1/\rho))$ on the \textsc{Word-Ram} with
logarithmic costs. For $\eps=0$, the algorithm actually finds a
solution to \CCP in finite time, and, more interestingly, if $1/\rho =
O(\text{poly}(L))$, the algorithm finds a solution to \CCP in
polynomial time.  In the same spirit, Barman~\cite{barman2015} showed
that if the points have constant norm, an $\eps$-close colorful choice
can be found by solving $d^{O(1/\eps^2)}$ convex programs.
Mulzer and Stein~\cite{MulzerSt2015}
considered a different notion of approximation: a set is called
\emph{$m$-colorful} if it contains at most $m$ points from each $C_i$.
They showed that for all fixed $\eps>0$, an $\lceil \eps
d\rceil$-colorful choice that contains the origin in its convex hull
can be found in polynomial time.

\paragraph{Our Results.}
We provide a new upper bound on the complexity of \CCP by showing that
the problem is contained in  $\PPAD \cap \PLS$, implying the first
nontrivial upper bound on the computational complexity of computing
centerpoints or finding Tverberg partitions.

The traditional proofs of the colorful \Caratheodory theorem all
proceed through a potential function argument. Thus, it may not be
surprising that \CCP lies in \PLS, even though a detailed proof 
that can deal with degenerate instances requires
some care (see Section~\ref{sec:app:pls}). 
On the other hand, showing that \CCP lies in \PPAD calls for
a completely new approach. Even though there are proofs of the
colorful \Caratheodory theorem that use topological methods usually
associated with \PPAD (such as certain variants of Sperner's 
lemma)~\cite{holmsen2013,kalai2005}, these proofs 
involve existential
arguments that have no clear algorithmic interpretation. 
Thus,  we present a new proof of the
colorful \Caratheodory theorem that proceeds similarly as the usual
proof for Sperner's lemma~\cite{Cohen1967}.  This new proof has
an algorithmic interpretation that leads to a
formulation of \CCP as a \PPAD-problem.

Finally, we consider the special case of \CCP that we are given 
two color classes 
$C_1, C_2 \subset \R^d$ of $d$ points each and a vector $\bb \in \R^d$
such that both $C_1$ and $C_2$ ray-embrace $\bb$.
We describe an algorithm that solves the following problem in
polynomial time: given $k \in [d]$, find a set $C \subseteq C_1 \cup
C_2$ with $|C \cap C_1| = k$ and $|C \cap C_2| = d-k$ such that
$C$ ray-embraces $\bb$. Note that this is a special
case of \CCP since we can just take $k$ copies of $C_1$ and
$d-k$ copies of $C_2$ in a problem instance for \CCP.

\section{Preliminaries}
\label{sec:prelim}

\paragraph{The Complexity Class \PPAD.}
The complexity class \emph{polynomial parity argument in a directed
graph} (\PPAD)~\cite{Papadimitriou1994} is a subclass of \TFNP that contains
search problems that can be modeled as follows: let $G=(V,E)$ be
a directed graph in which each node has indegree and outdegree at most one. That
is, $G$ consists of paths and cycles.
We call a node $v \in V$ a \emph{source} if $v$ has indegree $0$ and we call $v$ a \emph{sink} if it has outdegree
$0$. Given a source in $G$, we want to find another source
or sink. By a parity argument, there is an even number of sources and
sinks in $G$ and hence another source or sink must exist.
However, finding this sink or source is nontrivial since $G$ is defined
implicitly and the total number of nodes may be exponential.

More formally, a problem in \PPAD is a relation $\mc{R}$ between
a set $\mc{I} \subseteq \{0,1\}^\star$ of \emph{problem instances}
and a set $\mc{S} \subset \{0,1\}^\star$ of \emph{candidate solutions}. Assume
further the following.
\begin{itemize}
\item The set $\mc{I}$ is polynomial-time verifiable.
Furthermore, there is an algorithm that on input $I \in \mc{I}$ and $s
\in \mc{S}$ decides in time $\poly(|I|)$ whether $s$ is a \emph{valid}
candidate solution for $I$. We denote with $\mc{S}_I \subseteq
\mc{S}$ the set of all valid candidate solutions for a fixed instance
$I$.
\item There exist two polynomial-time computable functions
$\pred$ and $\suc$ that define the edge set of $G$ as follows:
on input $I \in \mc{I}$ and $s \in \mc{S}_I$,
$\pred$ and $\suc$ return a valid candidate solution from $\mc{S}_I$
or $\bot$. Here, $\bot$ means
that $v$ has no predecessor/successor.
\item There is a polynomial-time algorithm that returns for each
instance $I$ a valid candidate solution $s \in \mc{S}_I$ with $\pred(s)
= \bot$. We call $s$ the \emph{standard source}.
\end{itemize}
Now, each instance $I \in \mc{I}$ defines a graph $G_I = (V,E)$ as follows. The
set of nodes $V$ is the set of all valid candidate solutions $\mc{S}_I$ and
there is a directed edge from $u$ to $v$ if and
only if $v = \suc(u)$ and $u = \pred(v)$. Clearly, each node in $G_I$ has
indegree and outdegree at most one. The relation $\mc{R}$ consists of all tuples
$(I,s)$ such that $s$ is a sink or source other than the standard source in $G_I$.

The definition of a \PPAD-problem suggests a simple algorithm, called the
\emph{standard algorithm}: start at the standard source and follow the path
until a sink is reached. This algorithm always finds a solution but the length
of the traversed path may be exponential in the size of the input instance.
\paragraph{Polyhedral Complexes and Subdivisions.}
We call a finite set of polyhedra $\mc{P}$ in $\R^d$ a \emph{polyhedral
complex} if and only if
(i) for all polyhedra $f \in \mc{P}$, all faces of $f$
are contained in $\mc{P}$; and (ii)
for all $f,f' \in \mc{P}$, the intersection $f \cap f'$ is a face of both.
Note that the first requirement implies that $\emptyset \in
\mc{P}$. Furthermore, we say $\mc{P}$ has \emph{dimension} $k$ if there 
exists some polyhedron $f \in \mc{P}$ with $\dim f = k$ and all other 
polyhedra in $\mc{P}$
have dimension at most $k$. We call $\mc{P}$ a \emph{polytopal complex}
if it is a polyhedral complex and all elements are polytopes. Similarly, 
we say
$\mc{P}$ is a \emph{simplicial complex} if it is a polytopal complex whose
elements are simplices. Finally, we say $\mc{P}$ \emph{subdivides} a set $Q
\subseteq \R^d$ if $\bigcup_{f \in \mc{P}} f = Q$. For more details,
see~\cite[Section~5.1]{Ziegler1995}.

\paragraph{Linear Programming.}
Let $A \in \R^{d \times n}$ be a matrix and $F$ a set of column
vectors from $A$. Then, we denote with $\ind{F} \subseteq [n]$ the set
of column indices in $F$ and for an index set $I \subseteq [n]$, we denote
with $A_I$ the submatrix of $A$ that consists of the columns indexed
by 
$I$. Similarly, for a vector $\cc \in \R^n$ and an index set 
$I \subset [n]$, we
denote with $\cc_I$ the subvector of $\cc$ with the coordinates
indexed
by $I$. Now, let $L'$ denote a system of linear equations
\begin{align*}
  L': A\xx =\bb,
\end{align*}
where $A \in \Q^{d \times n}$, $\bb \in \Q^d$ 
and $\rank(A) = k$.
By multiplying with the least common denominator,
we may assume in the following that $A\in \Z^{d \times n}$ and 
$\bb \in \Z^d$.
We call a set of $k$ linearly independent
column vectors $B$ of $A$ a \emph{basis} and we say that $A$ is
\emph{non-degenerate} if $k = d$ and for all bases $B$ of $A$, no 
coordinate of the corresponding solution $\xx_{\ind{B}}$ is $0$.
In particular, if $L'$ is non-degenerate, then $\bb$ is not contained 
in the linear span of any set of $d' < d$ column vectors from $A$ 
and hence if $d > n$, the linear system $L'$ has no solution. In the 
following, we assume that $L'$ is non-degenerate and that
$d\leq n$.

We denote with $L$ the linear program obtained by extending the 
linear system $L'$ with the constraints $\xx \geq \0$ and with a 
cost vector $\cc \in \Q^n$:
\[
  L:  \min \cc^T \xx  \text{ subject to } 
                 A\xx =\bb, \, \xx \geq \0.
\]
We say a set of column vectors $B$ is a \emph{basis} 
for $L$ if $B$ is a basis for $L'$. Let $\xx \in \R^n$ 
be the corresponding solution, i.e., let $\xx$ be
such that $A \xx = \bb$ and $x_i = 0$ for $i \in [n] \setminus
\ind{B}$. We call $\xx$ a \emph{basic feasible solution}, and
$B$ a \emph{feasible basis}, if $\xx \geq
\0$. 
Furthermore, we say $L$ is \emph{non-degenerate} if for
all feasible bases $B$, the corresponding basic feasible
solutions have strictly positive values in the coordinates
of $B$.
Now, let $R = [n] \setminus \ind{B}$ be the column indices 
not in $B$. The \emph{reduced cost vector} $\rr_{B,\cc} \in
\Q^{n-d}$ with respect to $B$ and $\cc$ is then defined as
\begin{equation}
\rr_{B,\cc} = \cc_{R} - \lt(A^{-1}_\ind{B} A_R\rt)^T \cc_\ind{B}.
\label{eq:redcosts}
\end{equation}
It is well-known that $B$ is optimal for $\cc$ if and only if
$\rr_{B,\cc}$ is non-negative in all
coordinates~\cite{MatousekGa2007}. For technical reasons, 
we consider in the
following the \emph{extended reduced cost vector} $\er_{B,\cc} \in \Q^n$
that has a $0$ in dimensions $\ind{B}$ and otherwise
equals $\rr_{B,\cc}$ to align the coordinates of the reduced cost vector with the column indices in $A$.
More formally, we set
\[
  \left(\er_{B,\cc} \right)_j = \begin{cases}
    0 & \text{if } j \in \ind{B}\text{, and} \\
    (\rr_{B,\cc})_{j'} & \text{otherwise,}
  \end{cases}
\]
where $j'$ is the rank of $j$ in $R$, that is, $(\rr_{B,\cc})_{j'}$ is the
coordinate of $\rr_{B,\cc}$ that corresponds to the $j'$th non-basis column with
column index $j$ in $A$.

Geometrically, the feasible solutions for the linear program $L$ define 
an $(n-d)$-dimensional polyhedron $\PP$ in $\R^n$.
Since $L$ is non-degenerate, $\PP$ is simple.
Let $f \subseteq \PP$ be a $k$-face of $\PP$. Then, $f$ has an
associated set $\supp{f} \subseteq [n]$ of $k$ column indices
such that $f$ consists precisely of the feasible solutions for
the linear program
$A_\supp{f} \xx' = \bb,\, \xx' \geq \0$, lifted to
$\R^n$ by setting the coordinates with indices not in $\supp{f}$
to $0$.
We call $\supp{f}$ the \emph{support} of $f$ and we say the columns in
$A_\supp{f}$ \emph{define} $f$.
Furthermore, 
for all subfaces $\down{f} \subseteq f$, we have $\supp{\down{f}} \subseteq
\supp{f}$ and in particular, all bases that define vertices of $f$ are
$d$-subsets of columns from $A_\supp{f}$.

Moreover, we say a nonempty face $f \subseteq \PP$ is \emph{optimal} 
for a cost vector $\cc$ if all points in $f$ are optimal for $\cc$. We 
can express this
condition using the reduced cost vector. Let $B$ be a basis for a vertex in
$f$. Then $f$ is optimal for $\cc$ if and only if
\[
    (\er_{B,\cc})_j  = 0 \text{~for $j \in \supp{f}$, and }
    (\er_{B,\cc})_j  \leq 0 \text{~otherwise.}
\]

\section{Overview of the PPAD-Formulation}
\label{sec:ppad}

We give a new constructive proof of
the cone version of the colorful \Caratheodory theorem based on 
Sperner's lemma. Using this, we can obtain a \PPAD-formulation
of \CCP, by adapting Papadimitriou's formulation of
Sperner's lemma as a \PPAD problem. 

Recall the statement of Sperner's lemma: let $\mc{S}$ be a simplicial
subdivision of the $d$-dimensional standard simplex 
$\DD^{d} = \conv(\e_1, \dots, \e_{d+1}) \subset
\R^{d+1}$, where $\e_i$ is the $i$th canonical basis vector. 
We call a function $\lambda$ that assigns to
each vertex in $\mc{S}$ a label from $[d+1]$ a \emph{Sperner labeling}
if for each vertex $\vv$ of $\mc{S}$ contained in  
$\conv(\e_{i_1}, \dots, \e_{i_k})$, we
have $\lambda(\vv) \in \{i_1, \dots, i_k\}$, for all 
$\{i_1, \dots, i_k\} \subseteq [d+1], k \in [d+1]$.
For a simplex $\sigma \in \mc{S}$, we set $\lambda(\sigma)$
to be the set of labels of the vertices of $\sigma$.
We call $\sigma$ \emph{fully-labeled} if $\lambda(\sigma) = [d+1]$.

\begin{figure}[htbp]
  \begin{center}
    \includegraphics[scale=0.7]{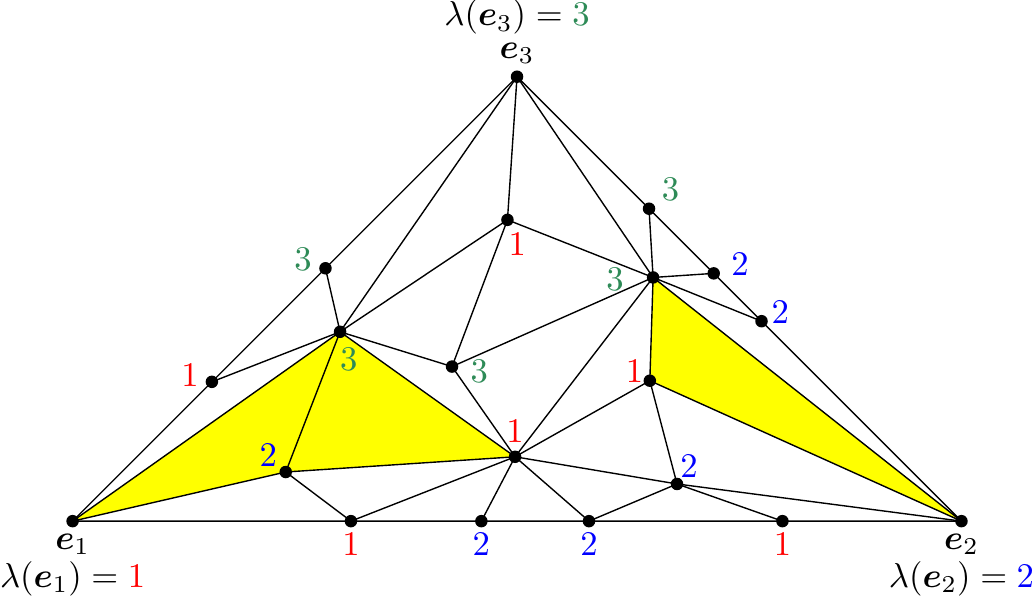}
  \end{center}
  \caption{An example of Sperner's lemma in two dimensions. The fully-labeled
  simplices are marked yellow.}
  \label{fig:sperner}
\end{figure}

\begin{theorem}[Strong Sperner's Lemma~\cite{Cohen1967}]
  \label{thm:sperner}
  The number of fully-labeled simplices is odd.
\end{theorem}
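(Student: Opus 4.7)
The plan is to prove Strong Sperner's Lemma by induction on the dimension $d$, using the classical door-counting (parity) argument. For the inductive step, I build an auxiliary graph $G$ whose nodes are the $d$-simplices of $\mc{S}$ together with a single outside node representing the complement of $\DD^d$; two nodes are joined by an edge whenever they share a $(d-1)$-face whose vertex labels are exactly $\{1, \ldots, d\}$ (an almost fully-labeled face that is missing only the label $d+1$). Since the number of odd-degree nodes in any finite graph is even, it suffices to show that the fully-labeled simplices correspond to exactly the simplex-nodes of odd degree and that the outside node has odd degree.

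The key degree calculation distinguishes three types of $d$-simplex $\sigma \in \mc{S}$. If $\lambda(\sigma) = [d+1]$, exactly one vertex of $\sigma$ carries the label $d+1$, so dropping it yields a unique $(d-1)$-face with label set $\{1, \ldots, d\}$, and $\sigma$ has degree $1$. If $\lambda(\sigma) \supseteq \{1, \ldots, d\}$ but $d+1 \notin \lambda(\sigma)$, then one label in $\{1, \ldots, d\}$ is used twice and dropping either of the two vertices carrying it yields a $\{1, \ldots, d\}$-labeled $(d-1)$-face, giving degree $2$. In all remaining cases some label in $\{1, \ldots, d\}$ is missing, so no such $(d-1)$-face of $\sigma$ exists and the degree is $0$. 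Thus the fully-labeled simplices are precisely the simplex-nodes of odd degree.

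Next I determine the degree of the outside node, which equals the number of $(d-1)$-simplices of $\mc{S}$ with label set $\{1, \ldots, d\}$ lying on the boundary of $\DD^d$. By the Sperner labeling condition, vertices on the facet opposite $\e_i$ carry labels in $[d+1] \setminus \{i\}$, so a boundary $(d-1)$-face can have label set $\{1, \ldots, d\}$ only if it lies on the facet $\conv(\e_1, \ldots, \e_d)$ opposite $\e_{d+1}$. The restriction of $\mc{S}$ and $\lambda$ to this facet is itself a Sperner-labeled simplicial subdivision of a $(d-1)$-simplex, so by the inductive hypothesis the number of its fully-labeled $(d-1)$-faces is odd. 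Hence the outside node has odd degree, and by the handshake lemma an odd number of simplex-nodes must also have odd degree.

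For the base case $d = 0$, the subdivision consists of a single vertex that must carry label $1$, so exactly one fully-labeled simplex exists. The main subtlety I anticipate lies in the edge count of $G$: I need every $\{1, \ldots, d\}$-labeled $(d-1)$-face of $\sigma$ to contribute exactly once to the degree of $\sigma$, which uses that $\mc{S}$ is a simplicial complex subdividing $\DD^d$, so every interior $(d-1)$-face belongs to exactly two $d$-simplices while boundary $(d-1)$-faces contribute to the outside node. A secondary point to handle with care is the second case above: one must check that the two $\{1, \ldots, d\}$-labeled $(d-1)$-faces of $\sigma$ really are distinct, which holds because the two vertices carrying the repeated label are distinct vertices of $\sigma$ and opposing distinct vertices yields distinct facets.
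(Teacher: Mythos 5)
Your proof is correct. The paper itself does not prove Theorem~\ref{thm:sperner} (it cites Cohen), but the proof it ``mimics'' when building its \PPAD graph in Section~\ref{sec:ppad:graph} is the path-following version of exactly the argument you give: there, the nodes are the simplices of \emph{all} dimensions $k$ whose label sets contain $[k-1]$, chained into a single graph of maximum degree $2$, and the parity statement falls out of counting endpoints of paths. Your packaging differs in that you work one dimension at a time: you form, for the top dimension only, the multigraph on $d$-simplices plus an outside node, apply the handshake lemma, and push the count of boundary doors down to the facet $\conv(\e_1,\dots,\e_d)$ by induction. The two are the same combinatorial argument; what the paper's version buys is an explicit low-degree graph whose edges can be followed algorithmically (which is what a \PPAD formulation needs), whereas your version is the cleaner way to state the pure existence/parity result. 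Your degree case analysis, the observation that a $\{1,\dots,d\}$-labeled boundary facet can only lie opposite $\e_{d+1}$, and the two subtleties you flag (each interior $(d-1)$-face lies in exactly two $d$-simplices, and the two facets in the degree-$2$ case are distinct) are all handled correctly.
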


Now suppose we are given an instance $I = (C_1, \dots, C_d, \bb)$ of
(the cone version of)
\CCP, where $\bb \in \R^d$, $\bb \neq \0$, and each $C_i \subset
\Q^d$, $i  \in [d]$, ray-embraces $\bb$. In Section~\ref{sec:app:eqccp}, we
show that we can assume w.l.o.g.\ that each set $C_i$ has size $d$.
We now describe how to define a simplicial complex $\mc{S}$
and a Sperner labeling $\lambda$ for $I$ such that a fully labeled
simplex will encode a colorful choice that contains the 
vector $\bb$ in its positive span.

In the following, we call $\R^d$ the \emph{parameter space} and a
vector $\mm \in \R^d$ a \emph{parameter vector}.
We define a family of linear programs $\{\LC_\mm \mid \mm \in
\R^d\}$, where each linear program $\LC_\mm$ has the same constraints
and differs only in its cost vector $\cc_\mm$. The cost vector
$\cc_\mm$ is defined by a linear function in $\mm \in \R^d$.
Let $A = ( C_1\; C_2 \;\dots\; C_d ) \in \Q^{d \times d^2}$ be
the matrix that has the vectors from $C_1$ in the first $d$ columns,
the vectors from $C_2$ in the second $d$ columns, and so on.
Then, we denote with $\LC_{\mm}$ the linear program
\begin{equation}\label{eq:lp}
  \LC_{\mm}:
      \min \cc^T_{\mm} \xx,\,
      \text{ subject to } 
                      A \xx  = \bb, 
                      \xx  \geq \0,
\end{equation}
and we denote with $\PC \subset \R^{d^2}$ the polyhedron that is 
defined by the linear system $\LC$.
We can think of the $i$th coordinate of the parameter vector 
$\mm \in \R^d$ as the weight of color $i$, i.e., the costs of columns 
from $A$ with color $i$ decrease if $(\mm)_i$ increases. 
To each face $f$ of $P$, we assign the set of parameter vectors
$\Phi(f) \subset \R^d$ such that for all $\mm \in \Phi(f)$, the face
$f$ is optimal for the linear program $\LC_\mm$ that has $\LC$ as
constraints and $\cc_\mm$ as cost vector. We call $\Phi(f)$ the
\emph{parameter region} of $f$.  The cost vector is designed to
control the colors that appear in the support of optimal faces for a
specific subset of parameter vectors.  Let $\MM = \left\{ \mm \in
\R^d \midd \mm \geq \0,\, \|\mm\|_\infty =1 \right\}$ denote the
faces of the unit cube in which at least one coordinate is set to
$1$. Then, no face $f$ that is assigned to a parameter vector $\mm
\in \MM$ with $(\mm)_{i^\times}=0$ has a column from $A$ with color
$i^\times$ in its defining set $A_\supp{f}$. This property will become
crucial when we define a Sperner labeling later on.
 Now, we define a  polyhedral subcomplex $\FF$ of $\PC$ that consists 
 of all faces $f$ of $\PC$
such that $\Phi(f) \cap \MM \neq \emptyset$.  Furthermore, 
the
intersections of the parameter regions with $\MM$ induce a polytopal
complex $\QQ$ that is in a dual relationship to $\FF$. 
By performing a central projection with the origin as
center of $\QQ$ onto the standard simplex $\DD^{d-1}$, we obtain a
polytopal subdivision $\QQ_\DD$ of $\DD^{d-1}$.
To get the desired \emph{simplicial} subdivision of
$\DD^{d-1}$, we take the 
barycentric subdivision $\sd \QQ_\DD$
of $\QQ_\DD$.

We construct a Sperner labeling $\lambda$ for $\sd \QQ_\DD$
as follows:
let $\vv$ be a vertex in $\sd \QQ_\DD$, and let $f$
be the face of $\FF$ that corresponds to $\vv$. Then,
we set $\lambda(\vv) = i$ if the
$i$th color appears most often in the support of $f$.
The color controlling property of the
cost function $c_\mm$ then implies that $\lambda$ is a Sperner labeling. 
Furthermore, 
using the properties of the barycentric subdivision and the 
correspondence between $\QQ_\DD$ and $\FF$, we can 
show that one vertex of a fully-labeled
$(d-1)$-simplex in $\sd \QQ_\DD$ encodes a colorful feasible basis
of the \CCP instance $I$. This
concludes a new constructive proof of the colorful \Caratheodory
theorem using Sperner's lemma.

To show
that \CCP is in \PPAD however, we need to be able to traverse $\sd
\QQ_\DD$ efficiently. For this, we introduce a combinatorial encoding
of the simplices in $\QQ_\DD$ that represents neighboring simplices
in a similar manner.  
Furthermore, we describe how 
to generalize the orientation used in the \PPAD 
formulation of 2D-Sperner~\cite{Papadimitriou1994} to our
setting.
This finally shows that \CCP is in \PPAD.

To ensure that the complexes that appear in our algorithms are
sufficiently generic, we prove several perturbation lemmas
that give a deterministic way of achieving this.
Our \PPAD-formulation also shows that the special case of \CCP
involving two colors can be solved in polynomial time. Indeed,
we will see that in this case the polytopal complex $\QQ_\DD$
can be made $1$-dimensional. Then,
binary search can be used to find a fully-labeled simplex in
$\QQ_\DD$. In order to prove that
the binary search terminates after a polynomial number of steps,
we use methods similar to our perturbation techniques to
obtain a bound on the length of the $1$-dimensional fully-labeled simplex.

\section{The Colorful \Caratheodory Problem is in PPAD}
\label{sec:ccpppad}

As before, let $I=(C_1,\dots,C_d,\bb)$ denote an instance for the cone version of
\CCP. Our formulation of \CCP as a \PPAD-problem requires $I$ to be in
general position. In particular, we assume that (P1) all color classes
$C_i \subset \Z^d$ consist of $d$ points and all points have integer
coordinates. Furthermore, we assume that (P2) there exist no subset $P
\subset \bigcup_{i=1}^d C_i$ of size $d-1$ that ray-embraces $\bb$. We
show in Section~\ref{sec:app:eqccp} how to ensure the
properties by an explicit deterministic perturbation of polynomial
bit-complexity.

\subsection{The Polytopal Complex}
\label{sec:ppad:para}
 Let $N = d!m^d$, where
$m$ is the largest absolute value that appears in $A$ and $\bb$
(see~Lemma~\ref{lem:lpsol}).
Then, we define $\cc_{\mm} \in \R^{d^2}$ as
\begin{equation}\label{eq:costs}
  (\cc_{\mm})_j = 1 + \left(1 - (\mm)_i\right) d N^2 + \eps^j,
\end{equation}
where $j \in [d^2]$,
$i$ is the color of the $j$th column in $A$, and $0 < \eps  \leq N^{-3}$ 
is a suitable perturbation that ensures non-degeneracy of the
reduced costs (see~\cite{chvatal1983}).
As stated in the overview, the cost function
controls the colors in the support of the optimal faces for 
parameter vectors in $\MM$. The proof of the following lemma
can be found in Section~\ref{sec:app:polycompl}.

\begin{lemma}\label{lem:colors}
Let $i^\times \in [d]$ be a color and let $\mm \in \MM$ be 
a parameter vector
with $\mm_{i^\times} = 0$. Furthermore, let $B^\star$ be 
an optimal feasible
basis for $\LC_\mm$. Then, $B^\star \cap C_{i^\times} = \emptyset$.
\end{lemma}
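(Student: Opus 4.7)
I would prove this by contradiction, exploiting the fact that the coefficient $dN^2$ in $\cc_{\mm}$ is chosen so large that no optimal basis can afford to include a column whose color carries weight zero. The comparison is against a monochromatic reference basis. Since $\mm \in \MM$ and $\|\mm\|_\infty = 1$, some color $i^\star \in [d]$ has $(\mm)_{i^\star} = 1$. By the hypothesis of the \CCP instance, $\bb \in \pos(C_{i^\star})$, so there is $\yy \geq \0$ with $A_{C_{i^\star}} \yy = \bb$. Combined with property (P1), which gives $|C_{i^\star}| = d$, and property (P2), which forbids any $(d-1)$-subset from ray-embracing $\bb$, this forces $C_{i^\star}$ to be a feasible basis with $\yy > \0$.

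For the upper bound, note that every $j \in C_{i^\star}$ satisfies $(\mm)_{c(j)} = 1$, so the penalty term $dN^2 \sum_j y_j (1-(\mm)_{c(j)})$ vanishes. Using $y_j \leq N$ from Lemma~\ref{lem:lpsol} and the assumption $\eps \leq N^{-3}$, one obtains
\[
  \cc_{\mm}^T \yy \;=\; \sum_{j \in C_{i^\star}} y_j (1 + \eps^j) \;\leq\; dN + dN\eps \;\leq\; dN + d/N^2.
\]

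For the lower bound, suppose $B^\star$ contains a column $j^\times \in C_{i^\times}$, and let $\xx$ be the corresponding basic feasible solution. Non-degeneracy of $L$ guarantees $x_j > 0$ for every basic index; by Cramer's rule together with the Leibniz bound $|\det A_{B^\star}| \leq d!\,m^d = N$, each basic $x_j$ is at least $1/N$. Therefore
\[
  \cc_{\mm}^T \xx \;\geq\; \sum_{j \in B^\star} x_j \;+\; dN^2 \, x_{j^\times}\bigl(1 - (\mm)_{i^\times}\bigr) \;\geq\; d/N + dN.
\]
The main subtlety is that the dominant contributions on both sides equal $dN$, so a strict inequality must come from the lower-order terms. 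Comparing $d/N$ (from $\sum_j x_j$) against $d/N^2$ (from the $\eps^j$-perturbation on the reference side), which is strict because $N \geq 2$ for any nontrivial instance, yields $\cc_{\mm}^T \xx > \cc_{\mm}^T \yy$, contradicting the optimality of $B^\star$. Hence $B^\star \cap C_{i^\times} = \emptyset$, as required.
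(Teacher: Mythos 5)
Your proof is correct and follows essentially the same route as the paper: a proof by contradiction comparing the cost of any basis containing a color-$i^\times$ column (whose penalty term $dN^2 x_{j^\times} \geq dN$ dominates) against the monochromatic reference basis $C_{i^\star}$ for a color with $(\mm)_{i^\star}=1$, using the bounds $1/N \leq x_j \leq N$ from Lemma~\ref{lem:lpsol}. The only (harmless) difference is in the lower-order bookkeeping: the paper keeps just the single term $(1+dN^2)x_{j^\times} \geq dN + 1/N$ and thus needs $d < N$, whereas you sum over all basic variables to get $dN + d/N$ and need only $N>1$.
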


We denote for a face 
$f \subseteq \PC$, $f \neq \emptyset$,
with
$
  \Phi(f) = \big\{\mm \in \R^d \mid \text{$f$ is optimal for
      $L_{\mm}$}\big\}
$
the set of all parameter vectors for which $f$ is optimal. We call 
this the
\emph{parameter region} for $f$. Using the reduced cost vector, we can
express $\Phi(f)$ as solution space to the
following linear system, where $B$ is a feasible basis of some vertex of
$f$ and the $d$ coordinates of the parameter vector $\mm$ are the variables:
\begin{equation}
  \LR_{B,f}:
 (\er_{B,\cc_\mm})_j  = 0 \text{\ for } j \in \supp{f} \setminus
 \ind{B} \text{ and }
(\er_{B,\cc_\mm})_j  \leq 0 \text{\ for }\big[d^2\big] \setminus \supp{f}.
\end{equation}
Then, we define $\FF$ as the set of all faces that are optimal for some
parameter vector in $\MM$:
\[
  \FF = \left\{ f \midd \text{$f$ is a face of $\PC$},\, 
  \Phi(f) \cap \MM \neq
  \emptyset \right\}.
\]
By definition, $\FF \cup \{\emptyset\}$ is a
polyhedral subcomplex of $\PC$. 
The intersections of the parameter regions with faces of $\MM$ induce 
a subdivision
$\QQ$ of $\MM$:
\[
  \QQ = \left\{ \Phi(f) \cap g \midd f \in \FF,\,
  \text{$g$ is a face of $\MM$} \right\}.
\]
In Section~\ref{sec:app:polycompl}, we show that $\QQ$ is a
$(d-1)$-dimensional polytopal complex.
Next, we construct $\QQ_\DD$ through a central projection with 
the origin as center of
$\QQ$ onto the $(d-1)$-dimensional standard simplex $\DD \subset \R^d$. 
It is
easy to see that this projection is a bijection. For a
parameter vector $\mm \in \R^d$, we denote with
$
\DD(\mm) =  \mm/\|\mm\|_1
$
its projection onto $\DD$. Similarly, we
denote with 
$
\MM(\mm) =  \mm/\|\mm\|_\infty
$
the
projection of $\mm$ onto $\MM$ and we use the same notation to denote the
element-wise projection of sets. Then, we can write the projection 
$\QQ_\DD$ of
$\QQ$ onto
$\DD$ as $\QQ_\DD = \{\DD(q) \mid q \in \QQ\}$.
Furthermore, let $\SS = \{ \Delta(g) \mid \text{$g$ is a face of $\MM$}\}$
denote the projections
of the faces of $\MM$ onto $\DD$. For $f \in \FF$, let
$\Phi_\DD(f) = \DD(\Phi(f) \cap \MM)$ denote the projection of all 
parameter vectors in
$\MM$ for which $f$ is optimal onto $\Delta$. Please refer to
Table~\ref{tab:notation} on Page~\pageref{tab:notation} for an
overview of the current and future notation.
The following results are proved in Section~\ref{sec:app:polycompl}.

\begin{lemma}\label{stm:face_dim}\label{stm:unique}
Let $q \neq \emptyset$ be an element from
$\QQ_\DD$. Then, there exists unique pair
$(f, g)$ where $f$ is a face of  $\FF$
and $g$ is a face of $\SS$ such that $q = \Phi_{\Delta}(f) \cap g$.
Moreover, $q$ is a simple polytope of dimension $\dim g - \dim f$ and, 
if $\dim q > 0$, the set of facets of $q$ can be written as
\[
  \left.\Big\{ \Phi_{\Delta}\left(f\right) \cap \facet{g} \neq \emptyset \midd
  \text{$\facet{g}$ is a facet of $g$} \Big\}\right. \cup
  \left\{ \Phi_{\Delta}\left(\up{f}\right) \cap g \neq \emptyset \midd \text{$f$ is a
  facet of $\up{f} \in \FF$}\right\}.
\]
\end{lemma}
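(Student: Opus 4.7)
The plan is to recover the unique pair $(f,g)$ from an interior point of $q$. I would pick a point $\mm \in \MM$ such that $\Delta(\mm)$ lies in the relative interior of $q$, and then let $f$ be the maximal face of $\PC$ that is optimal for $\LC_\mm$, and $g$ the image under $\Delta$ of the smallest face of $\MM$ containing $\mm$ in its relative interior. A short local-constancy argument (the optimal face of a non-degenerate LP is constant as the cost varies in the relative interior of a basis cone, and the active inequalities from $\MM$ are constant in the relative interior of a face of $\MM$) shows that $f$ and $g$ depend only on $q$ and not on the choice of $\mm$. Both inclusions $q \subseteq \Phi_{\Delta}(f) \cap g$ and $\Phi_{\Delta}(f) \cap g \subseteq q$ can then be checked directly. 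For uniqueness, if $q = \Phi_{\Delta}(f') \cap g'$ is another representation, then $\mm$ lies in the relative interior of both factors, which forces $f' = f$ by maximality of $f$ and $g' = g$ by minimality of $g$.

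For the dimension formula, I would use the non-degeneracy of $\LC_\mm$ (ensured by the $\eps$-perturbation of $\cc_\mm$) together with (P1) and (P2). A $k$-face $f$ of $\PC$ has $|\supp{f}| = d+k$, so the parameter region $\Phi(f) \subseteq \R^d$ is cut out by the $k$ affine equalities $(\er_{B,\cc_\mm})_j = 0$ for $j \in \supp{f} \setminus \ind{B}$, together with the inequalities for the other non-basic indices. A general-position check then shows that these equalities are linearly independent as affine functions of $\mm$, so $\dim \Phi(f) = d - k$, and hence $\dim \Phi_{\Delta}(f) = d - k - 1$ after central projection onto $\Delta$. Since $\Delta(\mm)$ lies in the relative interior of both $g$ and $\Phi_{\Delta}(f)$, the intersection is transverse inside $\Delta$, and the dimension count
\[
  \dim q \;=\; (d - \dim f - 1) + \dim g - (d - 1) \;=\; \dim g - \dim f
\]
follows.

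Simplicity and the facet description follow from the same general-position argument. The defining hyperplanes of $q$ (from $g$ and from the boundary of $\Phi_{\Delta}(f)$) are in general position, so each vertex of $q$ lies on exactly $\dim q$ facets, making $q$ simple (and bounded, since $q \subseteq g \subseteq \Delta$). A facet of $q$ arises by tightening exactly one of the defining inequalities. If a defining inequality of $g$ becomes tight, $g$ is replaced by one of its facets $\facet{g}$, with $f$ still the maximal optimal face above, giving the first family. If instead an inequality $(\er_{B,\cc_\mm})_j \leq 0$ with $j \not\in \supp{f}$ becomes tight, column $j$ is added to the support of the optimal face, promoting $f$ to a super-face $\up{f} \in \FF$ in which $f$ is a facet; the corresponding parameter region $\Phi_{\Delta}(\up{f})$ still meets $g$ non-trivially, giving the second family. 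I would then verify that every facet of $q$ is of exactly one of these two types.

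The main obstacle will be the careful verification of the general-position assumptions: linear independence of the reduced-cost equalities as affine forms in $\mm$, transversality of $\Phi_{\Delta}(f) \cap g$ inside $\Delta$, and the claim that no vertex of $q$ lies on more than $\dim q$ facets. All of these rely on the integrality assumption (P1), the ray-embracing assumption (P2), and the explicit perturbation $\eps^j$ in the cost vector; Lemma~\ref{lem:colors} together with the non-degeneracy setup from the preceding subsection are designed precisely to furnish the ingredients needed.
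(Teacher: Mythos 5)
Your proposal is correct in outline and, for the dimension formula, simplicity, and the facet description, it follows essentially the paper's route: describe $\Phi(f)$ by the reduced-cost system $\LR_{B,f}$, put the resulting hyperplanes together with the supporting hyperplanes of the cube in general position, count dimensions, and obtain the two families of facets by tightening a single constraint from either $g$ or $\Phi(f)$. The genuine difference is in the uniqueness step: you recover a canonical pair $(f,g)$ from a point of $\relint q$ (maximal optimal face of $\LC_\mm$, minimal face of $\MM$ containing $\mm$) and argue by local constancy, whereas the paper first proves an intersection lemma (Lemma~\ref{lem:para_intersection}: $q_1\cap q_2=\Phi\lt(\up{f}\rt)\cap\down{g}$ with $\up{f}$ the smallest common superface), which needs no general position at all, and then gets uniqueness from the dimension formula (Lemma~\ref{lem:para_unique}). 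Your route is more geometric but carries a mild circularity: to know that the maximal optimal face is the same at \emph{every} point of $\relint q$, and not merely on a dense subset, you essentially need that the strata $\Phi_\DD\lt(\up{f}\rt)\cap g$ lie in the relative boundary of $q$, which is part of the facet statement you are proving; the paper's ordering of lemmas avoids this. Beyond that, the technical core of the paper's proof sits exactly where you defer to ``verification'': Lemma~\ref{g:red}, which shows via the separated-polynomial perturbation machinery (Lemma~\ref{stm:perturb} applied to the $\eps^j$ terms of $\cc_\mm$) that any $k$ of the hyperplanes $h_j$ and cube facets meet in dimension exactly $d-k$ or not at all, and the argument in Lemma~\ref{lem:para_region} that adding the inequality constraints cannot further reduce the dimension of the solution set of the equalities (the three-case analysis of the polyhedra $\Psi_{j}$). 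You name the right ingredients for both, so the plan is sound, but be aware that these two deferred points, not the parts you spell out, are where most of the work lies.
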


\begin{lemma}\label{stm:polycompl}
The set $\QQ_\DD$ is a $(d-1)$-dimensional polytopal
complex that decomposes $\Delta$. \qed
\end{lemma}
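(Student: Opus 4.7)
The plan is to verify the defining properties of a polytopal complex directly from the structural description already established in Lemma~\ref{stm:face_dim}: every cell is a polytope of dimension at most $d-1$, the family is closed under taking faces, the intersection of any two cells is a face of each, and the union of the cells equals $\Delta$, with at least one cell attaining dimension $d-1$.

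I would dispatch the easy parts first. \emph{Coverage and maximal dimension:} for any $\mm \in \MM$, the perturbation term $\eps^j$ in~\eqref{eq:costs} ensures that $\LC_\mm$ has a unique optimal vertex $v$, so if $g$ is the face of $\MM$ whose relative interior contains $\mm$, then $(v, g) \in \FF \times \SS$ and $\Delta(\mm)$ lies in the cell $\Phi_\Delta(v) \cap \Delta(g) \in \QQ_\DD$. Specializing $g$ to a top-dimensional face of $\MM$ and $v$ to a vertex of $\PC$ yields, by Lemma~\ref{stm:face_dim}, a cell of dimension $(d-1) - 0 = d-1$. \emph{Closure under faces:} Lemma~\ref{stm:face_dim} exhibits every facet of $\Phi_\Delta(f) \cap g$ as another cell of the same form $\Phi_\Delta(f') \cap g'$ with $(f', g') \in \FF \times \SS$, so iterating produces every proper face of the cell inside $\QQ_\DD$.

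The main obstacle is the intersection property. Given two nonempty cells $q_i = \Phi_\Delta(f_i) \cap g_i$ with $q_1 \cap q_2 \neq \emptyset$, the key identity I would establish is
\[
  \Phi(f_1) \cap \Phi(f_2) = \Phi(f_1 \vee f_2),
\]
where $f_1 \vee f_2$ denotes the smallest face of $\PC$ containing $f_1 \cup f_2$, which exists because the face lattice of a polyhedron is a lattice under inclusion. The direction requiring work is left-to-right: if both $f_1$ and $f_2$ are optimal for $\LC_\mm$, then the (unique) maximal optimal face of $\LC_\mm$ is a face of $\PC$ containing $f_1 \cup f_2$, hence also contains $f_1 \vee f_2$, so $f_1 \vee f_2$ is optimal. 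Since $g_1 \cap g_2$ is the projection of the intersection of the corresponding faces of $\MM$ — again a face of $\MM$, because $\MM$ is itself a polyhedral complex — and since $f_1 \vee f_2 \in \FF$ (witnessed by any $\mm \in q_1 \cap q_2$), this identifies $q_1 \cap q_2 = \Phi_\Delta(f_1 \vee f_2) \cap (g_1 \cap g_2)$ as an element of $\QQ_\DD$.

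To upgrade ``$q_1 \cap q_2 \subseteq q_1$'' to ``$q_1 \cap q_2$ is a face of $q_1$'', I would iterate the facet description of Lemma~\ref{stm:face_dim} along a chain from $(f_1, g_1)$ to $(f_1 \vee f_2, g_1 \cap g_2)$, at each step either replacing the $\SS$-coordinate by one of its facets while descending from $g_1$ to $g_1 \cap g_2$, or replacing the $\FF$-coordinate by a covering face while ascending from $f_1$ to $f_1 \vee f_2$. Each such move realizes a facet of the current cell, so by transitivity $q_1 \cap q_2$ is a face of $q_1$, and symmetrically of $q_2$. The delicate point I expect to cost the most effort is checking that each intermediate pair $(f', g')$ along this chain still lies in $\FF \times \SS$ and produces a nonempty cell — this is precisely where the general position assumptions (P1) and (P2), together with the $\eps^j$ perturbation of~\eqref{eq:costs}, pay off by ruling out degenerate collapses in the parameter regions.
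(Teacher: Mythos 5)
Your proof is correct and follows essentially the same route as the paper: the join identity $\Phi(f_1)\cap\Phi(f_2)=\Phi(f_1\vee f_2)$ is precisely the paper's Lemma~\ref{lem:para_intersection}, and upgrading ``$q_1\cap q_2\subseteq q_1$'' to ``$q_1\cap q_2$ is a face of $q_1$'' by walking a chain of facets is precisely the paper's repeated application of Lemma~\ref{lem:para_region} (the unprojected version of Lemma~\ref{stm:face_dim}). Two cosmetic points: $\LC_{\mm}$ need not have a \emph{unique} optimal vertex when $\mm$ lies on a wall of the parameter subdivision (coverage only needs \emph{some} optimal face, which exists since the LP is feasible and has positive costs), and the nonemptiness of the intermediate cells in your facet chains needs no appeal to general position --- each intermediate cell contains the nonempty target $\Phi(f_1\vee f_2)\cap(g_1\cap g_2)$ by monotonicity of $\Phi$ and of the faces of $\MM$ along the chain.
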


\subsection{The Barycentric Subdivision}
\label{sec:ppad:sd}

The \emph{barycentric subdivision}~\cite[Definition~1.7.2]{Matousek2008} 
is a
well-known method to subdivide a polytopal complex into
simplices. We define $\sd \QQ_\DD$ 
as the set of  all simplices $\conv(\vv_0,\dots,\vv_k)$, $k \in [d]$, 
such that
there exists a chain $q_0 \subset \dots \subset q_k$ of polytopes in
$\QQ_\DD$ with $\dim q_{i-1} < \dim q_i$ and such that $\vv_i$ is 
the barycenter of
$q_i$ for $i \in [k]$. We define the label of a
vertex $\vv \in \sd \QQ_\DD$ as follows. By Lemma~\ref{stm:unique}, there
exists a
unique pair $f \in \FF$ and $g \in \SS$ with $\vv = \Phi_\DD(f) \cap g$. Then, the
label $\lambda(\vv)$ of $\vv$ is defined as
\begin{equation}
  \label{eq:ppad:labeling}
 \lambda(\vv) = \argmax_{i \in [d]} \lt|\ind{C}_i \cap \supp{f}\rt|.
\end{equation}
In case of a tie, we take the smallest $i \in [d]$ that achieves the
maximum. Lemma~\ref{lem:colors} implies that $\lambda(\cdot)$ is a Sperner 
labeling
of $\sd \QQ_\DD$. 
In fact, $\lambda$ is 
a Sperner labeling for any fixed simplicial subdivision of $\Delta$.
Now, Theorem~\ref{thm:sperner} guarantees the existence of a $(d-1)$-simplex 
$\sigma
\in \sd \QQ_\DD$ whose vertices have all $d$ possible labels. 
The next lemma
shows that then one of the vertices of $\sigma$ defines a solution to the
$\CCP$ instance. Here, we use specific properties of the barycentric
subdivision.
\begin{lemma}\label{lem:fullycolored_colbasis}
Let $\sigma \in \sd \QQ_\DD$ be a fully-labeled
$(d-1)$-simplex and let $\vv_{d-1}$ denote the vertex of $\sigma$ that 
is the
barycenter of a $(d-1)$-face $q_{d-1} = \Phi_\DD(f_{d-1}) \cap g_{d-1} \in
\QQ_\DD$, where $f_{d-1} \in \FF$ and $g_{d-1} \in \SS$.  Then, the 
columns from
$A_{\supp{f_{d-1}}}$ are a colorful choice that ray-embraces $\bb$.
\end{lemma}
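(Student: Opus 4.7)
The plan is to turn the $(d-1)$-simplex $\sigma$ into a combinatorial question about color multiplicities along a tower of faces of $\PC$.

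By definition of $\sd \QQ_\DD$, the simplex $\sigma$ comes from a strictly increasing chain $q_0 \subsetneq q_1 \subsetneq \dots \subsetneq q_{d-1}$ in $\QQ_\DD$ with $\dim q_i = i$. Lemma~\ref{stm:unique} writes each $q_i$ uniquely as $\Phi_\DD(f_i) \cap g_i$, and its facet description splits every upward step $q_i \to q_{i+1}$ into two types: \emph{type A}, where $g_i$ is a facet of $g_{i+1}$ and $f_i = f_{i+1}$; or \emph{type B}, where $f_{i+1}$ is a facet of $f_i$ and $g_i = g_{i+1}$, so $\supp{f_{i+1}} = \supp{f_i} \setminus \{j_i\}$ for a single column index $j_i$. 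Type-A steps leave $f_i$, and hence $\lambda(\vv_i)$, unchanged, so a fully labeled $\sigma$ admits no type-A step. Therefore $g_i = \DD$ for all $i$ (using $\dim q_{d-1} = d-1$), and the chain reduces to a tower $f_0 \supsetneq f_1 \supsetneq \dots \supsetneq f_{d-1}$ of faces of $\PC$ with $\dim f_i = d-1-i$. In particular, $f_{d-1}$ is a vertex of $\PC$, so $A_{\supp{f_{d-1}}}$ is a feasible basis and thus automatically ray-embraces $\bb$.

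It remains to show that $\supp{f_{d-1}}$ is colorful. Set $n_j^i = |\ind{C_j} \cap \supp{f_i}|$ and $\ell_i = \lambda(\vv_i)$. A short case analysis shows that if the column removed between steps $i$ and $i+1$ has color other than $\ell_i$, then the label does not change. Since all $d$ labels are distinct, the color removed between steps $i$ and $i+1$ must equal $\ell_i$. Writing $c_i = n_{\ell_i}^0$, this gives $n_{\ell_k}^j = c_k - [k<j]$ for $k < d-1$ and $n_{\ell_{d-1}}^j = c_{d-1}$ for all $j$. The requirement that $\ell_j$ attains the maximum count $c_j$ at step $j$ forces $c_j \geq c_k - 1$ for every $k < j$, and in particular $c_0 - c_{d-1} \leq 1$. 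Combining this with the monotonicity $c_0 \geq \dots \geq c_{d-1} \geq 1$ and $\sum_i c_i = |\supp{f_0}| = 2d-1$, a brief integer computation forces $(c_0, \dots, c_{d-1}) = (2, \dots, 2, 1)$. Hence $n_j^{d-1} = 1$ for every color $j$, so $\supp{f_{d-1}}$ is colorful.

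The main technical obstacle is the last combinatorial step: confirming that the majority-plus-lex-tie-break labeling rule, together with the forced removal of the label color at every step, admits only the count profile $(2,\dots,2,1)$ and excludes alternatives like $(3,2,1,1)$. This ultimately reduces to the diophantine constraints $c_0 - c_{d-1} \leq 1$ and $\sum_i c_i = 2d-1$ with $c_{d-1} \geq 1$, for which $(2,\dots,2,1)$ is the unique integer solution.
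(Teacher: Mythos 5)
Your proof is correct and follows essentially the same route as the paper: the same reduction of the face chain to a tower $f_0 \supsetneq \dots \supsetneq f_{d-1}$ with constant $g_i$ (so $f_{d-1}$ is a vertex of $\PC$ and hence a feasible basis), and the same key observation that the column dropped at step $i$ must carry the label color $\ell_i$, so the dropped colors are pairwise distinct. The only difference is the endgame: the paper derives a contradiction from a missing color never being able to attain the majority count, whereas you pin down the exact count profile $(2,\dots,2,1)$; both arguments are valid.
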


Our discussion up to now already yields a new Sperner-based proof 
of the colorful \Caratheodory theorem.
However, in order to show that $\CCP \in \PPAD$, we need to replace
the invocation of Theorem~\ref{thm:sperner} by a \PPAD-problem. Note that it is 
not possible to use the formulation of Sperner
from~\cite[Theorem~2]{Papadimitriou1994} directly, since
it is defined for a fixed simplicial subdivision of the standard 
simplex. In our
case, the simplicial subdivision of $\Delta$ depends on the input instance.
In the following, we generalize the \PPAD formulation of
Sperner in \cite{Papadimitriou1994} to $\QQ_\DD$ by mimicking the proof of
Theorem~\ref{thm:sperner}. For this, we need to be able to find simplices
in $\sd \QQ_\DD$ that share a given facet. We begin with a
simple encoding of simplices in $\sd \QQ_\DD$ that allows us to
solve this problem completely combinatorially.

We first show how to encode a polytope $q \in \QQ_\DD$. By 
Lemma~\ref{stm:unique},
there exists a unique pair of faces $f \in \FF$ and $g \in S$ such 
that $q = \Phi_\DD(f) \cap g$. Since $\MM(g)$ is a face of the unit 
cube, the value of $d - \dim g$ coordinates in $\MM(g)$ is fixed 
to either $0$ or $1$. Let $I_j
\subseteq [d]$,
$j=0,1$, denote the indices of the coordinates that are fixed to $j$.
Then, the encoding of $q$ is defined as
$
  \en{q} = \lt(\supp{f}, I_0, I_1\rt)$.
We use this to define an encoding of the simplices in $\QQ_\DD$ as follows.
Let $\sigma \in \QQ_\DD$ be a $k$-simplex and let $q_0 \subset\dots
\subset q_k$ be the corresponding face chain in $\QQ_\DD$ such that the 
$i$th vertex of $\sigma$ is the barycenter of $q_i$. Then, the 
encoding $\en{\sigma}$ is defined as
\begin{equation}\label{eq:enc:simplex}
  \en{\sigma} = \lt(\en{q_0},\dots,\en{q_k}\rt).
\end{equation}

In the proof of Theorem~\ref{thm:sperner}, we traverse only a subset of simplices
in the simplicial subdivision, namely $(k-1)$-simplices that are 
contained in the face $\DD_{[k]} = \conv\{\e_i \mid i \in [k]\}$ of $\DD$ for $k \in [d]$.
Let  
$
  \FS_k = \left\{ \sigma \in \sd \QQ_\DD \midd \dim(\sigma) = k-1,\ \sigma
  \subseteq \DD_{[k]} \right\}
$
denote the set
of $(k-1)$-simplices in $\sd \QQ_\DD$ that are contained in the $(k-1)$-face,
where $k \in [d]$, and let $\FS = \bigcup_{k=1}^d
\FS_k$ be the collection of all those simplices. In the following, we give a
precise characterization of the encodings of the simplices in $\FS_k$.
For two disjoint index sets $I_0,I_1 \subseteq [d]$, we denote with
$g(I_0,I_1)= \left\{ \mm \in
\MM \midd j=0,1,\, (\mm)_i = j \text{ for $i \in I_j$}\right\}$ the
face of
$\MM$ that we obtain by fixing the coordinates in dimensions $I_0\cup I_1$.
Let now $T=\lt(Q_0,\dots,Q_{k-1}\rt)$, $k \in [d-1]$, be a tuple, where $Q_i =
\lt(S^{(i)}, I^{(i)}_0,
I^{(i)}_1\rt)$, $S^{(i)} \subset \lt[d^2\rt]$, and $I^{(i)}_0, I^{(i)}_1$ are
disjoint subsets of $[d]$ with $I^{(i)}_1 \neq \emptyset$ for $i \in [k-1]_0$.
We say $T$ is \emph{valid} if and only if $T$ has the following
properties.
\begin{enumerate}[label=(\roman{enumi})]
  \item\label{valid:ini} We have $I^{(k-1)}_0 = [d] \setminus [k]$,
  $\lt|I^{(k-1)}_1\rt| = 1$,
  and the columns in $A_{S^{(k-1)}}$ are a feasible basis for a vertex $f$.
  Moreover, the intersection $\Phi(f) \cap g\lt(I^{(k-1)}_0 \cup
  I^{(k-1)}_1\rt)$ is nonempty.
  \item\label{valid:facet} For all $i \in [k-1]$, we either have
    \begin{enumerate}[label=(\roman{enumi}.\alph{enumii})]
      \item\label{valid:facet:f} $I^{(i-1)}_0 = I^{(i)}_0$,
        $I^{(i-1)}_1 = I^{(i)}_1$, and $S^{(i-1)} = S^{(i)} \cup \left\{ a_{i-1}
        \right\}$ for some index $a_{i-1} \in \lt[d^2\rt] \setminus S^{(i)}$,
      \item\label{valid:facet:g} or $S^{(i-1)} = S^{(i)}$ and there is an index
        $j_{i-1} \in [d] \setminus \lt(I^{(i)}_0 \cup I^{(i)}_1\rt)$ such that
        either $I^{(i-1)}_0 = I^{(i)}_0$ and $I^{(i-1)}_1 = I^{(i)}_1 \cup
        \left\{ j_{i-1} \right\}$, or $I^{(i-1)}_1 = I^{(i)}_1$ and $I^{(i-1)}_0
        = I^{(i)}_0 \cup \left\{ j_{i-1} \right\}$.
    \end{enumerate}
\end{enumerate}

\begin{lemma}\label{lem:enc_bij}
  For $k \in [d]$, the function $\en{\cdot}$ restricted to the simplices in
  $\FS_k$ is a bijection from $\FS_k$ to the set of valid $k$-tuples.
\end{lemma}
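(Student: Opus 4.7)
The plan is to verify the bijection in two parts: first, that every simplex $\sigma \in \FS_k$ produces a valid $k$-tuple $\en{\sigma}$ and that the map is injective; second, that every valid $k$-tuple arises as $\en{\sigma}$ for some $\sigma \in \FS_k$.

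\textbf{Forward direction and injectivity.} Let $\sigma \in \FS_k$ with associated chain $q_0 \subset \dots \subset q_{k-1}$ in $\QQ_\DD$. Since $\sigma$ has $k$ vertices, $\dim\sigma = k-1$, and $\sigma \subseteq \DD_{[k]}$ forces $\dim q_{k-1} \leq k-1$; combined with strictly increasing dimensions the chain must satisfy $\dim q_i = i$. By Lemma~\ref{stm:unique} each $q_i$ factors uniquely as $q_i = \Phi_\DD(f_i) \cap g_i$, and I write $\en{q_i} = (S^{(i)}, I_0^{(i)}, I_1^{(i)})$. For condition (i), the fact that $q_{k-1} \subseteq \DD_{[k]}$ has full dimension $k-1$ forces $g_{k-1}$ to be the projection of $g([d]\setminus[k], \{j\})$ for some $j \in [k]$, and the dimension formula $\dim q_{k-1} = \dim g_{k-1} - \dim f_{k-1}$ from Lemma~\ref{stm:face_dim} then forces $\dim f_{k-1}=0$, so $S^{(k-1)}$ is a feasible basis; nonemptiness of the parameter-region intersection is inherited from $q_{k-1}$. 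For condition (ii), since each $q_{i-1}$ is a facet of $q_i$, Lemma~\ref{stm:face_dim} partitions the facets of $q_i$ into exactly the two types matching (ii.a) and (ii.b). The requirement $I_1^{(i)}\neq\emptyset$ propagates downward from $I_1^{(k-1)}\neq\emptyset$ because $I_1$ only grows as one descends the chain. Injectivity is then immediate: each triple $(S^{(i)}, I_0^{(i)}, I_1^{(i)})$ recovers $f_i$ (its support determines the face of $\PC$) and $g_i$ (the pair $(I_0, I_1)$ determines the face of $\MM$), hence $q_i$ via Lemma~\ref{stm:unique}, and thus $\sigma$.

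\textbf{Surjectivity.} Given a valid $k$-tuple $T = (Q_0, \dots, Q_{k-1})$, I build the chain top-down. Condition (i) produces a vertex $f_{k-1} \in \FF$ with support $S^{(k-1)}$ and a face $g_{k-1} = \DD(g(I_0^{(k-1)}, I_1^{(k-1)}))$ whose intersection $q_{k-1}$ is nonempty; a dimension count yields $\dim q_{k-1} = k-1$. Inductively, given $q_i$ with $\dim q_i = i \geq 1$, the modification prescribed by (ii.a) or (ii.b) produces a new pair $(f_{i-1}, g_{i-1})$ whose associated intersection $q_{i-1}$ matches one of the two facet types of $q_i$ listed in Lemma~\ref{stm:face_dim}, and a further dimension count gives $\dim q_{i-1} = i - 1$. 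The chain $q_0 \subset \dots \subset q_{k-1}$ then defines a simplex $\sigma \in \sd\QQ_\DD$ with $\en{\sigma} = T$, and $\sigma \subseteq \DD_{[k]}$ holds because $I_0^{(i)} \supseteq I_0^{(k-1)} = [d]\setminus[k]$ for every $i$ (no index ever leaves $I_0$ as the chain descends), so $\sigma \in \FS_k$.

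\textbf{Main obstacle.} The delicate point lies in the inductive step of surjectivity: condition (ii) prescribes only a combinatorial modification of $(S, I_0, I_1)$, and one must argue that the resulting pair $(f_{i-1}, g_{i-1})$ does define an actual nonempty facet of $q_i$ rather than an empty intersection or a face of the wrong codimension. For (ii.a) this requires checking that adjoining the specified column $a_{i-1}$ to $S^{(i)}$ yields a genuine face $\up{f_i} \in \FF$ of $\PC$ having $f_i$ as a facet; for (ii.b) that the chosen coordinate $j_{i-1}$ corresponds to a facet of $g_i$ whose intersection with $\Phi_\DD(f_i)$ is nonempty. This is precisely where the simplicity of $q_i$ from Lemma~\ref{stm:face_dim}, the non-degeneracy of $\LC$ guaranteed by the cost perturbation in~\eqref{eq:costs}, and the general-position assumptions (P1)--(P2) must be combined to ensure that every permitted combinatorial change is realized by an actual face of the correct dimension.
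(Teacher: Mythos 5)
Your route is the same as the paper's: encode the chain $q_0\subset\dots\subset q_{k-1}$, verify conditions (i) and (ii) via Lemma~\ref{stm:unique} and the dimension/facet statements of Lemma~\ref{stm:face_dim}, get injectivity from the uniqueness of the representing pair, and reconstruct the chain top-down for surjectivity. One small glossed point in the forward direction: to conclude that $g_{k-1}$ is the projection of $g([d]\setminus[k],\{j\})$ you first need $g_i\subseteq\DD_{[k]}$ for every $i$; the paper derives this from the uniqueness in Lemma~\ref{stm:unique} (otherwise $\lt(f_i,\,g_i\cap\DD_{[k]}\rt)$ would be a second pair representing $q_i$) and only then runs the dimension count. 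That step is available to you, since you already invoke Lemma~\ref{stm:unique}, but it should be said.

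Concerning your ``main obstacle'': the paper does not deploy any additional machinery there, and in particular does not re-invoke the cost perturbation or (P1)--(P2) directly. Its surjectivity argument is exactly your sketch: condition (i) gives $\Phi(f_{k-1})\cap g\lt(I^{(k-1)}_0,I^{(k-1)}_1\rt)\neq\emptyset$, hence $q_{k-1}\neq\emptyset$ and $\dim q_{k-1}=k-1$ by Lemma~\ref{stm:face_dim}; then (ii.a)/(ii.b) say precisely that $f_i$ is a facet of $f_{i-1}$ with $g_{i-1}=g_i$, or $g_{i-1}$ is a facet of $g_i$ with $f_{i-1}=f_i$ (it also uses that any superset of the feasible basis $S^{(k-1)}$ is the support of a face of $\PC$), and it concludes directly from the facet characterization in Lemma~\ref{stm:face_dim} that $q_{i-1}$ is a facet of $q_i$, hence $\dim q_i=i$ throughout; the general-position ingredients you list enter only through Lemma~\ref{g:red} inside the proof of Lemma~\ref{lem:para_region}, not at this point. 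Note, however, that the facet characterization lists only the \emph{nonempty} intersections of those two combinatorial forms, so the concern you raise --- that $\Phi_\DD(f_{i-1})\cap g_{i-1}$ could a priori be empty (equivalently, that $f_{i-1}$ might not lie in $\FF$, or that $\Phi_\DD(f_i)$ might not reach the particular facet of $g_i$ selected by $j_{i-1}$) --- is not separately discharged in the paper's text either; the paper simply asserts the facet relation. So you have correctly isolated the delicate step, but since you neither close it nor carry out the paper's (direct) appeal to Lemma~\ref{stm:face_dim}, your surjectivity argument is left incomplete as written, while the forward direction and injectivity match the paper.
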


Using our characterization of encodings as valid tuples, it becomes 
an easy task to check whether a given candidate encoding corresponds 
to a simplex in $\FS$.

\begin{lemma}\label{lem:enc_verify}
Let $T=\lt(Q_0,\dots,Q_{k-1}\rt)$, $k \in [d-1]$, be a tuple,
where $Q_i =
\lt(S^{(i)}, I^{(i)}_0,
I^{(i)}_1\rt)$, $S^{(i)} \subset \lt[d^2\rt]$, and $I^{(i)}_0, I^{(i)}_1$ are
disjoint subsets of $[d]$ with $I^{(i)}_1 \neq \emptyset$ for $i \in [k-1]_0$.
Then, we can check in polynomial time whether $T$ is a valid $k$-tuple.
\end{lemma}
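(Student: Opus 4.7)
The plan is to verify each of the two conditions (i) and (ii) separately and show that each check can be performed in polynomial time. Condition (ii) is purely combinatorial, while condition (i) reduces to standard linear-algebraic tasks and a linear-programming feasibility test.

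First I would handle condition (ii), which concerns consecutive pairs $Q_{i-1}, Q_i$ for $i \in [k-1]$. For each such pair, it suffices to test whether exactly one of the alternatives (ii.a) or (ii.b) holds. The alternative (ii.a) is verified by checking $I_0^{(i-1)} = I_0^{(i)}$, $I_1^{(i-1)} = I_1^{(i)}$, and $|S^{(i-1)} \setminus S^{(i)}| = 1$ with $S^{(i)} \subset S^{(i-1)}$. The alternative (ii.b) is verified by checking $S^{(i-1)} = S^{(i)}$ together with the appropriate set-difference condition on $I_0^{(i)}$ and $I_1^{(i)}$. Each of these tests takes time $O(d^2)$, and there are at most $d-1$ pairs to check, so condition (ii) is verified in polynomial time.

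Next I would handle condition (i), which involves the single tuple $Q_{k-1}$. The equality $I_0^{(k-1)} = [d] \setminus [k]$ and the size constraint $|I_1^{(k-1)}| = 1$ are immediate set checks. To confirm that the columns $A_{S^{(k-1)}}$ form a feasible basis for some vertex $f$, I would run Gaussian elimination on the $d \times d$ submatrix $A_{S^{(k-1)}}$ to test linear independence, and then solve $A_{S^{(k-1)}} \xx = \bb$ and check $\xx > \0$ (using strict inequality to ensure a vertex under our non-degeneracy assumption). This is all polynomial time in the bit-length of the input.

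The main obstacle is the last part of (i): checking whether $\Phi(f) \cap g\bigl(I_0^{(k-1)} \cup I_1^{(k-1)}\bigr) \neq \emptyset$. With the basis $B$ corresponding to $A_{S^{(k-1)}}$ in hand, I would express $\Phi(f)$ explicitly via the linear inequality system $\LR_{B,f}$ from equation~(\ref{eq:redcosts}), whose coefficients are linear in $\mm$ and can be computed using the inverse $A_B^{-1}$ already obtained in the basis check. The face $g(I_0^{(k-1)}, I_1^{(k-1)})$ contributes the affine constraints $(\mm)_i = 0$ for $i \in I_0^{(k-1)}$ and $(\mm)_i = 1$ for $i \in I_1^{(k-1)}$, together with $\0 \leq \mm \leq \1$. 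The resulting system has $d$ variables and polynomially many constraints, each of polynomially bounded bit-length, so feasibility can be decided in polynomial time by any polynomial-time LP algorithm. Combining the three checks establishes the lemma.
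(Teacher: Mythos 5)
Your proposal is correct and follows essentially the same route as the paper's proof: verify the combinatorial/syntactic conditions directly, test that $A_{S^{(k-1)}}$ is a feasible basis by linear algebra, and decide nonemptiness of $\Phi(f) \cap g\lt(I^{(k-1)}_0, I^{(k-1)}_1\rt)$ by solving the linear feasibility system $\LR_{B,f}$ extended by the constraints defining the face $g$. Your version simply spells out the individual checks in more detail (and your pointer to the reduced-cost system is the right one, even though the relevant display is the one defining $\LR_{B,f}$ rather than \eq{eq:redcosts}).
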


In Section~\ref{sec:app:barycentric}, we show that simplices in $\FS$ that
share a facet have similar encodings that differ only in one element
of the encoding tuples.
Using this fact, we can  traverse $\FS$ efficiently by manipulating the
respective encodings.

\begin{lemma}\label{lem:enc_algs}
Let $\sigma \in\FS_k$ be a simplex and let 
$q_0 \subset \dots \subset q_{k-1}$ be the 
corresponding face chain in $\QQ_\DD$ such 
that the $i$th vertex $\vv_i$ of $\sigma$ is 
the barycenter of $q_i$, where $k \in[d]$ and $i \in
[k-1]_0$. Then, we can solve the following problems 
in polynomial time:
(i) Given $\en{\sigma}$ and $i$, compute the encoding of the simplex
$\sigma' \in \FS_k$ that shares the facet
$\conv\left\{ \vv_j \midd j \in [k-1]_0,\, j \neq i \right\}$ with 
$\sigma$ or
state that there is none; (ii) Assuming that $k<d$ and 
given $\en{\sigma}$, compute the encoding of
the simplex $\up{\sigma} \in \FS_{k+1}$ that has $\sigma$ as facet;
and (iii) Assuming that $k>1$ and given $\en{\sigma}$, compute 
the encoding of the simplex $\down{\sigma} \in \FS_{k-1}$ that 
is a facet of $\sigma$ or state that there is none.
\end{lemma}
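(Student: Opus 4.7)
The plan is to reduce each of the three operations on $\sigma \in \FS_k$ to a short combinatorial manipulation of the encoding tuple $\en{\sigma} = (Q_0, \dots, Q_{k-1})$ and to verify the resulting tuple in polynomial time via Lemma~\ref{lem:enc_verify}; the bijection in Lemma~\ref{lem:enc_bij} then guarantees that every valid tuple we produce is indeed the encoding of a simplex in $\FS$. I would begin with (ii). The unique facet of $\up{\sigma} \in \FS_{k+1}$ equal to $\sigma$ must be obtained by deleting the single vertex of $\up{\sigma}$ whose associated polytope is not contained in $\DD_{[k]}$. Since each $Q_i$ in $\en{\sigma}$ satisfies $[d] \setminus [k] \subseteq I_0^{(i)}$ (as $I_0$ only grows as $i$ decreases from the initial value $I_0^{(k-1)} = [d] \setminus [k]$), the additional top polytope of $\up{\sigma}$ is the unique one satisfying $k+1 \notin I_0$, and the initialization condition for $\FS_{k+1}$ then forces $Q_k = \bigl( S^{(k-1)},\, [d] \setminus [k+1],\, I_1^{(k-1)} \bigr)$. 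Validity is immediate: the face $g([d]\setminus[k+1],\, I_1^{(k-1)})$ contains $g(I_0^{(k-1)},\, I_1^{(k-1)})$, so the nonempty intersection of $\Phi(f^{(k-1)})$ with the latter is inherited by the former.

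Operation (iii) follows an analogous line of reasoning: a facet of $\sigma$ that lies in $\DD_{[k-1]}$ can only be obtained by discarding the top vertex $\vv_{k-1}$, and the resulting tuple $(Q_0, \dots, Q_{k-2})$ satisfies the initialization condition for $\FS_{k-1}$ exactly when the last transition of $\en{\sigma}$ is of type (ii.b) adding the coordinate $k$ to $I_0$. This can be checked in constant time, and we return either $(Q_0, \dots, Q_{k-2})$ or state that no such facet exists. For operation (i), the interior case $0 < i < k-1$ exploits the diamond property of polytope face lattices: the interval $[q_{i-1},\, q_{i+1}]$ in $\QQ_\DD$ contains exactly two $i$-dimensional polytopes, one being $q_i$. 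In terms of encodings, the two transitions $Q_{i-1} \to Q_i$ and $Q_i \to Q_{i+1}$ each modify a distinct combinatorial slot of the tuple---a single element of $S$, of $I_0$, or of $I_1$---and the affected elements are necessarily distinct, since they come from disjoint set-differences along the chain. The two transitions therefore commute, and swapping their order yields a syntactically valid tuple with a new middle entry $Q_i' \neq Q_i$; by Lemma~\ref{lem:enc_bij} this is the encoding of the sought simplex $\sigma'$, so no ``none'' case arises for interior $i$.

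The remaining case of (i), with $i \in \{0,\, k-1\}$, is the main obstacle, because one of the adjacent transitions is missing and the swap construction does not apply. For $i = 0$, I enumerate the candidate replacements $Q_0'$ obtained from $Q_1$ by adjoining one element to $S^{(1)}$ or to $I_0^{(1)} \cup I_1^{(1)}$ while respecting disjointness, yielding $O(d^2)$ possibilities. For $i = k-1$, I enumerate the candidate initialization tuples $Q_{k-1}'$ obtained from $Q_{k-2}$ by a single reverse transition consistent with the initialization condition, again a polynomial number. Each candidate is tested in polynomial time via Lemma~\ref{lem:enc_verify}; the candidate equal to the original $Q_0$ or $Q_{k-1}$ is discarded; and by Lemma~\ref{lem:enc_bij} together with property (P2), at most one candidate survives, since every $(k-2)$-face of $\QQ_\DD$ lying in $\DD_{[k]}$ is shared by at most two $(k-1)$-polytopes of $\QQ_\DD$ in $\DD_{[k]}$. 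We return this candidate, or ``none'' if no valid alternative exists.
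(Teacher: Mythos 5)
Your proposal is correct and follows essentially the same route as the paper: the paper isolates your key combinatorial facts into a separate statement (Lemma~\ref{lem:enc_share}: two simplices of $\FS_k$ share a facet iff their encodings agree in all but one position, together with the explicit formula $\en{\up{\sigma}} = \lt(Q_0,\dots,Q_{k-1},\lt(S^{(k-1)}, I_0^{(k-1)}\setminus\{k+1\}, I_1^{(k-1)}\rt)\rt)$ for the facet relation across dimensions, which is exactly what you re-derive for (ii) and (iii)), and then handles (i) by enumerating the polynomially many candidates for the single differing position and testing each with Lemma~\ref{lem:enc_verify}, just as you do in your boundary cases $i \in \{0,k-1\}$. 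Your explicit swap construction for interior positions in (i) is a minor constructive refinement of the paper's blanket enumeration, not a genuinely different argument.
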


\subsection{The PPAD graph}
\label{sec:ppad:graph}

Using our tools from the previous sections, we now describe the \PPAD
graph $G=(V,E)$ for the \CCP instance. 
The definition of $G$ follows mainly the ideas
from the formulation of Sperner as a
PPAD-problem~\cite[Theorem~2]{Papadimitriou1994} and the proof of
Theorem~\ref{thm:sperner}.

The graph has one node per simplex in $\FS$ that has all labels or 
all but the largest possible label. That is, we have one node for 
each $(k-1)$-simplex $\sigma$ in
$\FS_k$ with $[k-1] \subseteq \lambda(\sigma)$. Two
simplices are connected by an edge if one simplex is the facet of 
the other or
if both simplices share a facet that has all but the largest 
possible label.
More formally, 
for $k \in [d]$,
we set
$
V_k = \left\{ \en{\sigma} \midd \sigma \in \FS_k,\, [k-1] \subseteq
\lambda(\sigma) \right\}
$,
the set of all encodings for $(k-1)$-simplices in $\FS_k$ whose vertices 
have
all or all but the largest possible label. 
Then, $V$ is the union of all $V_k$
for $k \in [d]$.
There are
two types of edges: edges within a set $V_k$, $k \in [d]$, and
edges connecting nodes from $V_k$ to nodes in $V_{k-1}$
and $V_{k+1}$. Let $\en{\sigma}, \en{\sigma'}$ be two vertices in
$V_k$ for some $k \in [d]$. Then, there is an edge between
$\en{\sigma}$ and $\en{\sigma'}$ if the
encoded simplices $\sigma,\, \sigma' \in \FS_k$ share a facet
$\facet{\sigma}$ with $\lambda(\facet{\sigma}) = [k-1]$, i.e., both
simplices are connected by a facet that has all but the largest possible
label. Now, let $\en{\sigma} \in V_k$ and $\en{\sigma'} \in
V_{k+1}$ for some $k \in [d-1]$. Then, there is an edge between
$\en{\sigma}$ and $\en{\sigma'}$ if $\lambda(\sigma) = [k]$ and
$\sigma$ is a facet of $\sigma'$.
In the next lemma, we show that 
$G$ consists only of paths and cycles. Please see
Section~\ref{sec:app:ppadgraph} for the proof.

\begin{lemma}\label{stm:deg}
Let $\en{\sigma} \in V$ be a node. If $\en{\sigma} \in V_1$ or
$\en{\sigma} \in V_d$ with $\lambda(\sigma) = [d]$, then $\deg \en{\sigma}
= 1$. Otherwise, $\deg \en{\sigma}  = 2$.
\end{lemma}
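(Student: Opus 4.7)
My plan is to classify each node $\en{\sigma}\in V_k$ by the pair $(k,\lambda(\sigma))$, count the incident edges of each type, and match the totals to the claimed values. The preliminary observation is that $\sigma\in\FS_k$ lies in $\DD_{[k]}$, so the Sperner property gives $\lambda(\sigma)\subseteq[k]$; combined with $[k-1]\subseteq\lambda(\sigma)$ this leaves either $\lambda(\sigma)=[k]$ (fully labeled) or $\lambda(\sigma)=[k-1]$ (transit). The case $k=1$ is handled separately, where $\DD_{[1]}=\{\e_1\}$ forces $V_1=\{\en{\e_1}\}$ to be a singleton with no within-level facets.

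Next, I would count the facet-edges. A fully labeled $\sigma$ has exactly one $(k-2)$-facet $\facet{\sigma}$ with $\lambda(\facet{\sigma})=[k-1]$, obtained by deleting the unique vertex of label $k$; a transit $\sigma$ has exactly two such facets, obtained by deleting either of the two vertices that share the duplicated label. For each such $\facet{\sigma}$ I would apply the following dichotomy. If $\facet{\sigma}$ lies in the relative interior of $\DD_{[k]}$, the simplicial-complex structure of $\sd\,\QQ_\DD$ forces $\facet{\sigma}$ to be shared with exactly one other $(k-1)$-simplex $\sigma'\in\FS_k$, and $\lambda(\sigma')\supseteq[k-1]$ puts $\en{\sigma'}$ in $V_k$, giving one inner edge. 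Otherwise $\facet{\sigma}$ sits on $\partial\DD_{[k]}$, and the fact that $\sd\,\QQ_\DD$ refines the face structure of $\DD$ together with Sperner and $\lambda(\facet{\sigma})=[k-1]$ pins $\facet{\sigma}$ inside $\DD_{[k-1]}$; thus $\facet{\sigma}\in\FS_{k-1}$ with $\en{\facet{\sigma}}\in V_{k-1}$, yielding one edge to $V_{k-1}$. Either way each qualifying facet produces exactly one edge, so fully labeled nodes contribute $1$ facet-edge and transit nodes contribute $2$.

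For the upward edges I would show that when $\lambda(\sigma)=[k]$ and $k<d$, there is exactly one $\up{\sigma}\in\FS_{k+1}$ having $\sigma$ as a facet. The point is that any $(k-1)$-face $\DD_J$ of $\DD_{[k+1]}$ containing $\sigma$ must satisfy $\lambda(\sigma)=[k]\subseteq J$ by Sperner, and with $|J|=k$ this forces $J=[k]$. Hence $\sigma$ lies on the boundary of $\DD_{[k+1]}$ only through $\DD_{[k]}$, so in the triangulation of $\DD_{[k+1]}$ induced by $\sd\,\QQ_\DD$ it is a codimension-one face of precisely one $k$-simplex; this $k$-simplex belongs to $\FS_{k+1}$, and its encoding lies in $V_{k+1}$ because $\lambda(\sigma)=[k]$. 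For $k=d$ there is no level $V_{d+1}$, and transit nodes contribute no upward edges at all, since the edge definition requires the smaller-dimensional simplex to satisfy $\lambda(\sigma)=[k]$.

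Putting the pieces together yields the claim: $\en{\e_1}\in V_1$ has zero facet-edges but one upward edge to $V_2$; a node $\en{\sigma}\in V_d$ with $\lambda(\sigma)=[d]$ has one facet-edge and no upward edges; every other fully-labeled node has $1+1=2$ edges; and every transit node has $2+0=2$ edges. I expect the main delicate point to be the boundary-versus-interior dichotomy for facets of $\sigma$ and the uniqueness of the upward neighbor, both of which reduce to the fact that $\sd\,\QQ_\DD$ refines the face structure of $\DD$ (Lemma~\ref{stm:polycompl}) together with the Sperner-labeling property underwritten by Lemma~\ref{lem:colors}.
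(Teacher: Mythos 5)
Your proof is correct and follows essentially the same route as the paper's: classify each node as fully labeled ($\lambda(\sigma)=[k]$) or transit ($\lambda(\sigma)=[k-1]$), count the $[k-1]$-labeled facets (one or two, each contributing exactly one edge either to a neighbor in $\FS_k$ or down to $\FS_{k-1}$), and add the unique upward edge when $\lambda(\sigma)=[k]$ and $k<d$. You in fact supply more detail than the paper on the boundary-versus-interior dichotomy and the uniqueness of the upward neighbor, which the paper leaves implicit.
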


This already shows that $\CCP \in \PPA$. By generalizing the
orientation from \cite{Papadimitriou1994} to our setting, we obtain a
function $\dir$ that orients the edges of $G$ such that only vertices
with degree one in $G$ are sinks or sources in the oriented graph. In
Section~\ref{sec:app:ppadgraph}, we show how to compute this
function in polynomial time.
This finally yields our main result.

\begin{theorem}\label{thm:ppad}
\CCP, \Centerpoint, \Tverberg, and \SimCenter are in $\PPAD \cap \PLS$.
\end{theorem}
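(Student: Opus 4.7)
The plan is to assemble the PPAD-membership for \CCP from the machinery built up in Sections~\ref{sec:ppad:para}--\ref{sec:ppad:graph}, derive PLS-membership from the classical potential-function proof, and then obtain the remaining three problems by appealing to known reductions. For \CCP~$\in \PPAD$, I would formally instantiate the abstract \PPAD framework using the graph $G=(V,E)$ constructed in Section~\ref{sec:ppad:graph}: candidate solutions are the encodings $\en{\sigma}$ for $\sigma \in \FS$, validity is checked in polynomial time by Lemma~\ref{lem:enc_verify}, and the standard source is chosen as the canonical element of $V_1$ (the unique $0$-simplex corresponding to a vertex of $\QQ_\DD$ on the $\e_1$-axis of $\DD$). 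The predecessor and successor functions $\pred,\suc$ are implemented by combining the traversal operations from Lemma~\ref{lem:enc_algs} with the orientation function $\dir$ that generalizes the parity-based orientation from~\cite{Papadimitriou1994}; Lemma~\ref{stm:deg} guarantees that only elements of $V_1$ and fully-labeled simplices in $V_d$ are endpoints of $G$.

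Running the standard \PPAD algorithm from the standard source must therefore terminate at another endpoint. If this endpoint lies in $V_1$, it corresponds to a different starting simplex that can be mapped back to a fully-labeled $(d-1)$-simplex by following the \PPAD path one more level up, but in fact the careful orientation ensures that the only reachable non-standard endpoints lie in $V_d$ and are fully-labeled. By Lemma~\ref{lem:fullycolored_colbasis}, the top-dimensional vertex of such a fully-labeled simplex encodes a colorful feasible basis $A_{\supp{f_{d-1}}}$ that ray-embraces $\bb$, which is precisely a solution to the \CCP instance $I$. This establishes $\CCP \in \PPAD$.

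For $\CCP \in \PLS$, I would invoke the potential-function proof of the colorful \Caratheodory theorem as a local-search problem: neighborhoods of a colorful choice $C$ consist of all choices obtained by swapping one point of $C$ with another point of the same color, and the objective is the $L_2$-distance from $\conv(C)$ (or $\pos(C)$) to $\bb$. Local optima are precisely $\0$-embracing (respectively $\bb$-ray-embracing) colorful choices, because a separating hyperplane argument exhibits an improving swap whenever the potential is strictly positive. The technical work, deferred to Section~\ref{sec:app:pls}, is to handle degenerate instances so that the neighborhood and potential are well-defined and polynomial-time computable. Finally, for \Centerpoint, \Tverberg, and \SimCenter, I would invoke the polynomial-time reductions to \CCP that are given in Section~\ref{sec:app:reductions} and which formalize the classical derivations of these theorems from the colorful \Caratheodory theorem. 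Since both \PPAD and \PLS are closed under polynomial-time reductions, membership of \CCP in $\PPAD \cap \PLS$ transfers immediately.

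The main obstacle is verifying that the orientation $\dir$ globally orients $G$ in a way that makes every degree-$2$ node have exactly one in-edge and one out-edge, so that $\pred$ and $\suc$ are well-defined inverses of each other on the paths of $G$. In the 2D Sperner setting of~\cite{Papadimitriou1994} this reduces to a simple sign computation, but in our variable simplicial subdivision $\sd\QQ_\DD$ the orientation must be defined consistently for both intra-level edges within $V_k$ (simplices sharing an almost-fully-labeled facet) and inter-level edges between $V_k$ and $V_{k+1}$ (facet containment), while being computable purely from the combinatorial encoding. Formalizing this consistency and reconciling it with a polynomial-time standard source is the heart of the argument.
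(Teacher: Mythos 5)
Your overall architecture matches the paper's proof: instantiate the \PPAD framework with the graph $G$ of Section~\ref{sec:ppad:graph}, use Lemma~\ref{lem:enc_verify} for validity, Lemma~\ref{lem:enc_algs} plus the orientation $\dir$ for $\pred$ and $\suc$, Lemma~\ref{stm:deg} to identify the endpoints, Lemma~\ref{lem:fullycolored_colbasis} to extract a colorful basis, Section~\ref{sec:app:pls} for \PLS, and the reductions of Section~\ref{sec:app:reductions} for the remaining three problems. The orientation consistency you flag as the ``main obstacle'' is indeed handled separately in the paper (Lemmas~\ref{lem:deg_orient}--\ref{stm:orientation}), so citing it is legitimate here.

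There is, however, one concrete omission. The entire construction of $\QQ_\DD$, $\sd\QQ_\DD$, and $G$ is only defined for instances satisfying the general position properties (P1) and (P2): without them $\PC$ need not be simple, Lemma~\ref{g:red} fails, and $\QQ$ need not be a polytopal complex. The theorem, by contrast, is claimed for arbitrary rational instances, and the degenerate case is unavoidable because the reductions from \Tverberg, \Centerpoint, and \SimCenter via Sarkaria's lemma produce color classes whose positive span is not full-dimensional. The paper's proof therefore first replaces $I$ by the perturbed instance $I^\apx$ of Lemma~\ref{stm:ccpgpos}, defines the valid candidate solutions as valid $k$-tuples \emph{with respect to $I^\apx$}, and uses property~\ref{p:eqsol} (via Lemma~\ref{stm:p3}) to map a colorful choice for $I^\apx$ back to one for $I$ in polynomial time; it also adds these colorful choices themselves as candidate solutions and wires each non-standard source or sink of $G$ to the corresponding colorful choice for $I$, so that the \PPAD solution is literally the desired set. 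Your proposal runs the machinery on $I$ directly and never addresses how a solution for the (necessarily perturbed) instance yields one for the original input, so as written it only establishes membership for instances already in general position. A second, smaller inaccuracy: you suggest a non-standard endpoint might lie in $V_1$ and need to be ``mapped back up,'' but $\FS_1$ consists of the single vertex of $\QQ_\DD$ at $\e_1$, so $V_1$ contains only the standard source and every other degree-one node is automatically a fully-labeled simplex in $V_d$.
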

\begin{proof}
We give a formulation of \CCP as \PPAD-problem. See Section~\ref{sec:app:pls}
for a formulation of \CCP as \PLS-problem. Using the
classic proofs discussed in Section~\ref{sec:app:reductions},
this then also implies the statement for the other problems.

The set of problem instances $\mc{I}$ consists of all tuples
$I=(C_1,\dots,C_d,\bb)$, where $d \in \N$, the set $C_i \subset \Q^d$
ray-embraces $\bb \in \Q^d$ and $\bb \neq \0$. Let $I^\apx=
(C^\apx_1,\dots,C^\apx_d,\bb^\apx)$ denote then the \CCP instance that we
obtain by applying our perturbation techniques to $I$ (see
Section~\ref{sec:app:eqccp}). Then, $I^\apx$ has the general position
properties
(P1) and (P2). The set of candidate
  solutions $\mc{S}$ consists of all tuples $(Q_0,\dots,Q_{k-1})$,
  where $k \in \N$ and $Q_i$ is a tuple 
  $\lt(S^{(i)},I^{(i)}_0,I^{(i)}_1\rt)$
  with $S^{(i)},I^{(i)}_0,I^{(i)}_1 \subset \N$. Furthermore, 
  $\mc{S}$ contains
  all $d$-subsets $C \subset \Q^d$ for $d \in \N$.
  We define the set of valid
  candidate solutions $\mc{S}_I$ for the instance $I$ to be the set of all valid
  $k$-tuples with
  respect to the instance $I^\apx$ and the set of all colorful choices with
  respect to $I$ that
  ray-embrace $\bb$, where $k \in [d]$. Let $s \in \mc{S}$ be a candidate
  solution. If it is a tuple, we first use the algorithm from
  Lemma~\ref{lem:enc_verify} to check in polynomial time in the length of
  $I^\apx$ and hence in the length of $I$ whether $s \in \mc{S}_I$.
  If affirmative, we check whether the simplex has all or all but the 
  largest possible label. Using the encoding, this can be carried 
  out in polynomial time.  If $s$ is a set of points, we can determine
  in polynomial time with linear programming whether the points in $s$
  ray-embrace $\bb$.

  We set as standard source the $0$-simplex $\{\e_1\}$. We can assume
  without loss of generality that $\{\e_1\}$ is a source (otherwise we
  invert the orientation).

  Given a valid candidate solution $s \in \mc{S}_I$, we compute its
  predecessor and successor with the algorithms from
  Lemma~\ref{lem:enc_algs} and the orientation function discussed
  above, with one
  modification: if a node $s \in V$ is a source different from the
  standard source in the graph $G$, it encodes by
  the above discussion
  a colorful choice $C^\apx$ that
  ray-embraces $\bb^\apx$.  Let $C$ be the corresponding colorful
  choice for $I$ that ray-embraces $\bb$.  Then, we set the
  predecessor of $s$ to $C$. The properties of our perturbation 
  ensure that we can compute $C$ in
  polynomial time. Similarly, if $s$ is a sink in $G$, we set its
  successor to the corresponding solution for the instance $I$.
\end{proof}

\section{A Polynomial-Time Case}
\label{sec:halfhalf}
We show that for a special case of \CCP, our formulation
of \CCP as a \PPAD problem has algorithmic implications. 
Let $C_1,C_2 \in
\R^d$ be two color classes and let $C \subseteq
C_1 \cup C_2$ be a set. We call $C$ an 
\emph{$(k,d-k)$-colorful
choice} for $C_1$ and $C_2$ if there are two subsets
$C'_1 \subseteq C_1$, $C_2' \subseteq C_2$ with $|C'_1| \leq k$
and $|C'_2| \leq d-k$.
Now, given two color classes $C_1, C_2$ that each
ray-embrace a point $\bb \in \R^d$ and a number $k \in [d]_0$, 
we want to find an $(k,d-k)$-colorful choice that ray-embraces $\bb$.
It is a straightforward
consequence of the colorful \Caratheodory theorem that such 
a colorful choice always exists.

Using our techniques from Section~\ref{sec:ccpppad}, we present a weakly
polynomial-time algorithm for this case. As described in
Section~\ref{sec:ppad:para}, we construct
implicitly a $1$-dimensional polytopal complex, where at least one edge
corresponds to a solution. Then, we apply binary search to
find this edge. Since the length of the edges can be exponentially 
small in the
length of the input, this results in a weakly polynomial-time algorithm.

\begin{theorem}
  Let $\bb \in \Q^d$ be a point and let $C_1, C_2 \subset \Q^d$ be 
  two sets of size $d$ that ray-embrace $\bb$.  Furthermore, 
  let $k \in [d-1]$ be a parameter. Then, there is an algorithm that 
  computes a $(k, d-k)$-colorful choice $C$ that ray-embraces $\bb$ 
  in weakly-polynomial time.
\end{theorem}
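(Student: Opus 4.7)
The plan is to specialize the \PPAD construction of Section~\ref{sec:ccpppad} to the two-color setting and exploit symmetry so that the polytopal complex $\QQ_\DD$ becomes one-dimensional, on which a binary search can then efficiently locate a solution. First I reduce to CCP by taking $k$ copies of $C_1$ and $d-k$ copies of $C_2$, forming $I = (D_1, \ldots, D_d, \bb)$ with $D_i = C_1$ for $i \leq k$ and $D_i = C_2$ for $i > k$. Each $D_i$ ray-embraces $\bb$, so $I$ is a valid CCP instance; any colorful choice for $I$ picks one point from each $D_i$ and hence contains at most $k$ distinct points of $C_1$ and at most $d-k$ of $C_2$, giving a $(k, d-k)$-colorful choice. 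After applying the perturbation of Section~\ref{sec:app:eqccp} to $I$ to enforce (P1) and (P2), I exploit the fact that $c_\mm$ from~\eqref{eq:costs} depends on a column only through its color: the instance $I$ is invariant under permutations of $\{D_1,\ldots,D_k\}$ and, independently, of $\{D_{k+1},\ldots,D_d\}$, so I restrict the parameter space $\R^d$ to the two-dimensional symmetric subspace $U$ of vectors $\mm$ with $\mm_1 = \cdots = \mm_k = \mu_1$ and $\mm_{k+1} = \cdots = \mm_d = \mu_2$. Then $\MM \cap U$ is a piecewise-linear curve of dimension one, and the restriction of $\QQ_\DD$ to $U$ is a one-dimensional polytopal complex subdividing a single edge of $\DD$.

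I next argue that one-dimensional Sperner along this curve is enough. By Lemma~\ref{lem:colors}, at the endpoint $(\mu_1,\mu_2)=(1,0)$ of $\MM \cap U$ every optimal basis lies entirely in the $D_i$ with $i \leq k$, so the associated face receives a label in $[k]$ under~\eqref{eq:ppad:labeling}; at the other endpoint $(0,1)$ the label lies in $\{k+1,\ldots,d\}$. By a one-dimensional intermediate-value argument, some edge of $\QQ_\DD \cap U$ has its two endpoints labeled in different groups. The analog of Lemma~\ref{lem:fullycolored_colbasis} in the symmetric subspace then shows that the top-dimensional face of $\FF$ associated to the barycenter of that edge provides a colorful feasible basis of $I$, from which the desired $(k, d-k)$-colorful choice is read off.

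To locate such an edge I run a binary search on a parametrization $t \in [0,1]$ of $\MM \cap U$, with $t=0$ and $t=1$ at the two label-extremes. I maintain an interval $[t_\ell, t_r]$ whose endpoints still carry labels in different groups; at each iteration I solve the LP $L_\mm$ at the midpoint $t_m$ in polynomial time, compute the optimal basis and read off its label via~\eqref{eq:ppad:labeling}, and replace $t_\ell$ or $t_r$ by $t_m$ to preserve the invariant. I halt once $t_\ell$ and $t_r$ lie in the same edge of $\QQ_\DD \cap U$, which I detect by checking that the two LP queries return the same optimal basis; that basis then yields the output.

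The principal obstacle is bounding the number of binary-search iterations. Each breakpoint of $\QQ_\DD \cap U$ is the unique root in $t$ of an equation $(\er_{B, c_\mm})_j = 0$, whose coefficients depend rationally on subdeterminants of the matrix $A$. In the spirit of Section~\ref{sec:app:eqccp}, the hard step is to show, via an explicit polynomial-bit-complexity perturbation of $I$, that the spacing between any two consecutive breakpoints along $\MM \cap U$ is at least $2^{-\poly(L)}$, where $L$ is the bit-length of the input. Combined with the polynomial cost of each LP query, this gives the claimed weakly polynomial running time. The subtlety is that the duplicated color classes of $I$ introduce strong degeneracies, so the perturbation must be designed both to respect the symmetry that defines $U$ and to separate the breakpoints along $\MM \cap U$.
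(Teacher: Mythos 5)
Your high-level plan --- collapse the problem to a one-dimensional slice of the parameter space and binary-search along it --- is the same as the paper's, but both your reduction and your correctness argument have genuine gaps. On the reduction: you duplicate $C_1$ and $C_2$ into $k$ and $d-k$ copies and restrict to the symmetric subspace $U$. The slice $\MM\cap U$ is not a union of faces of $\MM$ (unless $k=1$), so the structural results about $\QQ$ --- which is built from intersections of parameter regions with \emph{faces} of $\MM$ --- do not apply to it, and the duplicated columns break the general-position machinery (Lemma~\ref{g:red}, Observation~\ref{g:basis}) unless one designs a new, symmetry-respecting perturbation, which you explicitly defer as ``the hard step.'' The paper sidesteps all of this: it keeps $C_1,C_2$ as colors $1$ and $2$, adds $d-2$ dummy singleton classes $C_3=\dots=C_d=\{\bb\}$, and works on the genuine face $\conv(\e_1,\e_2)$ of $\DD$, so that $\QQ_{\DD_1}$ is an honest one-dimensional subcomplex and Lemma~\ref{lem:colors} forces optimal bases there to use only columns of $C_1\cup C_2$.

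On correctness: you group the Sperner labels into $[k]$ versus $\{k+1,\dots,d\}$ and invoke an unproven ``analog of Lemma~\ref{lem:fullycolored_colbasis}.'' This is the crux, and it does not obviously hold: the label records which color is \emph{most frequent} in a support, and an edge whose endpoints carry labels in different groups need not correspond to a basis with exactly $k$ columns from $C_1$. The paper instead tracks the count $\lt|\ind{C_1}\cap\supp{f}\rt|$ directly: it is $d$ at $\e_1$ and $0$ at $\e_2$, and supports of adjacent edges of $\QQ_{\DD_1}$ differ in at most one column (Lemma~\ref{lem:edges}), so a discrete intermediate-value argument (Corollary~\ref{stm:halfhalf_binary}) forces some edge to realize exactly $k$. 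This also yields the correct invariant and stopping rule (solve the LP at the midpoint and compare the count to $k$); your proposed termination test --- both endpoints returning the same optimal basis --- contradicts your own invariant that the endpoint labels lie in different groups. Finally, the iteration bound you leave open is exactly Lemma~\ref{lem:halfhalf_edgelength} in the paper, proved via the bit-complexity bound of Lemma~\ref{lem:lpsol} on the vertices of the parameter regions, a bound that is available precisely because the construction avoids the degeneracies introduced by duplication.
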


For Sperner's lemma, it is well-known that a fully-labeled simplex 
can be found if there are only two labels by binary search. 
Essentially, this is also what the presented algorithm
does: reducing the problem to Sperner's lemma and then applying 
binary search to find the right simplex. Since the computational 
problem Sperner is \PPAD-complete even for $d=2$, a polynomial-time 
generalization of this approach to three colors must use specific 
properties of the colorful \Caratheodory instance
under the assumption that no \PPAD-complete problem can be solved in
polynomial time.

\section{Conclusion}
\label{sec:conclusion}

We have shown that \CCP lies in the intersection of \PPAD and \PLS.
This also immediately implies that several
illustrious problems associated with \CCP, such as finding centerpoints
or Tverberg partitions, belong to $\PPAD \cap \PLS$.

Previously, the intersection $\PPAD \cap \PLS$ has
been studied in the context of \emph{continuous
local search}: Daskalakis and Papadimitriou~\cite{DaskalakisPa11}
define a subclass $\CLS \subseteq \PPAD \cap \PLS$ that 
``captures a particularly benign kind of local optimization''.
Daskalakis and Papadimitriou describe several interesting
problems that lie in \CLS but are not known to
be solvable in polynomial time.
Unfortunately, our results do not show that
\CCP lies in \CLS, since we reduce \CCP in $d$ dimensions to
Sperner in $d-1$ dimensions, and since Sperner is not known 
to be in \CLS. Indeed, if Sperner's lemma could
be shown to be in \CLS, this would imply that
$\PPAD = \CLS \subseteq \PLS$, solving a major
open problem. Thus, showing that \CCP lies
in \CLS would require fundamentally new ideas, maybe
exploiting the special structure of the resulting 
Sperner instance. On the other hand,
it appears that Sperner is a more difficult problem than
\CCP, since Sperner is \PPAD-complete for every fixed dimension
larger than $1$, whereas \CCP becomes hard only in unbounded
dimension.
On the positive side, our perturbation results show that 
a polynomial-time algorithm for \CCP, even under strong
general position assumptions, would
lead to polynomial-time algorithms for several well-studied problems 
in high-dimensional computational geometry.

Finally, it would also be interesting to find further special
cases of \CCP that are amenable to polynomial-time solutions.
For example, can we extend our algorithm for two color classes to
\emph{three} color classes? We expect this to be difficult, due to
an analogy between 1D-Sperner, which is in \textsf{P},  and 2D-Sperner, 
which
is \PPAD-complete. However, there seems to be
no formal justification for this intuition.
\bibliographystyle{plain}

\clearpage
\appendix

\begin{table}\small
\begin{center}
\renewcommand{\arraystretch}{1.3}
\begin{tabularx}{\textwidth}{|c|X|}
  \hline
    \textbf{Symbol} &
    \multicolumn{1}{c|}{\textbf{Definition}} 
  \\ \hline \hline
    $C_i$ & The $i$th color class. The $d$-set $C_i \subset \R^d$
    ray-embraces $\bb$.
  \\ \hline
    $A$ & The $(d\times d^2)$-matrix with $C_1$ as first $d$ columns, $C_2$ as second
    $d$ columns, and so on.
  \\ \hline
    $c_\mm$ & The cost vector parameterized by a parameter vector $\mm \in \R^d$.
    See (\ref{eq:costs}).
  \\ \hline
    $\LC$; $\LC_\mm$ & $\LC$ refers to the linear system $A \xx = \bb$,
    $\xx\geq \0$ (see~\ref{eq:lp}). $\LC_\mm$ denotes the linear program $\max
    c_\mm^T \xx$ s.t.\ $\LC$.
  \\ \hline
    $\PC$ & The polytope defined by $\LC$.
  \\ \hline
    $f$; $\supp{f}$; $\ind{B}$ & For a face $f \subseteq \PC$, we denote with
    $\supp{f}$ the
    indices of the columns in $A$ that define it. For a set of
    columns $B$ of $A$, we denote with $\ind{B}$ the indices of these columns.
  \\ \hline
    $\Phi(f)$; $\LR_{B,f}$ & For a face $f$ of $\PC$, $\Phi(f)$ denotes the set
    of parameter vectors $\mm \in \R^d$ such that $f$ is optimal for $\LC_\mm$.
    The set $\Phi(f)$ can be described as the solution space to the linear
    system $\LR_{B,f}$, where $B$ is a feasible basis of a vertex of $f$.
  \\ \hline
    $\MM$ & The set $\MM$ contains all faces from the unit cube in $\R^d$ that set at
    least one coordinate to $1$. Parameters from $\MM$ control the colors of the
    defining columns of optimal faces (see Lemma~\ref{lem:colors}).
  \\ \hline
    $\FF$ & The set of faces $f$ of $\PC$ of that are optimal for some parameter vector
    in $\MM$, i.e., the set of faces $f$ with $\Phi(f) \cap \MM \neq
    \emptyset$. $\FF$ is a $(d-1)$-dimensional polyhedral complex.
  \\ \hline
    $\QQ$ & The $(d-1)$-dimensional polytopal complex that consists of all
    elements $q = \Phi(f) \cap g$, where $f \in \FF$ and $g$ is a
    face of $\MM$.
  \\ \hline
    $\DD$; $\DD_{[k]}$ & $\DD$ denotes the $(d-1)$-dimensional standard
    simplex and $\DD_{[k]}$ denotes the face $\conv\{\e_i \mid i \in [k] \}$ of
    $\DD$.
  \\ \hline
    $\SS$ & The set $\SS$ contains the central
    projections of the faces of $\MM$ onto $\DD$ with the origin as center.
  \\ \hline
    $\Phi_\DD$; $\QQ_\DD$ & $\Phi_\DD(f)$ denotes the central projection of
    $\Phi(f)\cap
    \MM$ onto $\DD$ with center $\0$. The $(d-1)$-dimensional polytopal complex
    $\QQ_\DD$
    consists of the projections of the elements in $\QQ$ onto $\DD$. Each
    element $q$ of
    $\QQ_\DD$ can be uniquely written as $q = \Phi_\DD(f) \cap g$, where $f \in
    \FF$ and $g \in \SS$.
  \\ \hline
    $\lambda$ & The labeling function, see (\ref{eq:ppad:labeling}).
  \\ \hline
    $\FS$; $\FS_k$; $\en{\sigma}$ & The set $\FS_k$, $k \in [d]$, consists of all
    $(k-1)$-simplices in $\sd \QQ_\DD$ that are contained in the face
    $\DD_{[k]}$ of $\DD$. The set $\FS$ is the union of all $\FS_k$. For a
    simplex $\sigma \in \FS$, we denote with $\en{\sigma}$ its combinatorial
    encoding (see (\ref{eq:enc:simplex})).
  \\ \hline
\end{tabularx}
\end{center}
\caption{Notation reference.}
\label{tab:notation}
\end{table}

\clearpage

\section{Polynomial-Time Reductions to the Colorful \Caratheodory Problem}
\label{sec:app:reductions}

We begin by presenting the proofs of the centerpoint theorem, Tverberg's theorem, and
the first selection lemma that use the colorful \Caratheodory theorem.
Afterwards, we show that these proofs can be interpreted as
polynomial-time reductions to the corresponding computational problems.

Let $P \subset\R^d$ be a point set. We say a point $\cc \in \R^d$ has
\emph{Tukey depth} $\tau$ with respect to $P$ if and only if all
closed halfspaces that contain $\cc$ also contain at least $\tau$
points from $P$. The centerpoint theorem guarantees that there always
exist points with large Tukey depth.

\begin{theorem}[{Centerpoint theorem~\cite[Theorem~1]{Rado1946}}]\label{thm:centerpoint}
 Let $P \subset \R^d$ be a point set. Then, there exists a point $\qq \in
 \R^d$ with Tukey depth
 $\tau \geq \left\lceil\frac{|P|}{d+1}\right\rceil$. \qed
\end{theorem}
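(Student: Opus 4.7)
The plan is to derive the centerpoint theorem from Tverberg's theorem, which in the subsequent parts of this section will itself be derived from the colorful \Caratheodory theorem (following Sarkaria's construction). Set $n = |P|$ and $r = \lceil n/(d+1)\rceil$. Since $r(d+1) \geq n$ and $(r-1)(d+1) < n$, we have $n \geq (r-1)(d+1) + 1$, so we may pick any subset $P' \subseteq P$ of size $(r-1)(d+1) + 1$ and invoke Tverberg's theorem on $P'$.

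This yields a partition $P' = P'_1 \sqcup \dots \sqcup P'_r$ together with a common point $\qq \in \bigcap_{i=1}^{r} \conv(P'_i)$. I claim $\qq$ has Tukey depth at least $r$ with respect to $P$, which immediately gives the theorem. To verify the claim, fix any closed halfspace $H \subseteq \R^d$ containing $\qq$ and consider any index $i \in [r]$. Since $\qq \in \conv(P'_i) \cap H$ and the complement $\R^d \setminus H$ is open and convex, the set $P'_i$ cannot lie entirely in $\R^d \setminus H$; otherwise $\conv(P'_i)$ would lie in $\R^d \setminus H$ as well, contradicting $\qq \in H$. Hence each $P'_i$ contributes at least one point to $P \cap H$, and since the $P'_i$ are pairwise disjoint we obtain $|P \cap H| \geq r$.

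There is essentially no obstacle once Tverberg's theorem is on hand: the only arithmetic care needed is choosing $r = \lceil n/(d+1)\rceil$ so that $P$ is large enough for Tverberg to apply, and the halfspace argument is a direct convexity observation. From the algorithmic viewpoint (which is the true purpose of this section), the argument is moreover a polynomial-time Turing reduction from \Centerpoint to \Tverberg: given $P$, compute $r$, select any $(r-1)(d+1)+1$ points, and feed them to a solver for \Tverberg; the returned Tverberg point is a valid centerpoint, and its Tukey depth can be certified in polynomial time via linear programming if desired.
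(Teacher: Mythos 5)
Your proof is correct and follows essentially the same route as the paper: deriving the centerpoint theorem from Tverberg's theorem (Theorem~\ref{thm:tverberg}) and observing that every closed halfspace containing the common point of the Tverberg partition must meet each block, so the point has Tukey depth at least $\left\lceil |P|/(d+1)\right\rceil$. The only cosmetic difference is that you first pass to a subset of size $(r-1)(d+1)+1$ rather than applying Tverberg's theorem to $P$ directly, which changes nothing of substance.
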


We call a partition of $P$ into $m$ sets $T_1,\dots,T_m$ a
\emph{Tverberg $m$-partition} if and only if $\bigcap_{i=1}^m
\conv(T_i) \neq \emptyset$. Tverberg's theorem guarantees that there
are always large Tverberg partitions.

\begin{theorem}[Tverberg's theorem~\cite{Tverberg1966}]\label{thm:tverberg}
  Let $P \subset \R^d$ be a point set of size $n$. Then, there always exists a
  Tverberg $\left\lceil\frac{|P|}{d+1}\right\rceil$-partition for $P$.
  Equivalently, let $P$ be of size $(m-1)(d+1)+1$ with $m \in \N$. Then, there
  exists a Tverberg $m$-partition for $P$.
\end{theorem}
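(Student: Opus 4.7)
The plan is to use Sarkaria's classical reduction of Tverberg's theorem to the (convex) colorful \Caratheodory theorem. First I would check the equivalence of the two formulations: if $n=(m-1)(d+1)+r$ with $1\le r\le d+1$, then $\lceil n/(d+1)\rceil=m$, and any Tverberg $m$-partition of an arbitrary $N$-subset of $P$, where $N:=(m-1)(d+1)+1$, extends to one of $P$ by distributing the remaining $r-1$ points among the blocks arbitrarily. Hence it suffices to treat the case $|P|=N$.

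For the reduction, lift each $p_i\in P$ to $\UP{p}_i=(p_i,1)\in\R^{d+1}$ and fix vectors $v_1,\dots,v_m\in\R^{m-1}$ with $\sum_{j=1}^m v_j=\0$ and with any $m-1$ of them linearly independent (for instance, the vertices of a regular simplex centred at the origin). For each $i\in[N]$, define the color class
\[
 C_i=\{\UP{p}_i\otimes v_j : j\in[m]\}\subset\R^{(d+1)(m-1)}=\R^{N-1},
\]
where $\otimes$ denotes the Kronecker product. Since $\sum_j v_j=\0$, the uniform convex combination of $C_i$ equals $\UP{p}_i\otimes\0=\0$, so each $C_i$ embraces the origin. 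We now have $N$ color classes in $\R^{N-1}$, so the convex colorful \Caratheodory theorem (derived as discussed just after Theorem~\ref{thm:colcara}) yields a rainbow $\{\UP{p}_i\otimes v_{j(i)} : i\in[N]\}$ and coefficients $\alpha_i\ge 0$, $\sum_i\alpha_i=1$, whose weighted sum is $\0$.

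It remains to decode the rainbow into a Tverberg partition. Let $T_k=\{i\in[N] : j(i)=k\}$ and $u_k=\sum_{i\in T_k}\alpha_i\UP{p}_i\in\R^{d+1}$. Viewing every tensor $\UP{p}_i\otimes v_{j(i)}$ as the rank-one matrix $\UP{p}_i v_{j(i)}^T$, the rainbow identity becomes $\sum_{k=1}^m u_k v_k^T=0$. Eliminating $v_m=-\sum_{k<m}v_k$ gives $\sum_{k=1}^{m-1}(u_k-u_m)v_k^T=0$, and the linear independence of $v_1,\dots,v_{m-1}$ forces $u_1=\dots=u_m$. Reading off the last coordinate of $u_k$ yields $\sum_{i\in T_k}\alpha_i=1/m$ for every $k$, so in particular no $T_k$ is empty; reading off the first $d$ coordinates yields a common point $q=m\sum_{i\in T_k}\alpha_i p_i\in\conv\{p_i : i\in T_k\}$. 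Hence $q$ lies in $\bigcap_{k=1}^m\conv(T_k)$ and $(T_1,\dots,T_m)$ is the desired Tverberg $m$-partition.

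The main obstacle is the decoding step: one has to recognise that rewriting the rainbow identity as a matrix equation $\sum_k u_k v_k^T=0$ and combining it with $\sum_j v_j=\0$ rigidly forces all block-sums $u_k$ to coincide, and that this single coincidence simultaneously provides the common Tverberg point (from the first $d$ coordinates of $u_k$) and the non-emptiness of each block (from the homogenising coordinate). Once this rigidity is extracted, no further bookkeeping is needed.
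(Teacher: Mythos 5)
Your proof is correct and follows essentially the same route as the paper: Sarkaria's tensor-lifting reduction of Tverberg's theorem to the (convex) colorful \Caratheodory theorem, with one color class per point of $P$ consisting of its $m$ lifted copies. The only presentational difference is that you inline the decoding direction of Sarkaria's lemma (the rigidity argument forcing $u_1=\dots=u_m$ and $\sum_{i\in T_k}\alpha_i=1/m$), which the paper isolates as Lemma~\ref{lem:sarkaria} and invokes as a black box.
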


Note that Theorem~\ref{thm:tverberg} directly implies Theorem~\ref{thm:centerpoint}.
A point $\cc$ in the intersection of a Tverberg
$\left\lceil\frac{|P|}{d+1}\right\rceil$-partition has Tukey depth at
least $\left\lceil\frac{|P|}{d+1}\right\rceil$ since every halfspace
that contains $\cc$ must contain at least one point from each set in
the Tverberg partition. We present
Sarkaria's proof of Tverberg's theorem~\cite{Sarkaria1992} with
further simplifications by \Barany and Onn~\cite{BaranyOn1997} and
Arocha \etal~\cite{ArochaBaBrFaMo2009}.
The main tool is the following lemma that establishes a notion of
duality between the intersection of convex hulls of low-dimensional point sets
and the embrace of the origin of corresponding high-dimensional
point sets. It was extracted from Sarkaria's proof
by Arocha \etal~\cite{ArochaBaBrFaMo2009}. In the following, we denote with $\otimes$
the tensor product.

In the following, we denote with $\otimes$ the
binary function that maps two points $\pp \in \R^d$, $\qq \in \R^m$ to the point
\[
\pp \otimes \qq =
\begin{pmatrix}
  (\qq)_1 \pp \\
  (\qq)_2 \pp \\
  \vdots \\
  (\qq)_m
  \pp
\end{pmatrix} \in \R^{dm}.
\]
It is easy to verify that $\otimes$ is bilinear, i.e., for all $\pp_1,\pp_2 \in
\R^d$, $\qq \in \R^m$, and $\alpha_1,\alpha_2 \in \R$, we have
\[
\lt(\alpha_1 \pp_1 + \alpha_2 \pp_2\rt) \otimes \qq
= \alpha_1 \lt(\pp_1 \otimes \qq\rt) + \alpha_2 \lt(\pp_2 \otimes \qq\rt)
\]
and similarly, for all $\pp \in \R^d$, $\qq_1,\qq_2 \in \R^m$, and
$\alpha_1,\alpha_2 \in \R$, we have
\[
\pp \otimes \lt(\alpha_1 \qq_1 + \alpha_2 \qq_2\rt)
= \alpha_1 \lt(\pp \otimes \qq_1\rt) + \alpha_2 \lt(\pp \otimes \qq_2\rt).
\]

\begin{lemma}[Sarkaria's lemma~\cite{Sarkaria1992},~{\cite[Lemma~2]{ArochaBaBrFaMo2009}}]
\label{lem:sarkaria}
Let $P_1,\dots,P_m \subset \R^d$ be $m$ point sets and let $\qq_1,\dots,\qq_{m}
\subset \R^{m-1}$ be $m$ vectors with $\qq_i = \e_i$ for $i \in [m-1]$ and
$\qq_m = -\1$. For $i \in [m]$, we define
\[
  \UP{P}_i = \set{\TwoRowVec{\pp}{1} \otimes \qq_i
  \midd \pp \in P_i}\subset \R^{(d+1) (m-1)}.
\]

Then, the intersection of convex hulls $\bigcap_{i=1}^m \convv{P_i}$ is nonempty
if and only if $\;\bigcup_{i=1}^m \UP{P}_i$ embraces the origin.
\end{lemma}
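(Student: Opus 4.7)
The plan is to exploit two algebraic features of the vectors $\qq_1,\dots,\qq_m$: first, they satisfy $\sum_{i=1}^m \qq_i = \0$ (since $\qq_m = -\1 = -\sum_{i=1}^{m-1}\e_i$), and second, $\qq_1,\dots,\qq_{m-1}$ form the standard basis of $\R^{m-1}$, so the tensor product $\TwoRowVec{\xx}{s}\otimes \qq_i$ for $i \in [m-1]$ places the vector $\TwoRowVec{\xx}{s}$ into the $i$th block of coordinates and leaves the other blocks at $\0$. Combined with the bilinearity of $\otimes$ already recorded in the excerpt, these two facts turn both directions of the equivalence into routine block-wise computations.

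For the forward direction, suppose $\cc \in \bigcap_{i=1}^m \convv{P_i}$. Then for each $i$ there exist convex coefficients $\alpha_{i,\pp} \geq 0$ with $\sum_{\pp \in P_i}\alpha_{i,\pp} = 1$ and $\cc = \sum_{\pp \in P_i}\alpha_{i,\pp}\pp$. I would use the uniform scaling $\beta_{i,\pp} = \alpha_{i,\pp}/m$, so that $\sum_{i,\pp}\beta_{i,\pp} = 1$, and then compute via bilinearity
\[
\sum_{i=1}^m \sum_{\pp \in P_i} \beta_{i,\pp}\lt(\TwoRowVec{\pp}{1}\otimes \qq_i\rt)
= \frac{1}{m}\sum_{i=1}^m \TwoRowVec{\cc}{1}\otimes \qq_i
= \frac{1}{m}\TwoRowVec{\cc}{1}\otimes \sum_{i=1}^m \qq_i = \0,
\]
which exhibits $\0$ as a convex combination of points of $\bigcup_i \UP{P}_i$.

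For the reverse direction, assume $\0 = \sum_{i,\pp} \beta_{i,\pp}(\TwoRowVec{\pp}{1}\otimes \qq_i)$ with nonnegative coefficients summing to $1$. Writing $s_i = \sum_{\pp \in P_i}\beta_{i,\pp}$ and $\xx_i = \sum_{\pp \in P_i}\beta_{i,\pp}\pp$, bilinearity gives $\0 = \sum_{i=1}^m \TwoRowVec{\xx_i}{s_i}\otimes \qq_i$. Now I would read off the $i$th block of coordinates for $i \in [m-1]$: since only the $i$th and $m$th summands contribute (via the basis vector $\e_i$ and the entry $-1$ of $\qq_m$), this block equals $\TwoRowVec{\xx_i}{s_i} - \TwoRowVec{\xx_m}{s_m}$, which must vanish. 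Hence $s_i = s_m$ and $\xx_i = \xx_m$ for every $i$, and together with $\sum_i s_i = 1$ this forces $s_i = 1/m > 0$. Setting $\cc = m\xx_m$ and $\alpha_{i,\pp} = m\beta_{i,\pp}$ expresses $\cc$ as a convex combination of $P_i$ simultaneously for every $i \in [m]$, placing $\cc$ in the intersection.

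I do not expect any real obstacle: the heart of the argument is simply noticing that the special choice of $\qq_i$ makes the tensor equation decouple into $m-1$ independent block equations whose solution is precisely "common barycenter with equal total weight''. The only thing to be careful about is ensuring the weights are all strictly positive at the end so that $\xx_i/s_i$ is a legitimate convex combination, which is handled by the normalization $\sum_{i,\pp}\beta_{i,\pp} = 1$ forcing $s_i = 1/m$.
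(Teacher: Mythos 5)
Your proof is correct and follows essentially the same route as the paper: the forward direction is the identical bilinearity computation using $\sum_{i=1}^m \qq_i = \0$, and your block-wise reading of the reverse direction is just an explicit verification of the paper's terser claim that the only linear dependency among $\qq_1,\dots,\qq_m$ is $\sum_i \qq_i = \0$, leading to the same conclusion $s_i = 1/m$ and the common point $m\xx_m$.
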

\begin{proof}
  Assume there is a point $\pp^\star \in \bigcap_{i=1}^m \convv{P_i}$.
  For $i \in [m]$ and $\pp \in P_i$, there then exist coefficients
  $\lambda_{i,\pp} \in \Rp$ that sum to $1$ such that
  $\pp^\star = \sum_{\pp \in P_i} \lambda_{i,\pp}$.
  Consider the points $\up{\pp}_i \in \convv{\UP{P}_i}$, $i \in [m]$, that we
  obtain by using the same convex coefficients for the points in $\UP{P}_i$,
  i.e., set
\[
  \up{\pp}_{i}
  = \sum_{\pp \in P_i} \lambda_{i,\pp}
          \lt(\TwoRowVec{\pp}{1} \otimes \qq_i\rt) \in \convv{\UP{P}_i}.
\]
We claim that $\sum_{i=1}^{m} \up{\pp}_i = \0$ and thus
$\0 \in \convv{\bigcup_{i=1}^{m} \UP{P}_i}$. Indeed, we have
\begin{multline*}
  \sum_{i=1}^{m} \up{\pp}_i = \sum_{i=1}^{m} \sum_{\pp \in P_i} \lambda_{i,\pp}
  \left( \TwoRowVec{\pp}{1} \otimes \qq_i \right)
  = \sum_{i=1}^{m} \left( \sum_{\pp \in P_i} \lambda_{i,\pp}
  \TwoRowVec{\pp}{1} \right) \otimes \qq_i
  = \sum_{i=1}^{m} \TwoRowVec{\pp^\star}{1} \otimes \qq_i
  \\
  = \TwoRowVec{\pp^\star}{1} \otimes \left( \sum_{i=1}^{m} \qq_i \right)
  = \TwoRowVec{\pp^\star}{1} \otimes \0 = \0,
\end{multline*}
where we use the fact that $\otimes$ is bilinear.

Assume now that $\bigcup_{i=1}^m \UP{P}_i$ embraces the origin and we want to
show that $\bigcap_{i=1}^m \convv{P_i}$ is nonempty. Then, we can express the
origin as a convex combination $\sum_{i=1}^{m}
\sum_{\up{\pp} \in \UP{P}_i} \lambda_{i,\up{\pp}} \up{\pp}$ with
$\lambda_{i,\up{\pp}} \in\Rp$ for $i \in
[m]$ and $\up{\pp} \in \UP{P}_i$, and $\sum_{i=1}^{m} \sum_{\up{\pp} \in
\UP{P}_i} \lambda_{i,\up{\pp}} = 1$. Hence, we have
\[
\0 = \sum_{i=1}^{m} \sum_{\up{\pp} \in \UP{P}_i} \lambda_{i,\up{\pp}}
\left( \TwoRowVec{\pp}{1} \otimes \qq_i \right)
 = \sum_{i=1}^{m} \left( \sum_{\up{\pp} \in \UP{P}_i} \lambda_{i,\up{\pp}}
 \TwoRowVec{\pp}{1} \right) \otimes \qq_i,
\]
where we use again the fact that $\otimes$ is bilinear.
By the choice of $\qq_1,\dots,\qq_m$, there is (up to multiplication with a scalar)
exactly one linear dependency: $\0 = \sum_{i=1}^{m} \qq_i$.
Thus,
\[
\sum_{\up{\pp} \in \UP{P}_1} \lambda_{1,\up{\pp}} \TwoRowVec{\pp}{1}
= \dots =
\sum_{\up{\pp} \in \UP{P}_m} \lambda_{m,\up{\pp}} \TwoRowVec{\pp}{1} =
\TwoRowVec{\pp^\star}{c},
\]
where $\pp^\star \in \R^d$ and $c \in \R$. In particular, the last equality
implies that
\[
\sum_{\up{\pp} \in \UP{P}_1} \lambda_{1,\up{\pp}}
= \dots =
\sum_{\up{\pp} \in \UP{P}_m} \lambda_{m,\up{\pp}}
= c.
\]
Now, since for all $i \in [m]$ and $\up{\pp} \in \UP{P}_i$, the coefficient
$\lambda_{i,\up{\pp}}$ is nonnegative and since the sum $\sum_{i \in [m]}
\sum_{\up{\pp} \in \UP{P}_i} \lambda_{i,\up{\pp}}$ is $1$, we must have $c =
1/m \in (0,1]$. Hence,
the point $m \pp^\star$ is common to all convex
hulls $\convv{P_1},\dots,\convv{P_m}$.
\end{proof}

Little work is now left to obtain Tverberg's theorem from
Lemma~\ref{lem:sarkaria} and the colorful \Caratheodory theorem.

\begin{proof}[Proof of Theorem~\ref{thm:tverberg}]\label{thm:tverberg:proof}
Let $P = \set{\pp_1,\dots,\pp_n} \subset \R^d$ be a point set of size
$n=(d+1)(m-1)+1$ and let $P_1,\dots,P_m$ denote $m$ copies of $P$.
For each set $P_j \subset \R^d$, $j \in [m]$, we construct a $((d+1)
(m-1))$-dimensional set $\UP{P}_j$ as in Lemma~\ref{lem:sarkaria}, i.e.,
\[
  \UP{P}_j = \set{ \up{\pp}_{i,j} = \pp_i \otimes \qq_j \midd \pp_i \in P}
  \subset \R^{(d+1) (m-1)} = \R^{n-1}.
\]
For $i \in [n]$, we denote with $\UP{C}_i \subseteq \bigcup_{j=1}^m \UP{P}_j$
the set of points $\set{\up{\pp}_{i,j}  \midd j \in [m]}$ that correspond to
$\pp_i \in P$
and we color these points with color $i$. For $i \in [n]$, note that
Lemma~\ref{lem:sarkaria} applied to $m$ copies of the singleton set $\set{\pp_i}
\subseteq P$ guarantees that the color class $\UP{C}_i \in
\R^{n-1}$ embraces the origin. Hence, we have $n$ color classes
$\UP{C}_1,\dots,\UP{C}_n$ that embrace the origin in $\R^{n-1}$. Now,
by Theorem~\ref{thm:colcara}, there is a
colorful choice $\UP{C} = \set{\up{\cc}_1,\dots,\up{\cc}_n} \subseteq
\bigcup_{i=1}^n \UP{C}_i$ with $\up{\cc}_i \in \UP{C}_i$ that embraces the
origin, too. Because $\UP{C}$ embraces the origin,
Lemma~\ref{lem:sarkaria} guarantees that the convex hulls of the sets $T_j = \set{
\pp_i \in P \midd \up{\pp}_{i,j} \in \UP{C}}$, $j \in [m]$, have a point in
common.
Moreover, since all points in $\bigcup_{j=1}^m \UP{P}_j$
that correspond to the same point in $P$ have the same
color, each point $\pp_i \in P$ appears in exactly one set $T_j$, $j \in [m]$.
Thus, $\mc{T} =\set{T_1,\dots, T_j}$ is a Tverberg $m$-partition of $P$.
\end{proof}

Similar to the Tukey depth, the simplicial depth is a further notion of data
depth. Let again be $P \subset \R^d$ be a point set and $\qq \in \R^d$ a point.
Then, the \emph{simplicial depth} $\delta_P(\qq)$ of $\qq$ with respect to $P$
is the number of distinct $d$-simplices that contain $\qq$ with vertices in $P$.
The first selection lemma states that for fixed $d$, there is always a point
with asymptotic optimal simplicial depth.

\begin{theorem}[First selection lemma~{\cite[Theorem~5.1]{Barany1982}}]\label{thm:fslemma}
  Let $P \subset \R^d$ be a set of points and consider $d$ constant. Then, there
  exists a point $\qq \in \R^d$ with $\delta_P(\qq) = \Om{|P|^{d+1}}$.
\end{theorem}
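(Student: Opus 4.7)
The plan is to deduce the first selection lemma from Tverberg's theorem (Theorem~\ref{thm:tverberg}) together with the convex version of the colorful \Caratheodory theorem. First, I would apply Theorem~\ref{thm:tverberg} to $P$ to obtain a Tverberg $m$-partition $T_1,\dots,T_m$ of $P$, where $m = \lceil |P|/(d+1) \rceil = \Omega(|P|)$ (since $d$ is constant), together with a common point $\qq \in \bigcap_{j=1}^{m} \conv(T_j)$. This $\qq$ will serve as the deep point whose simplicial depth we bound from below.

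Next, I would show that each $(d+1)$-subset of the Tverberg parts yields a distinct simplex containing $\qq$. Fix any index set $J = \{j_0,\dots,j_d\} \subseteq [m]$ of size $d+1$, and consider the color classes $T_{j_0},\dots,T_{j_d}$. Each of these sets contains $\qq$ in its convex hull, so the convex version of the colorful \Caratheodory theorem (which follows from Theorem~\ref{thm:colcara} by the standard lifting trick) guarantees a colorful choice $\sigma_J = \{\pp_0,\dots,\pp_d\}$ with $\pp_i \in T_{j_i}$ and $\qq \in \conv(\sigma_J)$. This $\sigma_J$ is a $d$-simplex (or at worst a degenerate one containing $\qq$) with vertices in $P$.

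The key observation is that the simplices $\sigma_J$ obtained for distinct index sets $J$ are themselves distinct as subsets of $P$. Indeed, because $T_1,\dots,T_m$ form a partition of $P$, each vertex of $\sigma_J$ lies in exactly one part, and its ``home'' part identifies an element of $J$. Thus the map $J \mapsto \sigma_J$ is injective, yielding
\[
\delta_P(\qq) \;\geq\; \binom{m}{d+1} \;=\; \Omega\bigl(m^{d+1}\bigr) \;=\; \Omega\bigl(|P|^{d+1}\bigr),
\]
where the last equality uses that $d$ is constant.

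The argument is almost entirely bookkeeping on top of Tverberg's theorem, so there is no real obstacle; the only subtle point is the disjointness-based injectivity argument, which requires noting that a colorful choice with respect to the disjoint color classes $T_{j_0},\dots,T_{j_d}$ actually determines the index set $J$ from its vertex set. One should also briefly address the possibility that $\sigma_J$ is degenerate (fewer than $d+1$ affinely independent vertices): since this can only inflate the count in favor of $\qq$'s simplicial depth under the convention counting $d$-simplices with vertices in $P$, it does not affect the $\Omega(|P|^{d+1})$ lower bound, and standard general-position perturbation of $P$ removes this issue outright.
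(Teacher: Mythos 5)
Your proposal is correct and follows essentially the same route as the paper: apply Tverberg's theorem to get an $m$-partition with $m = \lceil |P|/(d+1)\rceil$, then for each $(d+1)$-subset of parts invoke the (convex) colorful \Caratheodory theorem to produce a distinct $\qq$-embracing simplex, yielding $\binom{m}{d+1} = \Om{|P|^{d+1}}$ simplices. The paper packages the second step as Lemma~\ref{lem:depths}, and your explicit injectivity argument via the disjointness of the parts is exactly the (implicit) justification for the paper's claim that each index set induces a unique colorful choice.
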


The main argument of \Barany's proof of the first selection lemma is the
following lemma.

\begin{lemma}\label{lem:depths}
  Let $P \subset \R^d$ be a point set and let $\mc{T}$ be a Tverberg
  $m$-partition of $P$, where $m \in \N$. Then any point $\cc \in \bigcap_{T \in
  \mc{T}} \convv{T}$ has simplicial depth $\sigma_P(\cc)$ at least
  $\lt\lceil\frac{m^{d+1}}{(d+1)^{d+1}}\rt\rceil$.
\end{lemma}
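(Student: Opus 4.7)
The plan is a direct counting argument using the convex colorful \Caratheodory theorem. Write $\mc{T}=\{T_1,\ldots,T_m\}$ and, after translating so that $\cc = \0$, assume each Tverberg set $T_i$ embraces the origin.

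The first step is to produce one $\cc$-containing simplex for each $(d+1)$-element subset of $\mc{T}$. Fix such a subset $\{T_{i_1},\ldots,T_{i_{d+1}}\}$ and view its sets as color classes $C_j = T_{i_j}$ in the convex colorful \Caratheodory theorem (which, as discussed after Theorem~\ref{thm:colcara}, follows from the cone version). Since each $C_j$ is $\0$-embracing, the theorem yields a colorful choice $\{\vv_1,\ldots,\vv_{d+1}\}$ with $\vv_j \in T_{i_j}$ and $\0 \in \convv{\vv_1,\ldots,\vv_{d+1}}$. Next, observe that distinct $(d+1)$-subsets of $\mc{T}$ must yield distinct $(d+1)$-point sets: since $\mc{T}$ is a partition of $P$, each vertex $\vv_j$ lies in a unique $T \in \mc{T}$, so the subset $\{T_{i_1},\ldots,T_{i_{d+1}}\}$ can be recovered from the colorful choice itself.

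We therefore obtain at least $\binom{m}{d+1}$ distinct $(d+1)$-vertex subsets of $P$ whose convex hulls contain $\cc$. To reach the stated bound, the plan is to apply the elementary inequality $\binom{m}{k} \geq (m/k)^k$ for $m \geq k$, which follows from $\binom{m}{k} = \prod_{i=0}^{k-1}\frac{m-i}{k-i}$ together with the factor-wise estimate $\frac{m-i}{k-i} \geq \frac{m}{k}$ (equivalent to $m \geq k$). With $k=d+1$ this gives $\binom{m}{d+1} \geq m^{d+1}/(d+1)^{d+1}$, and integrality of the simplicial depth then yields the ceiling.

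The main subtlety I expect is the possible affine dependence of the colorful choice: the $d+1$ points $\vv_1,\ldots,\vv_{d+1}$ may fail to span a full-dimensional $d$-simplex in degenerate configurations. This can be dealt with either by reading the definition of simplicial depth as counting $(d+1)$-subsets of $P$ whose convex hull contains $\cc$, or by a generic infinitesimal perturbation of $P$ that only shrinks each $\convv{T_i}$ slightly while keeping $\cc$ in the intersection, so that \Caratheodory's output is always a proper $d$-simplex and the lower bound survives in the limit. Apart from this bookkeeping about degeneracies, the argument is a clean application of Theorem~\ref{thm:colcara} followed by a one-line binomial estimate.
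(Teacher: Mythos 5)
Your argument is correct and is essentially the paper's own proof: apply the colorful \Caratheodory theorem to each $(d+1)$-subset of the Tverberg partition, observe that distinct subsets yield distinct colorful choices because $\mc{T}$ is a partition, and finish with $\binom{m}{d+1}\geq m^{d+1}/(d+1)^{d+1}$. Your added remarks on recovering the index set from the colorful choice and on possible affine degeneracy only make explicit two points the paper leaves implicit.
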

\begin{proof}
  Let $T_i$ denote the $i$th element of $\mc{T}$ and color it with color $i$.
  Now by Theorem~\ref{thm:colcara}, there exists for every $(d+1)$-subset $I \subseteq
  [m]$ a colorful choice $C_I$ with respect to the color classes $T_i$, $i \in
  I$, that embraces $\cc$. Furthermore,
  each index set $I$ induces a unique colorful choice $C_I$.
  Thus, there are at least $\binom{m}{d+1} \geq  \frac{m^{d+1}}{(d+1)^{d+1}}$
  distinct $\cc$-embracing $d$-simplices with vertices in $P$.
\end{proof}

The first selection lemma is now an immediate consequence of Lemma~\ref{lem:depths}
and Theorem~\ref{thm:tverberg}.

We define the computational problems that correspond to the
centerpoint theorem, Tverberg's theorem, and the first selection lemma as follows.

\begin{definition}\label{def:polydesc}
We define the following search problems:
 \begin{itemize}
  \item\Centerpoint\hfill
  \begin{description}
    \item[Given] a set $P \subset \Q^d$ of size $n$,
    \item[Find] a centerpoint.
  \end{description}
  \item\Tverberg\hfill
  \begin{description}
    \item[Given] a set $P \subset \Q^d$ of size $n$,
    \item[Find] a Tverberg $\lceil \frac{n}{d+1}\rceil$-partition.
  \end{description}
  \item\SimCenter\hfill
  \begin{description}
    \item[Given] a set $P \subset \Q^d$ of size $n$,
    \item[Find] a point $\qq \in \Q^d$ with $\sigma_P(\qq) \geq f(d)
    n^{d+1}$, where $f: \N \mapsto \R_+$ is an arbitrary but fixed function.
  \end{description}
\end{itemize}
\end{definition}

Finally, interpreting the presented proofs as algorithms, we obtain the
following result.

\begin{lemma}\label{lem:ccppolydesc}
  Given access to an oracle for \CCP, \Tverberg  can be
  solved in $\Oh{n^3}$ time. Furthermore, \Centerpoint and  \SimCenter can be
  solved in $\Oh{L n^3}$ time, where $L$ is the length of the input.
\end{lemma}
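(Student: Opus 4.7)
The plan is to interpret the proofs given earlier in this section, in particular the proof of Theorem~\ref{thm:tverberg} and Lemma~\ref{lem:depths}, as explicit reductions to \CCP, tracking the size of the oracle call and of any post-processing needed to turn the returned colorful choice into an actual Tverberg partition or centerpoint.

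For \Tverberg, given $P=\{\pp_1,\dots,\pp_n\}\subset\Q^d$, I set $m=\lceil n/(d+1)\rceil$ and build the color classes $\UP{C}_1,\dots,\UP{C}_n \subset \R^{n-1}$ exactly as in the proof of Theorem~\ref{thm:tverberg}: the $j$th point of $\UP{C}_i$ is the tensor product $\bigl(\pp_i,1\bigr) \otimes \qq_j$ with the vectors $\qq_j$ from Lemma~\ref{lem:sarkaria}. Each such tensor product is obtained by copying and negating coordinates of $\pp_i$, so no numerical blowup occurs and writing it down costs $\Oh{nm\cdot(n-1)}=\Oh{n^3}$ arithmetic operations while preserving the bit length of the input. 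One call to the \CCP oracle returns a colorful choice $\UP{C}$ that ray-embraces (hence embraces) $\0$, and by Lemma~\ref{lem:sarkaria} the sets $T_j=\{\pp_i\in P \mid \up{\pp}_{i,j}\in\UP{C}\}$, $j\in[m]$, form a Tverberg $m$-partition of $P$. Reading off the $T_j$ from $\UP{C}$ is $\Oh{n}$, which gives the claimed $\Oh{n^3}$ bound.

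For \Centerpoint and \SimCenter, I first run the \Tverberg procedure above to obtain a partition $T_1,\dots,T_m$, and then produce an explicit common point $\qq\in\bigcap_j \conv(T_j)$. Following the ``only if'' direction of Lemma~\ref{lem:sarkaria}, such a $\qq$ is recovered from convex coefficients $\lambda_{i,\up{\pp}}\ge 0$ with $\sum \lambda_{i,\up{\pp}}\up{\pp}=\0$ and $\sum \lambda_{i,\up{\pp}}=1$ over the $n$ points of $\UP{C}$; in general position these are the unique solution of a square linear system of size $n$, which by fraction-free Gaussian elimination and Hadamard's bound on minors can be solved in $\Oh{L n^3}$ bit operations. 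Forming $\pp^\star$ as prescribed in the proof of Lemma~\ref{lem:sarkaria} and returning $\qq=m\,\pp^\star$ yields a centerpoint, because every closed halfspace containing $\qq$ intersects each $\conv(T_j)$ and therefore contains at least one point of each $T_j$, for a total of $m\ge\lceil n/(d+1)\rceil$ points. The same $\qq$ solves \SimCenter: Lemma~\ref{lem:depths} gives $\delta_P(\qq)\ge \lceil m^{d+1}/(d+1)^{d+1}\rceil\ge n^{d+1}/(d+1)^{2(d+1)}$, so the required bound holds with $f(d)=(d+1)^{-2(d+1)}$.

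The main obstacle is the bit-complexity accounting in the \Centerpoint and \SimCenter case: the oracle only hands back a colorful choice, so one must still certify the origin-embrace numerically and convert it into a point of $\R^d$ while keeping the intermediate rationals within total length $\Oh{L n^3}$. This is standard but needs to be justified by an explicit appeal to Hadamard-type bounds and fraction-free elimination rather than treated as a black-box LP call; the analogous step for \Tverberg is trivial since only the tensor construction and the combinatorial readout of $\UP{C}$ are needed. Genericity issues (e.g.\ uniqueness of the convex combination on $n$ points in $\R^{n-1}$) can be handled, if necessary, by the same kind of deterministic perturbation used elsewhere in the paper and do not affect the overall running time.
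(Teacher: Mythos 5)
Your reduction for \Tverberg is exactly the paper's: lift $m=\lceil n/(d+1)\rceil$ copies of $P$ via Lemma~\ref{lem:sarkaria}, make one oracle call, and read off the partition; the $\Oh{n^3}$ accounting matches. For \Centerpoint and \SimCenter you diverge from the paper in how the common point of $\conv(T_1),\dots,\conv(T_m)$ is produced. The paper does not go back through the ``only if'' direction of Sarkaria's lemma at all: it solves an explicit block-structured linear program (the $m$ systems $T_j\xx_j=\qq$, $\1^T\xx_j=1$, $\xx_j\geq\0$ coupled through a common variable $\qq$) using the $\Oh{Ln^3}$ interior-point algorithm of Anstreicher--Bosch, which handles degeneracy for free. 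Your route --- recovering the convex coefficients of $\0$ in $\UP{C}$ by solving the square $n\times n$ system $\sum\lambda\up{\pp}=\0$, $\sum\lambda=1$ by fraction-free elimination --- is more elementary and gives the same bit-complexity bound via Hadamard, but it has a real gap in the degenerate case: if the points of $\UP{C}$ are affinely dependent the system is singular, the solution space is a positive-dimensional affine subspace, and Gaussian elimination returns \emph{some} solution with no guarantee of nonnegativity; finding a nonnegative point in that subspace is itself an LP feasibility problem. Your suggested fix by deterministic perturbation does not obviously apply here, because you need exact coefficients for the exact points $\UP{C}$ (perturbing them destroys the identity $\sum\lambda\up{\pp}=\0$ and hence the recovered point is no longer in all the $\conv(T_j)$). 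The clean repair is simply to fall back on an LP as the paper does. Your depth calculations for \Centerpoint (Tukey depth $\geq m$) and \SimCenter (via Lemma~\ref{lem:depths} with $f(d)=(d+1)^{-2(d+1)}$) are correct.
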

\begin{proof}
  As show in the proof of Theorem~\ref{thm:tverberg}, to compute a Tverberg partition,
  it suffices to lift $m=\lt\lceil \frac{n}{d+1}\rt\rceil$ copies of the input point
  set $P \subset \Q^d$ with Lemma~\ref{lem:sarkaria} and then query the oracle for
  \CCP. Lifting
  one point needs $\Oh{dm} = \Oh{n}$ time and hence we need $\Oh{n^3}$ time in
  total. Then, any point in the intersection of the computed Tverberg
  $m$-partition $\mc{T} = \left\{ T_1,\dots,T_m\right\}$ is a solution to
  \Centerpoint and \SimCenter. Using the algorithm
  from~\cite{AnstreicherBo1992}, we can compute a Tverberg point in time
  $\Oh{L n^3}$ by solving the linear program

  \newcommand{\noborder}{\multicolumn{1}{c}{}}
\begin{equation*}
  \lt(\begin{array}{|*{3}{c}|c|*{3}{c}|*{3}{c}}
  \cline{1-3}
  & & & \noborder & & & & -1 & & \\
  & T_1 & & \multicolumn{4}{c|}{0} &  & \ddots & \\
  & &  & \noborder & & & & & & -1 \\
  1 & \dots & 1 & \noborder & & & & 0 & \dots & 0 \\ \cline{1-3}\cline{8-10}
  \noborder& & \noborder & \multicolumn{1}{c}{\ddots} & & & &
  \multicolumn{3}{c}{\vdots} \\
  \cline{5-10}
  \noborder& &\noborder & & & &   & -1 &        & \\
  \multicolumn{4}{c|}{0} & & T_m & &    & \ddots & \\
  \noborder& &\noborder & & &  & &    & & -1 \\
  \noborder& &\noborder & & 1 & \dots & 1 &    0 & \dots & 0 \\
  \cline{5-7}
  \end{array}\rt)
  \xx =
  \lt(\begin{array}{c}
  0\\
  \vdots \\
  0 \\
  1 \\ \hline
  \vdots \\ \hline
  0 \\
  \vdots \\
  0 \\
  1
  \end{array}\rt)
  \text{~s.t.\ } \xx \geq \0,
\end{equation*}
where $L$ is the length of the input.
\end{proof}

\section{Equivalent Instances of the Colorful \Caratheodory Problem in General
Position}
\label{sec:app:eqccp}

The application of Sarkaria's lemma in the reductions to
\CCP creates color classes whose positive span does not have full dimension. To
be able to transfer upper bounds on the complexity of \CCP to its
descendants, we need to be able to deal with degenerate
position. In this chapter, we show how to
ensure general position of \CCP instances by extending known perturbation
techniques for linear programming to our setting. More formally, let $I =
(C_1,\dots,C_d, \bb)$ be a \CCP instance, where $\bb \in  \Q^d \setminus \{\0\}$
and each color
class $C_i \subset \Q^d$, $i \in [d]$, ray-embraces
$\bb$. Then, we want to construct in
polynomial time $d$ sets
$\CA_1,\dots,\CA_d \subset \Z^d$ and a point $\ba \in \Z^d$ that
have the following properties:
\begin{enumerate}[label=(P\arabic{enumi})]
  \item \textbf{Valid instance with integer coordinates:}\label{p:inst}
  The points $\{\ba\} \cup \lt(\bigcup_{i=1}^d \CA_i\rt) \subset \Z^d$ have
  integer coordinates. Furthermore, the point $\ba$ is not the origin and each
  color class $\CA_i$, $i \in [d]$, ray-embraces $\ba$ and has size $d$.
  \item \textbf{$\bb$ avoids linear subspaces:} \label{p:subspace}
  The point $\ba$ is not contained in the linear span of any
  $(d-1)$-subset of $\bigcup_{i=1}^d \CA_i$.
  \item \textbf{Polynomial-time equivalent solutions:} \label{p:eqsol}
  Given a colorful choice $\CA \subseteq\bigcup_{i=1}^d \CA_i$ that
  ray-embraces $\ba$, we can compute in polynomial time a colorful choice $C
  \subseteq \bigcup_{i=1}^d C_i$ that ray-embraces $\bb$.
\end{enumerate}
Note that by~\ref{p:subspace}, if $P \subset
\bigcup_{i=1}^d \CA_i$ ray-embraces $\ba$, then $|P| \geq d$ and thus
$\ba \in \inter \poss{P}$. In particular by~\ref{p:inst}, $\ba$ is contained
in the interior of $\pos(\CA_i)$ for $i \in [d]$.

In the next section, we develop tools to ensure non-degeneracy of linear systems
by a small deterministic perturbation of polynomial bit-complexity.
The approach is similar to already existing perturbation techniques
for linear programming as in~\cite[Section~10-2]{Dantzig1963}
and~\cite{MegiddoCh1989} but extends to a more general setting in which the
matrix is also perturbed. Based on these results, we then show in
Section~\ref{sec:eqccp:const} how to construct
\CCP instances with properties~\ref{p:inst}--\ref{p:eqsol}.

\subsection{Polynomials with Bounded Integer Coefficients}\label{se:eqccp:pert}
In the following, we consider equation systems
\begin{equation}\label{eq:leps}
  L_\eps: A \xx = \bb,
\end{equation}
where $A$ is a $(d \times n)$-matrix with $n \geq d$ and $\bb$ is a $d$-dimensional
vector. Furthermore, the entries of both $A$ and $\bb$ are polynomials in $\eps$
with integer coefficients. For a fixed $\tau \in \R$, we denote with $A(\tau)$
and $\bb(\tau)$ the matrix $A$ and the vector $\bb$ that we obtain by
setting $\eps$ to $\tau$ in $A$ and $\bb$, respectively. Similarly, we denote
with $L_\tau$ the linear system $L_\tau: A(\tau) \xx = \bb(\tau)$.
We show that for any fixed $\tau > 0$ that is sufficiently small in the size
of the coefficients in the polynomials, the linear system $L_\tau$ is
non-degenerate.

For $m \in \N$, we denote with
\[
  \Poly{m} = \left\{ p(\eps) = \sum_{i=0}^{k} \alpha_i \eps^i \midd k \in \N_0,\,
  \text{~and }|\alpha_i| \in [m]_0 \text{ for } i \in [k]_0 \right\}
\]
the set of polynomials with integer coefficients that have absolute value at
most $m$. The following lemma guarantees that no polynomial in $\Poly{m}$ has
a root in a specific interval whose length is inverse proportional to $m$.

\begin{lemma}\label{lem:poly}
  Let $p \in \Poly{m}$ be
  a nontrivial polynomial with $m \in \N$. Then, for all $\eps \in \lt(0,
  \frac{1}{2m}\rt)$, we have $p(\eps) \neq 0$.
\end{lemma}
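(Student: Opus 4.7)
The plan is to factor out the lowest-order nonzero term and then use a straightforward geometric-series bound to show the remaining factor cannot vanish on the given interval. Write $p(\eps) = \sum_{i=0}^k \alpha_i \eps^i$ with $|\alpha_i| \le m$, and let $j$ be the smallest index with $\alpha_j \neq 0$ (which exists since $p$ is nontrivial). Factor out $\eps^j$:
\[
p(\eps) = \eps^j \Bigl(\alpha_j + \sum_{i=j+1}^k \alpha_i \eps^{i-j}\Bigr).
\]
Since $\eps > 0$, the factor $\eps^j$ is strictly positive, so it suffices to prove that the parenthesized term $q(\eps) := \alpha_j + \sum_{i=j+1}^k \alpha_i \eps^{i-j}$ is nonzero for $\eps \in (0, 1/(2m))$.

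Next, I would bound the tail by the triangle inequality together with a geometric series estimate. Using $|\alpha_i| \leq m$,
\[
\Bigl|\sum_{i=j+1}^k \alpha_i \eps^{i-j}\Bigr| \leq m \sum_{\ell=1}^{k-j} \eps^\ell < m \sum_{\ell=1}^\infty \eps^\ell = \frac{m\eps}{1-\eps}.
\]
For $\eps \in (0, 1/(2m))$ we have $m\eps < 1/2$, and (since $m \geq 1$) also $1-\eps > 1/2$, so $1/(1-\eps) < 2$, which combine to give $m\eps/(1-\eps) < 1$.

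Finally, since $\alpha_j$ is a nonzero integer we have $|\alpha_j| \geq 1$, so the reverse triangle inequality yields
\[
|q(\eps)| \geq |\alpha_j| - \Bigl|\sum_{i=j+1}^k \alpha_i \eps^{i-j}\Bigr| > 1 - 1 = 0,
\]
hence $p(\eps) = \eps^j q(\eps) \neq 0$ for every $\eps \in (0, 1/(2m))$, as desired. There is no real obstacle here; the only care needed is the simultaneous use of the two strict inequalities $m\eps < 1/2$ and $1-\eps > 1/2$, which is valid because both sides are positive. The bound $1/(2m)$ is in fact tight up to a constant factor, as witnessed by $p(\eps) = 1 - m\eps$, which has a root at $\eps = 1/m$.
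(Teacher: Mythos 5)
Your proof is correct and follows essentially the same route as the paper's: isolate the lowest-order nonzero (integer, hence $\geq 1$ in absolute value) coefficient and bound the remaining tail by a geometric series, using $\eps < \frac{1}{2m}$ (and $\eps < \frac{1}{2}$) to make the tail strictly smaller. The only cosmetic difference is that the paper assumes $\alpha_j > 0$ and bounds $p(\eps) \geq \eps^j(1-2m\eps) > 0$ directly, whereas you work with absolute values and the reverse triangle inequality.
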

\begin{proof}
We write $p(\eps) = \sum_{i = 0}^k \alpha_i \eps^i$.
Let $j = \min \{ i \in [k]_0 \mid \alpha_i \neq 0\}$.
Since $p$ is nontrivial, $j$ exists.
Without loss of generality, we
assume $\alpha_{j} > 0$ (otherwise, we multiply $p(\eps)$ by $-1$). For all
$\eps \in \lt(0, \frac{1}{2m}\rt)$, we have
\begin{equation*}
  p(\eps)
  = \sum_{i=0}^{k} \alpha_i \eps^{i}
  \geq \eps^{j} - 2 m \eps^{j+1}
  = \eps^{j} \left( 1 - 2m\eps\right) > 0
\end{equation*}
since $\eps < \frac{1}{2m}$ and hence $p(\eps)\neq 0$ for all $\eps \in \lt(0,
\frac{1}{2m}\rt)$.
\end{proof}
We now use Lemma~\ref{lem:poly} to prove non-degeneracy of
the linear system $L_\eps$ if $\eps$ is fixed but small enough and the degrees of
the monomials in $L_\eps$ are sufficiently separated.
We say $d$ polynomials
$p_1,\dots,p_d \in \Poly{m}$ are
\emph{$(k_1,\dots,k_d)$-separated with gap $g$} if $p_i$ has a nontrivial
monomial of degree $k_i$ and
$p_i$ has no nontrivial monomial of a degree in $\{ k_j - g,\dots, k_j + g \mid j \in
[d] \setminus \{i\}\} \cup \{k_i - g,\dots,k_i-1\}$.

\begin{lemma}\label{stm:perturb}
Let $L_\eps: A \xx = \bb$ be a system of equations as defined in (\ref{eq:leps})
such that
the entries of $A$ and $\bb$ are polynomials in $\Poly{m}$, where $m \in \N$.
Furthermore, suppose that the polynomials in $A$ have degree at most $k_0$ and
$(\bb)_1,\dots,(\bb)_d$ are $(k_1,\dots,k_d)$-separated with gap
$(d-1)k_0$. Set
\[
M = d!(k_0 + 1)^{d-1}(k + 1) m^d,
\]
where $k$ is the maximum degree of $(\bb)_1,\dots,(\bb)_d$. Then, for all $\eps
\in \lt(0, \frac{1}{2M}\rt)$, the linear system $L_\eps$ is non-degenerate.
\end{lemma}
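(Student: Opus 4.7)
The plan is to apply Cramer's rule. Fix any $d$-subset $B$ of columns of $A$ such that $B$ is a basis at some $\tau \in (0,1/(2M))$, and fix $i \in [d]$. By Cramer's rule, the $i$-th coordinate of the corresponding basic solution is $p_B^{(i)}(\tau)/\det(A_{\ind{B}})(\tau)$, where $p_B^{(i)}(\eps) := \det(A^{(i)}(\eps))$ and $A^{(i)}$ denotes $A_{\ind{B}}$ with its $i$-th column replaced by $\bb$. Since $B$ is a basis at $\tau$, the denominator is nonzero. We must therefore show $p_B^{(i)}(\tau) \neq 0$. Expanding the determinant along the $i$-th column yields
\[
  p_B^{(i)}(\eps) \;=\; \sum_{j=1}^d (-1)^{i+j} (\bb)_j(\eps)\, D_j(\eps),
\]
where $D_j(\eps)$ is the $(d{-}1)\times(d{-}1)$ cofactor minor of $A_{\ind{B}}$ (so $D_j$ does not involve $\bb$). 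The plan is to show that $p_B^{(i)}$ is a nontrivial polynomial in $\Poly{M}$ and then invoke Lemma~\ref{lem:poly}.

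First I would verify nontriviality. Expanding $\det(A_{\ind{B}})$ along its own $i$-th column shows that not all $D_j$ can be identically zero, since $\det(A_{\ind{B}})(\tau) \neq 0$; pick some $j^*$ with $D_{j^*}$ nontrivial, and let $e^*$ denote the minimum degree of a nonzero monomial of $D_{j^*}$ (so $0 \le e^* \le (d-1)k_0$). I claim that the coefficient of $\eps^{k_{j^*}+e^*}$ in $p_B^{(i)}$ is nonzero. For the contribution from $(\bb)_{j^*}D_{j^*}$: any pair of degrees $(d_1,d_2)$ from $(\bb)_{j^*}$ and $D_{j^*}$ with $d_1+d_2 = k_{j^*}+e^*$ would require $d_1 < k_{j^*}$ (impossible by the separation property, which forbids $(\bb)_{j^*}$ from having any monomial of degree in $\{k_{j^*}-(d-1)k_0,\dots,k_{j^*}-1\}$) or $d_2 < e^*$ (impossible by minimality), except for the pair $(k_{j^*},e^*)$, which contributes a nonzero product of coefficients. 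For the contribution from $(\bb)_{j}D_j$ with $j \neq j^*$: the required $d_1$ lies in $[k_{j^*}-(d-1)k_0,\,k_{j^*}+(d-1)k_0]$, but separation forbids $(\bb)_j$ from having any monomial in that range. Hence these terms contribute nothing, and the coefficient at $\eps^{k_{j^*}+e^*}$ equals $(-1)^{i+j^*}$ times a nonzero product of coefficients of $(\bb)_{j^*}$ and $D_{j^*}$.

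Next I would bound the coefficients of $p_B^{(i)}$. Each $D_j$ is a sum of $(d-1)!$ products of $d-1$ entries of $A$, each of which is a polynomial in $\Poly{m}$ of degree $\le k_0$, so a straightforward induction on polynomial multiplication bounds the coefficients of $D_j$ by $(d-1)!(k_0+1)^{d-2}m^{d-1}$. Multiplying by $(\bb)_j \in \Poly{m}$ of degree $\le k$ multiplies the coefficient bound by at most $(k+1)m$, and summing over $d$ values of $j$ multiplies by $d$, yielding a bound of $d!(k_0+1)^{d-2}(k+1)m^d \le M$ on the coefficients of $p_B^{(i)}$. Thus $p_B^{(i)} \in \Poly{M}$ and is nontrivial, so by Lemma~\ref{lem:poly}, $p_B^{(i)}(\eps) \neq 0$ for all $\eps \in (0, 1/(2M))$. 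Applying this to every basis $B$ and every coordinate $i$ gives the non-degeneracy of $L_\eps$.

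The main obstacle is the nontriviality argument: the separation condition must be strong enough that cancellations between the different $(\bb)_j D_j$ summands cannot kill the chosen monomial. This is exactly what fixing the gap at $(d-1)k_0$ (matching the maximum degree of the minors $D_j$) and forbidding the ``strictly below'' range $\{k_j-g,\dots,k_j-1\}$ for $(\bb)_j$ are designed to guarantee, so the separation definition is essentially tight for this proof strategy.
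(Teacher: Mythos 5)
Your proof follows essentially the same route as the paper's: Cramer's rule, Laplace expansion of the numerator determinant along the column containing $\bb$, the separation hypothesis to isolate a single surviving monomial of degree $k_{j^*}+e^*$, a coefficient bound placing the numerator in $\Poly{M}$, and then Lemma~\ref{lem:poly}. The core nontriviality argument is correct (and in fact spelled out more explicitly than in the paper, which selects the same minimal-degree monomial $k_{i^\star}+k'_{i^\star}$ of the surviving cofactor), and your coefficient bound $d!(k_0+1)^{d-2}(k+1)m^d \le M$ checks out. The one step you are missing is the paper's preliminary augmentation of $A$ by the canonical basis vectors $\e_1,\dots,\e_d$, which keeps all entries in $\Poly{m}$ with degree at most $k_0$. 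Your argument applies only to $d$-subsets $B$ that actually form a basis of $A(\tau)$, so it establishes that no basic solution has a zero coordinate; but non-degeneracy as defined in Section~\ref{sec:prelim} also requires $\rank A(\tau)=d$, and, more importantly for the downstream uses in Lemmas~\ref{stm:p2} and~\ref{g:red}, what is really needed is that $\bb(\tau)$ avoids the linear span of \emph{every} $(d-1)$-subset of columns, including linearly dependent ones and subsets that cannot be completed to a basis because $A(\tau)$ is rank-deficient. Augmenting $A$ with the identity guarantees full rank for every $\tau$, lets any offending $(d-1)$-subset be extended to a genuine basis whose basic solution would then have a zero coordinate, and thereby reduces the general case to exactly the computation you carried out; without this one-line reduction the lemma does not deliver what its applications require.
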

\begin{proof}
We show that for all fixed $\tau \in \lt(0,\frac{1}{2M}\rt)$, the vector $\bb(\tau)$
is not contained in
the linear span of any $d-1$ columns from $A(\tau)$.
We can ensure that $A(\tau)$ has rank $d$ for all fixed
$\tau \geq 0$ by
extending $A$ with the canonical basis of $\R^d$. Then, the entries of
the extended matrix are still polynomials from
$\Poly{m}$ and their degrees are at most $k_0$. Moreover, if for some fixed
$\tau \in \lt(0,\frac{1}{2M}\rt)$, there are $d-1$
columns from the original matrix whose linear span contains $\bb(\tau)$, then
the same holds for the extended matrix.

Let now $\tau \in \lt(0,\frac{1}{2M}\rt)$ be fixed and let $A'$ be a submatrix of $A$
such that $A'(\tau)$ is a basis of $A(\tau)$. Then, the linear system
\[
    L': A'(\tau) \xx = \bb(\tau)
\]
has a unique solution $\xx^\star$. By Cramer's rule, we have
\[
\lt(\xx^\star\rt)_j = \frac{\det A'_j(\tau)}{\det A'(\tau)},
\]
where $j \in [d]$ and $A'_j$ is obtained from the matrix $A'$ by replacing the
$j$th column with $\bb$. Using Laplace expansion, we can express $\det
A'_j$ as
\[
\det A'_j = \sum_{i=1}^{d} (-1)^{i+j} b_i  \det C_{i,j},
\]
where $b_i = (\bb)_i$ and $C_{i,j}$ is the matrix that we obtain by omitting the
$i$th row and the $j$th column from $A'_j$. Next, we apply the Leibniz formula
and write $\det C_{i,j}$ as
\[
  \det C_{i,j} = \sum_{\sigma \in S_{d-1}} \sgn(\sigma)
  \prod_{i=1}^{d-1}(C_{i,j})_{i, \sigma(i)} = c_{i,j}(\eps),
\]
where $c_{i,j}(\eps)$ is a polynomial in $\eps$. Since the polynomials in $A'$
have degree at most $k_0$, the degree of $c_{i,j}$ is at most $(d-1)k_0$.
Because the polynomials in $A'$ have integer coefficients with
absolute value at most $m$, the
coefficients of $c_{i,j}$ are integers, and
the sum of their absolute values can be bounded by
$M' = (d-1)! \big((k_0+1) m\big)^{d-1}$. Hence, $c_{i,j} \in \Poly{M'}$.
Now, since $\det A'(\tau) \neq 0$, at least one of the polynomials
$c_{1,j},\dots,c_{d,j}$, say $c_{i^\star,j}$, is nontrivial.
Let $k'_{i^\star} \leq (d-1)k_0$ be the minimum degree of a nontrivial monomial in $c_{i^\star,j}$.
First, we observe that since $b_{i^\star}$ has a nontrivial monomial of degree
$k_{i^\star}$ and no nontrivial monomial of degree $k_{i^\star} - (d-1)k_0, \dots, 
k_{i^\star} - 1$, 
the polynomial $(-1)^{i^\star+j} b_{i^\star} c_{i^\star,j}$ has a
nontrivial monomial of degree $k' = k_{i^\star} + k'_{i^\star}$. Second, for $i
\in [d]$, $i \neq i^\star$, the polynomial $(-1)^{i+j} b_{i} 
c_{i,j}$ has no monomial of degree $k'$ since $c_{i,j}$ has degree at most
$(d-1)k_0$ and the polynomials $b_1,\dots,b_d$ are $(k_1,\dots,k_d)$-separated
with gap $(d-1)k_0$. Thus, $\det A'_j$ is a nontrivial polynomial. Moreover,
since the polynomials
$b_i$ and $c_{i,j}$ have integer coefficients for $i \in [d]$, so does $\det A'_j$.
Using that the sum of absolute values of the coefficients of $c_{i,j}$ is
bounded by $M'$, we can bound the sum of absolute values of coefficients in $\det A'_j$ by $M = d(k+1)m M'$ and hence
$\det A'_j \in \Poly{M}$, where $k = \max \lt\{\deg b_i \midd i \in [d]\rt\}$.
Then, Lemma~\ref{lem:poly} guarantees that $\det A'_j$
has no root in the interval $\lt(0, \frac{1}{2M}\rt)$. In particular, $\det
A'_j(\tau)
\neq 0$ and hence $(\xx^\star)_j \neq 0$ for all $j \in [d]$.
This means that $\bb(\tau)$ is not contained in
the linear span of any $d-1$ columns from $A(\tau)$.
Since $\tau \in \lt(0, \frac{1}{2M}\rt)$ was arbitrary, the claim follows.
\end{proof}

\subsection{Construction}\label{sec:eqccp:const}
Let $C'_1,\dots,C'_d \subset \Q^d$ be $d$ sets that ray-embrace $\bb' \in
\Q^d$. By applying \Caratheodory's theorem, we can ensure that $|C'_i| \leq d$
for $i  \in [d]$.  First, we rescale the points to the integer grid.  For a
point $\pp' \in \Q_d$, we set $z(\pp') = |\psi| \pp'$, where $\psi \in \Z$ is
the absolute value of the least common multiple of the denominators of
$(\pp')_1,\dots,(\pp')_d$.
Clearly, $z(\pp')$ has integer coordinates and can be represented with a number
of bits polynomial in the number of bits needed for $\pp'$. For $i \in [d]$, let
$C_i = \lt\{ z(\pp') \midd \pp' \in C'_i\rt\}$ be the rescaling of $C'_i$, and
set $\bb = z(\bb')$. Then, the bit complexity of the \CCP instance
$C_1,\dots,C_d, \bb$
is polynomial in the bit-complexity of the original instance.  Moreover, since
$\poss{\pp'} = \poss{z(\pp')}$ for all $\pp' \in \Q^d$, the rescaled color classes
$C_i$, $i \in [d]$, ray-embrace $\bb$ and if a colorful choice $C
\subseteq \bigcup_{i=1}^d C_i$ ray-embraces $\bb$, then the original
points $C' \subset \bigcup_{i=1}^d C'_i$ ray-embrace $\bb'$.  By a
similar rescaling, we can further assume that $\| \bb \|_1 \geq \| \pp \|_1$ for
all $\pp \in \bigcup_{i=1}^d C_i$.

We now sketch how the remaining construction of the equivalent instance
$C_1^\apx,\dots,C_d^\apx, \bb^\apx$ in general position proceeds. First, we
ensure for $i \in [d]$ that $\bb$ lies in the interior of $\poss{C_i}$ by
replacing each point $\pp$ in $C_i$ by a set $P_\eps(\pp)$ of slightly
perturbed points that contain $\pp$ in the interior of their convex hull.
Second, we perturb $\bb$. Lemma~\ref{stm:perturb} then shows that in both steps a
perturbation of polynomial bit-complexity suffices to ensure
properties~\ref{p:subspace} and~\ref{p:eqsol}.

For a point $\pp \in \R^d$, we denote with
\[
P_\eps(\pp) = \left\{ \pp + \eps \e_i, \pp - \eps \e_{i} \mid i \in [d]\right\}
\]
the vertices of the $\ell_1$-sphere around $\pp$ with radius $\eps$.
Let $C_{i}(\eps) = \bigcup_{\pp \in C_i}
P_\eps(\pp)$, $i \in [d]$, denote the $i$th color class in which all points $\pp$ have been
replaced by the corresponding set $P_\eps(\pp)$. Since for
$i \in [d]$, we have $\bb \in \poss{C_i}$ and since each point $\pp \in C_i$ is
contained in the interior of $\poss{P_\eps(\pp)}$, it follows that $\bb \in
\inter \poss{C_i(\eps)}$ for $\eps > 0$. Next, we denote with
\[
\bb(\eps) = \bb + \begin{pmatrix}
  \eps^{d}\\
  \eps^{2d}\\
  \vdots\\
  \eps^{d^2}
\end{pmatrix} \in \R^d
\]
the vector $\bb$ that is perturbed by a vector from the moment curve.
The following lemma shows that for $\eps$ small enough,
Property~\ref{p:subspace} holds for $C_1(\eps), \dots, C_d(\eps)$ and
$\bb(\eps)$.
Let $m$ be the largest absolute value of a coordinate in
$C_1, \dots, C_d, \bb$ and set $N = d!m^d$.

\begin{lemma}\label{stm:p2}
  For all $\eps \in \lt(0,N^{-2}\rt]$,
  there is no $(d-1)$-subset $P \subset \bigcup_{i=1}^d C_i(\eps)$ with
  $\bb(\eps) \in \lspan P$.
\end{lemma}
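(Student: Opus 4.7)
The plan is to reduce the claim to Lemma~\ref{stm:perturb}. Recall from the discussion in Section~\ref{sec:prelim} that non-degeneracy of a linear system $A\xx = \bb$ (with $n \geq d$ columns and $\rank A = d$) implies that $\bb$ lies in the linear span of no $d' < d$ columns of $A$. Thus, fixing an arbitrary $(d-1)$-subset $P \subset \bigcup_{i=1}^d C_i(\eps)$, it suffices to build a suitable polynomial matrix containing $P$ as a set of columns and to verify that the hypotheses of Lemma~\ref{stm:perturb} hold for the right-hand side $\bb(\eps)$.

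Form the $d \times (2d-1)$ matrix $A$ whose first $d-1$ columns are the points of $P$ and whose remaining $d$ columns are $\e_1,\dots,\e_d$; the appended canonical basis guarantees that $A(\tau)$ has full row rank for every $\tau \geq 0$, exactly as in the proof of Lemma~\ref{stm:perturb}. Each column of $P$ has the form $\pp \pm \eps \e_j$ for some $\pp \in \bigcup_i C_i$ with $\|\pp\|_\infty \leq m$; hence every entry of $A$ is an integer polynomial in $\eps$ of degree at most $k_0 = 1$ with coefficients of absolute value at most $m$, so $A$ has entries in $\Poly{m}$. The $i$th coordinate of $\bb(\eps)$ equals $(\bb)_i + \eps^{di}$, lies in $\Poly{m}$, and has nontrivial monomials only at degrees $0$ (when $(\bb)_i \neq 0$) and $k_i := di$.

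Next, I verify that $(\bb(\eps))_1,\dots,(\bb(\eps))_d$ are $(d, 2d,\dots, d^2)$-separated with gap $g = d-1 = (d-1)k_0$. The forbidden degrees for coordinate $i$ lie in $\bigcup_{j \neq i}[dj - (d-1), dj + (d-1)] \cup [di - (d-1), di-1]$. Every such interval starts at $\geq 1$, so degree $0$ is not forbidden; and $di$ is not in any interval $[dj-(d-1), dj+(d-1)]$ for $j \neq i$ because $|dj - di| \geq d > d-1$, nor in $[di-(d-1), di-1]$ since $di > di - 1$. Therefore, Lemma~\ref{stm:perturb} applies with $k = d^2$ and yields non-degeneracy of $A\xx = \bb(\eps)$ whenever $\eps \in (0, 1/(2M))$, where $M = d! \cdot 2^{d-1} \cdot (d^2+1) \cdot m^d$. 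In particular, $\bb(\eps)$ is then not in the linear span of any $d-1$ columns of $A$, so it is not in $\lspan P$.

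It remains to compare the threshold $1/(2M)$ with $N^{-2} = (d!)^{-2} m^{-2d}$: a routine calculation shows $N^2 \geq 2M$ for the relevant parameter ranges, so $(0, N^{-2}] \subseteq (0, 1/(2M))$. Because the threshold is uniform in $P$ (only the global parameters $d$ and $m$ enter), the conclusion holds for all finitely many $(d-1)$-subsets simultaneously. The only real subtlety is this final constant comparison; everything else is a direct verification that the polynomial structure of $\bb(\eps)$ matches the separation hypothesis of Lemma~\ref{stm:perturb}, which is exactly what the moment-curve perturbation was designed to achieve.
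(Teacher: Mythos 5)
Your proposal is correct and follows essentially the same route as the paper: both reduce to Lemma~\ref{stm:perturb} by observing that the entries of the perturbed matrix are degree-$\leq 1$ polynomials in $\Poly{m}$ and that the moment-curve perturbation makes the coordinates of $\bb(\eps)$ $(d,2d,\dots,d^2)$-separated with gap $d-1$, then compare $N^{-2}$ with the threshold $1/(2M)$. The only point you leave implicit is the final ``routine calculation'': the paper makes it explicit by assuming $m\geq 2$ and $d\geq 4$ so that $2^d\leq m^d$ and $d^2+1\leq d!$, giving the \emph{strict} inequality $2M < N^2$ needed since Lemma~\ref{stm:perturb} only covers the open interval $\lt(0,\tfrac{1}{2M}\rt)$.
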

\begin{proof}
  Let $A$ denote the matrix $\big( C_1(\eps) \dots C_d(\eps)\big)$. Then, there
  exists a subset $P \subset \bigcup_{i=1}^d C_i(\eps)$ with
  $|P|<d$ that contains $\bb(\eps)$ in its linear span if and only if the linear
  system $L_\eps: A \xx = \bb(\eps)$ is degenerate. The polynomials in $A$ all have
  degree at most $1$ and the polynomials $(\bb(\eps))_i$, $i \in [d]$, are
  $\lt(d, 2d, \dots, d^2\rt)$-separated with gap $d-1$. Setting $k_0 = 1$ and
  $k=d^2$ in
  Lemma~\ref{stm:perturb} implies that $L_\eps$
  is non-degenerate for all $\eps \in \lt(0,\frac{1}{2M}\rt)$, where $M = d!
  2^{d-1} (d^2 + 1) m^d$. Assuming that $m\geq 2$ and that $d\geq 4$, we can
  upper bound $2^d$ by $m^d$ and $(d^2+1)$ by $d!$. Hence, we have
  \[
    2M = d! 2^d (d^2 + 1) m^d < \lt(d! m^d\rt)^2 = N^2,
  \]
  and thus the claim follows.
\end{proof}
In the following, we set $\eps_0$ to $N^{-2}$.
Note that Lemma~\ref{stm:p2} holds in particular for $\eps = \eps_0$, and thus a
deterministic perturbation of polynomial bit-complexity suffices. In the next
lemma, we show that the perturbed color classes
still ray-embrace the perturbed $\bb$.

\begin{lemma}\label{stm:p1}
  For $i \in [d]$, the set $C_i(\eps_0)$ ray-embraces $\bb(\eps_0)$.
\end{lemma}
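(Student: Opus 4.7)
The plan is to show that $\bb(\eps_0)$ lies in $\poss{C_i(\eps_0)}$ by exhibiting it as an explicit non-negative combination of points in $C_i(\eps_0)$. The guiding observation is that the moment-curve perturbation $\vv := \bb(\eps_0) - \bb$ has order $\eps_0^d$, which is much smaller than the $\ell_1$-radius $\eps_0$ of each puffed-up set $P_{\eps_0}(\pp)$. So if some $\pp'$ appears in an original ray-embrace of $\bb$ with a coefficient bounded away from zero, we can absorb the perturbation $\vv$ into that single summand and the perturbed point will still lie inside $\convv{P_{\eps_0}(\pp')}$.

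First, I would bound the perturbation. Since $\eps_0 \leq 1$ and $(\vv)_j = \eps_0^{jd}$, the triangle inequality gives $\|\vv\|_1 \leq d \eps_0^d$. Next, I would extract a coefficient lower bound. Because $C_i$ ray-embraces $\bb$, write $\bb = \sum_{\pp \in C_i} \alpha_\pp \pp$ with $\alpha_\pp \geq 0$. The rescaling assumption $\|\bb\|_1 \geq \|\pp\|_1$ for every $\pp \in \bigcup_i C_i$ combined with the triangle inequality yields $\|\bb\|_1 \leq \sum_\pp \alpha_\pp \|\pp\|_1 \leq \|\bb\|_1 \sum_\pp \alpha_\pp$, hence $\sum_\pp \alpha_\pp \geq 1$. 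As $|C_i| \leq d$, some $\pp' \in C_i$ satisfies $\alpha_{\pp'} \geq 1/d$.

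Finally, I would decompose $\bb(\eps_0) = \alpha_{\pp'}\lt(\pp' + \vv/\alpha_{\pp'}\rt) + \sum_{\pp \neq \pp'} \alpha_\pp \pp$. The shifted vector $\pp' + \vv/\alpha_{\pp'}$ differs from $\pp'$ in $\ell_1$-norm by at most $d^2 \eps_0^d$. Since $\eps_0 = N^{-2} \leq (d!)^{-2} \leq 1/d^2$ (using $d! \geq d$), we get $d^2 \eps_0^d \leq \eps_0$, so $\pp' + \vv/\alpha_{\pp'}$ lies in the $\ell_1$-ball of radius $\eps_0$ around $\pp'$, which is exactly $\convv{P_{\eps_0}(\pp')} \subseteq \poss{C_i(\eps_0)}$. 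For each of the remaining points $\pp \neq \pp'$, the identity $\pp = \frac{1}{2d}\sum_{\qq \in P_{\eps_0}(\pp)} \qq$ places $\pp \in \convv{P_{\eps_0}(\pp)} \subseteq \poss{C_i(\eps_0)}$. Assembling all summands with their non-negative coefficients certifies $\bb(\eps_0) \in \poss{C_i(\eps_0)}$.

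The main obstacle is the coefficient lower bound $\alpha_{\pp'} \geq 1/d$: without the earlier rescaling that enforces $\|\bb\|_1 \geq \|\pp\|_1$, some $\alpha_\pp$ could be exponentially small, and the absorption step would push the perturbed point outside $\convv{P_{\eps_0}(\pp')}$. Once this lower bound is in hand, the inequality $d^2\eps_0^d \leq \eps_0$ is a routine verification against the choice $\eps_0 = N^{-2}$, and the remainder of the argument is essentially bookkeeping.
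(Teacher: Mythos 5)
Your proof is correct and takes essentially the same route as the paper's: both rest on the rescaling guarantee $\sum_{\pp}\alpha_\pp \ge 1$ (derived from $\|\bb\|_1 \ge \|\pp\|_1$) and on the fact that the moment-curve perturbation $\bb(\eps_0)-\bb$ has $\ell_1$-norm of order $d\eps_0^d$, far below the radius $\eps_0$ of the cross-polytopes $P_{\eps_0}(\pp)$. The only immaterial difference is that the paper spreads the perturbation over all summands as $\pp + \frac{1}{s}\ve{m}_{\eps_0}$ with $s=\sum_\pp \alpha_\pp \ge 1$, whereas you absorb it entirely into one summand whose coefficient you bound below by $1/d$ via pigeonhole, at the cost of the marginally tighter (but still valid) inequality $d^2\eps_0^d \le \eps_0$.
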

\begin{proof}
Fix some color class $C_i$ and
let $\ve{m}_{\eps_0} = \bb(\eps) - \bb$ be the
perturbation vector for $\bb$. Since
$C_i$ ray-embraces $\bb$, we can express $\bb$
as a positive combination $\sum_{\pp \in C_i} \psi_\pp \pp$,
where $\psi_\pp \geq 0$ for all $\pp \in C_i$. 
Then, 
\[
\bb(\eps_0) = \bb +  \ve{m}_{\eps_0} 
 = \lt(\sum_{\pp \in C_i} \psi_\pp \pp\rt) + \ve{m}_{\eps_0} 
  = \sum_{\pp \in C_i} \psi_\pp \lt(\pp + \frac{1}{s}\ve{m}_{\eps_0} \rt),
\]
where $s = \sum_{\pp \in C_i} \psi_\pp$. We show that $\pp +
\frac{1}{s} \ve{m}_{\eps_0} \in \poss{P_{\eps_0}(\pp)}$ for all $\pp \in C_i$. Since 
$P_{\eps_0}(\pp) \subseteq C_i(\eps_0)$ for all $\pp \in C_i$, this then implies
$\bb(\eps_0) \in \poss{C_i(\eps_0)}$.
First, we claim that $s \geq 1$. Indeed, we have
\[
  \lt\| \bb \rt\|_1 = \lt\| \sum_{\pp \in C_i} \psi_\pp \pp \rt\|_1 \leq
  \sum_{\pp \in C_i} \psi_\pp \lt\| \pp \rt\|_1 \leq
  s\| \bb \|_1,
\]
where the last inequality is due to our assumption $\|
\bb \|_1 \geq \| \pp \|_1$, for $\pp \in C_i$. 
Now,
\[
  \lt\| \frac{1}{s}\ve{m}_{\eps_0} \rt\|_1 < d \eps_0^{d} \leq \eps_0,
\]
for $\eps_0 \leq 1/2$,
and thus $\pp + \frac{1}{s}\ve{m}_{\eps_0}$ lies in the $\ell_1$-sphere around
$\pp$ with radius
$\eps_0$ for all $\pp \in C_i$. By construction of
$P_{\eps_0}(\pp)$, we then have $\pp + \frac{1}{s}\ve{m}_{\eps_0} \in
\convv{P_{\eps_0}(\pp)} \subset \poss{P_{\eps_0}(\pp)}$, as claimed.
\end{proof}

As a consequence of Lemma~\ref{stm:p2}, we can show that colorful choices for the
perturbed instance that ray-embrace $\bb(\eps_0)$, ray-embrace
$\bb$ if the perturbation is removed.

\begin{lemma}\label{stm:p3}
  Let
  $C = \lt\{ \cc_1,\dots,\cc_d \rt\}$ be set
  such that $\cc_i \in C_i(\eps_0)$ for $i \in [d]$ and such that
  $\bb(\eps_0) \in \pos(C)$. Then, the set $C' = \lt\{ \pp \midd i \in [d],\,
  \cc_i \in P_{\eps_0}(\pp)\rt\}$ ray-embraces $\bb$.
\end{lemma}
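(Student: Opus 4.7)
The plan is to exhibit $\bb$ as the limit of nonnegative combinations along a continuous deformation that slides the perturbed configuration back to the original one. Writing each chosen $\cc_i$ as $\pp_i + s_i \eps_0 \e_{j_i}$ where $\pp_i \in C_i$, $s_i \in \{\pm 1\}$, and $j_i \in [d]$, I would introduce the paths
\[
\cc_i(t) = \pp_i + t\, s_i \eps_0 \e_{j_i},
\qquad
\bb(t) = \bb + \sum_{k=1}^{d}(t\eps_0)^{kd}\, \e_k,
\qquad t \in [0,1],
\]
so that $(\cc_i(1), \bb(1)) = (\cc_i, \bb(\eps_0))$ matches the hypothesis and $(\cc_i(0), \bb(0)) = (\pp_i, \bb)$ matches the goal. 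The decisive point is that $\cc_i(t) \in C_i(t\eps_0)$ for every $t \in (0,1]$, and the perturbation parameter $t\eps_0$ lies in $(0, N^{-2}]$, so Lemma~\ref{stm:p2} applies at parameter $t\eps_0$ and certifies that $\bb(t)$ is not contained in the linear span of any $(d-1)$-subset of $\bigcup_i C_i(t\eps_0)$; in particular, $\{\cc_1(t),\ldots,\cc_d(t)\}$ is linearly independent and $\bb(t)$ never lies on a facet of the cone they generate.

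Next, I would prove that $\bb(t) \in \pos(\cc_1(t),\dots,\cc_d(t))$ for every $t \in (0,1]$. The set of such $t$ is nonempty (containing $t=1$), and by Cramer's rule the unique coefficients $\alpha_i(t)$ in the representation $\bb(t) = \sum_i \alpha_i(t)\, \cc_i(t)$ are rational functions of $t$, hence continuous on $(0,1]$. Any sign change would require $\alpha_i(t_0) = 0$ at some $t_0 \in (0,1]$, which would place $\bb(t_0)$ inside the linear span of the remaining $d-1$ columns $\cc_j(t_0)$, contradicting Lemma~\ref{stm:p2} at parameter $t_0\eps_0$. A standard open-and-closed argument therefore forces every $\alpha_i(t)$ to stay strictly positive throughout $(0,1]$.

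Finally, I would pass to the limit $t \to 0^+$. When $\pp_1,\ldots,\pp_d$ are linearly independent, $\det M(0) \neq 0$, so Cramer's rule extends the $\alpha_i(t)$ continuously to $t=0$, and $\alpha_i^\star := \lim_{t \to 0^+} \alpha_i(t) \geq 0$ yields $\bb = \sum_i \alpha_i^\star \pp_i \in \pos(C')$. The main technical obstacle is the degenerate sub-case where the $\pp_i$ are linearly dependent, in which the $\alpha_i(t)$ can diverge. I would handle this by normalizing via $A(t) = 1 + \sum_i \alpha_i(t)$ and extracting subsequential limits: either the $\alpha_i(t)$ remain bounded and the non-degenerate argument carries over, or some diverge, in which case the normalized coefficients $\beta_i(t) = \alpha_i(t)/A(t)$ produce a nontrivial nonnegative relation $\sum_i \beta_i^\star \pp_i = 0$. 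Combined with a Farkas-style separation of a hypothetical $\bb \notin \pos(C')$ from the closed cone $\pos(C')$, together with the fact that the moment-curve term $\mm_{t\eps_0}$ occupies polynomial degrees strictly separated from those of the perturbation directions $s_i\eps_0 \e_{j_i}$ (exactly the gap exploited by Lemma~\ref{stm:perturb}), this relation can be shown to contradict the positivity of $\alpha_i(t)$ for small $t$, delivering the conclusion $\bb \in \pos(C')$.
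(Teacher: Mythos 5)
Your deformation $t\mapsto(\cc_i(t),\bb(t))$ is exactly the paper's homotopy $\eps\mapsto(C(\eps),\bb(\eps))$ with $\eps=t\eps_0$, and your first two paragraphs are a correct (and somewhat more explicit) rendering of the paper's core step: Lemma~\ref{stm:p2} forbids $\bb(t)$ from ever lying in the span of $d-1$ of the moving points, so the Cramer coefficients cannot change sign and $\bb(t)\in\inter\poss{C(t)}$ for all $t\in(0,1]$. Up to that point the two arguments coincide.

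The gap is in the passage to $t=0$ when the limit points $\pp_1,\dots,\pp_d$ are linearly dependent, which is exactly the case where the $\alpha_i(t)$ can diverge. Your bounded case is fine (compactness gives $\bb=\sum_i\alpha_i^\star\pp_i$ with $\alpha_i^\star\geq 0$, no independence needed), but the unbounded case is not closed by what you wrote. The normalized relation $\sum_i\beta_i^\star\pp_i=\0$ with $\beta^\star\geq\0$, $\beta^\star\neq\0$ is not by itself contradictory, and the Farkas certificate for a hypothetical $\bb\notin\pos(C')$ only yields $\langle \hh,\pp_i\rangle\leq 0$ for all $i$ with $\langle \hh,\bb\rangle>0$. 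Plugging this into $\langle \hh,\bb(t)\rangle=\sum_i\alpha_i(t)\langle \hh,\cc_i(t)\rangle$ gives nothing: for an index with $\langle \hh,\pp_i\rangle=0$ one has $\langle \hh,\cc_i(t)\rangle=t s_i\eps_0 (\hh)_{j_i}$, possibly positive, while $\alpha_i(t)$ may grow like $1/t$ or faster, so the corresponding term need not vanish and no sign contradiction follows. The appeal to the degree separation of the moment-curve perturbation is a gesture, not an argument --- Lemma~\ref{stm:perturb} controls when determinants vanish, not the asymptotics of these products. The paper instead finishes by choosing a hyperplane through $\0$ that keeps every point of $C(0)\cup\{\bb(0)\}$ at positive distance and arguing that, by continuity, the same hyperplane separates $\bb(\tau)$ from $\poss{C(\tau)}$ for small $\tau>0$, contradicting $\bb(\tau)\in\inter\poss{C(\tau)}$. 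If you want to complete your write-up, you should either adopt that separation-plus-continuity finish or give a genuine quantitative argument in the divergent case; as it stands, the one sub-case you yourself flag as the main obstacle is the one that is not proved.
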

\begin{proof}
We prove the statement by letting $\eps$ go continuously from $\eps_0$ to $0$.
This corresponds to moving the points in $C$ and $\bb(\eps)$ continuously
from their perturbed positions back to their original positions. We argue that
throughout this motion, $\bb(\eps)$ cannot escape the embrace of the 
colorful choice. 

The coordinates of the points in $C$ are defined by polynomials in the parameter
$\eps$, and we write $C(\eps)$ for the parametrized points.
Then, 
$C = C(\eps_0)$ and $C' = C(0)$.
By Lemma~\ref{stm:p2}, for all $\eps \in (0,\eps_0]$, the point
$\bb(\eps)$ does not lie in any linear subspace spanned by $d-1$
points from $C(\eps)$. It follows that initially $\bb(\eps_0) \in \inter
\poss{C(\eps_0)}$ 
and therefore  $\bb(\eps) \in \inter \poss{C(\eps)}$ for all $\eps \in (0,\eps_0]$.
Assume now that $\bb(0) \notin \poss{C(0)}$. Then, there exists a 
hyperplane $h$ through $\0$ that strictly separates $\bb(0)$ from $C(0)$.
Because the $\ell_2$-distance between $h$ and any point in $C(0) \cup
\{\bb(0)\}$
is positive, there is a $\tau \in (0, \eps_0)$
such that $h$ separates $\bb(\tau)$ from $C(\tau)$, and hence also
from $\poss{C(\tau)}$. 
This is impossible, since we showed that $\bb(\eps) \in \inter\poss{C(\eps)}$
for all $\eps \in (0,\eps_0]$.
\end{proof}

We can now combine the previous lemmas to obtain our desired result on
equivalent instances for \CCP.
\begin{lemma}\label{stm:ccpgpos}
Let $I = \lt(C'_1,\dots,C'_d, \bb'\rt)$ be an instance of \CCP, where $C'_i
\subset \Q^d$
ray-embraces the point $\bb' \in \Q^d$
for all $i \in [d]$. Then, we can construct in polynomial time an instance
$I^\apx=\lt(\CA_1,\dots,\CA_d,\ba\rt)$ of \CCP with
properties~\ref{p:inst}--\ref{p:eqsol}.
\end{lemma}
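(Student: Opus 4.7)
The plan is to assemble the pieces that Lemmas~\ref{stm:p1}--\ref{stm:p3} already provide, taking care of two technical points: integrality of the output and polynomial bit-complexity of every intermediate object.

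First I would carry out the preprocessing described in Section~\ref{sec:eqccp:const}: apply \Caratheodory to trim each $C'_i$ to at most $d$ points; then replace each rational point $\pp' \in C'_i \cup\{\bb'\}$ by $z(\pp') = |\psi|\,\pp'$ with $\psi$ the least common multiple of its denominators; finally rescale a second time (multiplying every surviving vector by the same integer) so that $\|\bb\|_1 \geq \|\pp\|_1$ for every $\pp$ in a color class. This is a polynomial-time, invertible linear transformation on all vectors simultaneously, so the rescaled instance $(C_1,\dots,C_d,\bb)$ is equivalent to $I$ and has integer coordinates of polynomial bit-length. Let $m$ be the largest absolute value occurring and set $N = d!m^d$, $\eps_0 = N^{-2}$.

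Next I would set $\CA_i := C_i(\eps_0)$ and $\ba := \bb(\eps_0)$ as defined right before Lemma~\ref{stm:p2}, and then clear denominators by multiplying \emph{every} resulting vector (all points in every $\CA_i$ and $\ba$) by the common denominator $D$ of $\eps_0^{d^2}$, which is a power of $N$ and thus encodable in polynomially many bits. Since positive scaling preserves both positive spans and colorful choices, property~\ref{p:inst} for the scaled-up instance is equivalent to it for the unscaled one: each $\CA_i$ has size $2d\cdot|C_i|$ (still polynomial), has integer coordinates after multiplying by $D$, and by Lemma~\ref{stm:p1} ray-embraces $\ba$; moreover $\ba \neq \0$ because its first coordinate equals $(\bb)_1 + \eps_0^d$ (or if $(\bb)_1 = 0$, it is $\eps_0^d > 0$). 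Property~\ref{p:subspace} follows directly from Lemma~\ref{stm:p2}, since scaling by $D$ does not change linear dependencies. For property~\ref{p:eqsol}, given a colorful choice $\CA \subseteq \bigcup_i \CA_i$ ray-embracing $\ba$, I would undo the $D$-scaling and apply Lemma~\ref{stm:p3} to recover, in polynomial time, a set $C' \subseteq \bigcup_i C_i$ of one point per color class that ray-embraces $\bb$; then invert the initial integer rescaling to get a colorful choice for the original instance $I$ ray-embracing $\bb'$.

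The only real obstacle is bookkeeping: one must check that $\eps_0 = N^{-2}$ is small enough for \emph{both} Lemma~\ref{stm:p1} (which needs $\eps_0 \leq 1/2$ and was set up under that assumption) and Lemma~\ref{stm:p2} (which needs $\eps_0 \leq N^{-2}$), and that the denominator $D$ introduced to integerize the output is polynomial in the input length. Both are immediate from $m \geq 1$ and $d \geq 1$: $N$ has bit-length $O(d\log m + d\log d)$, so $\eps_0$ and $D = N^{2d^2}$ have polynomial bit-length, and every vector in the final instance is a sum of an input coordinate and a monomial in $\eps_0$, each scaled by $D$, hence an integer of polynomial bit-length. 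A short final paragraph checking the edge cases (small $d$, or color classes already of size $d$) rules out the corner assumptions used inside Lemma~\ref{stm:p2}, completing the proof.
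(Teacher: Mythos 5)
Your overall route is the same as the paper's: preprocess (trim via \Caratheodory, rescale to integers, normalize so $\|\bb\|_1 \geq \|\pp\|_1$), perturb each point to $P_{\eps_0}(\pp)$ and shift $\bb$ along the moment curve, invoke Lemmas~\ref{stm:p1}--\ref{stm:p3}, and clear denominators at the end. The bookkeeping on bit-lengths and on the edge-case assumptions of Lemma~\ref{stm:p2} is also fine.

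However, there is one genuine gap: you set $\CA_i := C_i(\eps_0)$ and explicitly accept that this class has size $2d\cdot|C_i|$, remarking only that this is ``still polynomial.'' Property~\ref{p:inst} requires each $\CA_i$ to have size \emph{exactly} $d$, and this is not cosmetic --- the \PPAD construction in Section~\ref{sec:ccpppad} builds the $d\times d^2$ matrix $A$ from exactly $d$ points per color, and a ``colorful choice'' is defined as one point per class, so a class of size $2d^2$ breaks the downstream machinery. The paper repairs this with a second application of \Caratheodory's theorem \emph{after} the perturbation: since Lemma~\ref{stm:p1} guarantees $\bb(\eps_0) \in \pos(C_i(\eps_0))$, one can extract a ray-embracing subset of size at most $d$ (and by Lemma~\ref{stm:p2} in fact exactly $d$) in polynomial time via linear programming; Lemma~\ref{stm:p2} and Lemma~\ref{stm:p3} are inherited by subsets, so properties~\ref{p:subspace} and~\ref{p:eqsol} survive the trimming. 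Adding this step would complete your argument.
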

\begin{proof}
We construct the point sets $C_1(\eps_0), \dots, C_d(\eps_0)$ and the point
$\bb(\eps_0)$ as discussed above.
Since $\log \eps_0^{-1} $ is polynomial in the size of $I$, this needs
polynomial time.
By Lemma~\ref{stm:p1}, each color class $C_i(\eps_0)$ ray-embraces
$\bb(\eps_0)$, so we can apply \Caratheodory's theorem to reduce the size of
$C_i(\eps_0)$ to $d$ while maintaining the property that
$\bb(\eps_0)$ is ray-embraced. Again, we need only polynomial time for
this step.
Finally, as described at the beginning of this section, we rescale the points to
lie on the
integer grid in polynomial time. Let $\CA_i$ denote the resulting point set
for $C_i(\eps_0)$, where $i \in [d]$, and let $\ba$ be the point $\bb(\eps_0)$
scaled to the integer grid. Then, properties~\ref{p:inst}--\ref{p:eqsol} are
direct consequences of this construction and Lemmas~\ref{stm:p2},~\ref{stm:p1},
and~\ref{stm:p3}.
\end{proof}

\section{The Colorful \Caratheodory Theorem is in \PLS}
\label{sec:app:pls}
\subsection{The Complexity Class \PLS}

The complexity class \emph{polynomial-time local search}
(\PLS)~\cite{JohnsonPaYa1988,AartsLe2003,AartsMiKo2007}
captures the complexity of local-search problems that can be solved by a
local-improvement algorithm, where each improvement step can be
carried out in polynomial time, however the number of necessary
improvement
steps until a local optimum is reached may be exponential.
The existence of a local optimum is guaranteed as the
progress of the algorithm can be measured using a potential function
that strictly decreases with each improvement step.

More formally, a problem in \PLS is a relation $\mc{R}$ between a set of
\emph{problem instances} $\mc{I} \subseteq \{0,1\}^\star$ and a set of
\emph{candidate solutions} $\mc{S} \subseteq \{0,1\}^\star$. Assume further the
following.

\begin{itemize}
\item The set $\mc{I}$ is polynomial-time verifiable. Furthermore,
there exists an algorithm that, given an instance $I \in \mc{I}$ and a
candidate solution $s \in \mc{S}$, decides in time $\poly(|I|)$ whether
$s$ is a \emph{valid} candidate solution for $I$.
In the following, we denote with $\mc{S}_\mc{I} \subseteq \mc{S}$
the set of valid candidate solutions for a fixed instance $I$.
\item There exists a polynomial-time algorithm that on input $I \in
\mc{I}$ returns a valid candidate solution $s \in \mc{S}_\mc{I}$.
We call $s$ the \emph{standard solution}.
\item There exists a polynomial-time algorithm that on input $I \in
\mc{I}$ and $s \in \mc{S}_\mc{I}$ returns a set $N_{I,s} \subseteq
\mc{S}_\mc{I}$ of valid candidate solutions for $I$. We call $N_{I,s}$ the
\emph{neighborhood} of $s$.
\item There exists a polynomial-time algorithm that on input $I \in
\mc{I}$ and $s \in \mc{S}_I$ returns a number $c_{I,s} \in \Q$. We
call $c_{I,s}$ the \emph{cost} of $s$.
\end{itemize}

We say a candidate solution $s \in \mc{S}$ is a \emph{local optimum} for an
instance $I \in \mc{I}$ if $s \in \mc{S}_I$ and for all $s' \in N_{I,s}$, we have
$c_{I,s} \leq c_{I,s'}$ in case of a minimization problem, and $c_{I,s} \geq
c_{I,s'}$ in case of a maximization problem.
The relation $\mc{R}$ then consists of all pairs $(I,s)$ such that $s$ is a
local optimum for $I$. This formulation implies a simple algorithm, that we call
the \emph{standard algorithm}: begin with the standard solution, and then
repeatedly invoke the neighborhood-algorithm to improve the current solution
until this is not possible anymore. Although each iteration of this
algorithm can be carried out in polynomial time, the total number of iterations
may be exponential. There are straightforward examples in which this algorithm takes
exponential time and even more, there are \PLS-problems for which it is
\PSPACE-complete to compute the solution that is returned by the standard
algorithm~\cite[Lemma~15]{AartsLe2003}.

Similar to \PPAD, each problem instance $I$ of a \PLS-problem can be seen as a
simple graph searching problem on a graph $G_I=(V,E)$. The set of nodes is the
set of valid candidate solutions for $I$ and there is a directed edge from $u
\in \mc{S}_I$ to $v \in \mc{S}_I$ if $v \in N_{I,u}$ and $c_{I,v} < c_{I,u}$ if
it is a minimization problem, and otherwise if $c_{I,v} > c_{I,u}$. Then, the
set of local optima for $I$ is precisely the set of sinks in $G_I$. Because the
costs induce a topological ordering of the graph, at least one sinks exists.

\subsection{A \PLS Formulation of the Colorful \Caratheodory Problem}

The proof of the colorful \Caratheodory theorem by
\Barany~\cite{Barany1982} admits a straightforward formulation of
\CCP as a \PLS-problem. The only difficulty resides in the computation of the
potential
function: given a set of $d$ points $C \subset \Q^d$ and a point $\bb
\in \Q^d$, we need to be able to compute the point $\pp^\star \in \pos(C)$ with
minimum $\ell_2$-distance to $\bb$ in polynomial time. This problem can be
solved with convex quadratic programming.

We say a matrix $B \in \R^{n\times n}$ is \emph{positive semidefinite} if $B$ is
symmetric and for
all $\xx \in\R^n$, we have $\xx^T B \xx \geq 0$. Then, a \emph{convex quadratic
program} is given by
\begin{align*}
  Q: & \begin{aligned}[t]
  & \min c(\xx) \\
  \text{s.t.} & \begin{aligned}[t]
                 A\xx &=\bb, \\
                   \xx&\geq\0,
                \end{aligned}
  \end{aligned}
\end{align*}
where $\xx \in \R^n$, $\bb \in \Q^d$, $A \in\Q^{d \times n}$, and
the cost function $c : \R^n \mapsto \R$ is defined as
\[
  c(\xx) = \frac{1}{2} \xx^T B \xx + \qq^T \xx,
\]
where the matrix $B \in\Q^{n \times n}$ is positive
semidefinite and $\qq \in \Q^n$. We say a vector $\xx \in \R^n$ is a
\emph{feasible solution} for $Q$ if $A \xx = \bb$ and $\xx \geq \0$.
Furthermore, we say feasible solution $\xx \in \R^n$ is \emph{optimal} for $Q$
if there is no feasible solution $\xx' \in \R^n$ such that $c(\xx') < c(\xx)$.
Convex quadratic programs are known to be solvable in $\Oh{\poly(d,n) L}$ time,
where $L$ is the length of the quadratic program in
binary~\cite{kozlov1980polynomial,KapoorVa1986}.

\begin{lemma}\label{lem:pls_costs}
  Let $C \subset \Q^d$ be a set of size $d$ and let $\bb \in \Q^d$ be a point
  such that $C$ and $\bb$ can be encoded with $L$ bits.
  Then, we can compute the point $\pp^\star \in \pos(C)$ with minimum
  $\ell_2$-distance to $\bb$ in $\Oh{\poly(d) L}$ time.
\end{lemma}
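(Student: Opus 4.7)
The plan is to cast the problem as a convex quadratic program and apply the polynomial-time algorithms of~\cite{kozlov1980polynomial,KapoorVa1986}. Let $A' \in \Q^{d \times d}$ be the matrix whose columns are the points of $C$, so that $\pos(C) = \{A'\xx \mid \xx \in \R^d,\, \xx \geq \0\}$. Then the point $\pp^\star \in \pos(C)$ of minimum $\ell_2$-distance to $\bb$ is exactly $A'\xx^\star$, where $\xx^\star$ is any optimal solution of
\[
  \min \lt\|A'\xx - \bb\rt\|_2^2 \quad\text{subject to } \xx \geq \0.
\]

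Expanding the squared norm yields the objective
\[
  c(\xx) = \tfrac{1}{2}\xx^T\lt(2 A'^T A'\rt)\xx + \lt(-2 A'^T \bb\rt)^T \xx + \|\bb\|_2^2,
\]
where the matrix $B = 2 A'^T A'$ is positive semidefinite since $\xx^T A'^T A' \xx = \|A' \xx\|_2^2 \geq 0$ for every $\xx$, the vector $\qq = -2 A'^T \bb$ lies in $\Q^d$, and the constant $\|\bb\|_2^2$ has no influence on the minimizer and can be dropped. To match the equality-constrained standard form of a convex QP recalled above, I would introduce two nonnegative slack vectors $\pp^+,\pp^- \in \R^d_{\geq 0}$ and reformulate in variables $(\xx,\pp^+,\pp^-) \in \R^{3d}_{\geq 0}$ with equality $A'\xx - \pp^+ + \pp^- = \0$ and objective $\|\pp^+ - \pp^- - \bb\|_2^2$; the resulting quadratic form is still positive semidefinite, and any optimum recovers $\pp^\star = \pp^{+\star} - \pp^{-\star} = A' \xx^\star$.

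All quantities appearing in the quadratic program ($A'$, $A'^T A'$, $A'^T \bb$) are rationals of bit-complexity polynomial in $L$, so the total binary encoding length of the QP is $\Oh{\poly(d)\, L}$. Applying the convex quadratic programming algorithm of Kozlov, Tarasov and Khachiyan~\cite{kozlov1980polynomial} (or Kapoor and Vaidya~\cite{KapoorVa1986}) therefore yields an optimal solution in $\Oh{\poly(d)\, L}$ time, and a single matrix--vector multiplication in $\Oh{d^2}$ time recovers $\pp^\star = A'\xx^\star$. There is no real obstacle here; the lemma is essentially bookkeeping, verifying that the distance-minimization problem fits the convex QP form and that the coefficient sizes do not blow up.
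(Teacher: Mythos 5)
Your proposal is correct and follows essentially the same route as the paper: both reduce the projection of $\bb$ onto $\pos(C)$ to a convex quadratic program in the standard equality-constrained form and invoke the polynomial-time algorithms of Kozlov--Tarasov--Khachiyan and Kapoor--Vaidya, with the bit-complexity bookkeeping you describe. The only (immaterial) difference is how the QP is normalized: the paper forces the slack pair to equal $\pp - \bb$ directly via an auxiliary variable fixed to $1$, obtaining a purely homogeneous PSD quadratic objective, whereas you keep $\bb$ in the objective as a linear term, which the cited standard form permits.
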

\begin{proof}
  First, we observe that it is sufficient to compute the point $\pp^\star \in
  \pos(C)$ such that 
  \[
  \lt\| \pp^\star - \bb \rt\|_2^2 = \sum_{i=1}^d (\pp^\star - \bb)_i^2
  \]
  is minimum. Let $A$ be the matrix
\begin{equation*}
\newcommand{\diagblock}{\multicolumn{1}{|c}{1} & \multicolumn{1}{c|}{-1}}
  A = \lt(\begin{array}{*{7}{c}|c|c}
  \cline{1-2}
    \diagblock & &  &  & & 0 & & \\ \cline{1-2}\cline{3-4}
    & & \diagblock & & & & C & -\bb  \\ \cline{3-4}
    & & & & \ddots & & & &   \\ \cline{6-7}
    0 & & & & & \diagblock &   \\ \hline
    0 & \multicolumn{6}{c}{\cdots} & 0 & 1
  \end{array}\rt) \in \Q^{(d+1) \times (3d+1)}
\end{equation*}
and let $\bb'$ denote the vector
\[
\bb' = \left(\begin{array}{c}
    0 \\
    \vdots \\
    0 \\ \hline
    1
\end{array}\right) \in \Q^{d+1}.
\]
Furthermore, let $\xx \in \R^{3d+1}$ be a feasible solution to the linear
system
\begin{equation}
\label{eq:pls:linsystem}
A \xx = \bb',\, \xx \geq \0
\end{equation}
and let $\cc_1,\dots,\cc_d$ denote the
points in $C$ ordered according to their respective column indices in
$A$. Write $\xx$ as
\[
\xx = \left(
\begin{array}{*{11}c}
  x_1^+ &
  x_1^- &
  x_2^+ &
  x_2^- &
  \dots &
  x_d^+ &
  x_d^- &
  \multicolumn{1}{|c}{\psi_1} &
  \dots &
  \multicolumn{1}{c|}{\psi_d} &
  x_\bb
\end{array}
\right)^T \in \R^{3d+1},
\]
where $x_i^+,x_i^- \in \R_+$ for $i \in [d]$, $\psi_i \in \R_+$ for $i \in [d]$,
and $x_\bb \in \R_+$. Since $\xx\geq 0$, the point
\[
  \pp = \sum_{i=1}^{d} \psi_i \cc_i
\]
is contained in the positive span of $C$. Furthermore, by the last equality of
(\ref{eq:pls:linsystem}), we have $x_\bb = 1$ and thus for $i \in [d]$, the $i$th
equality of (\ref{eq:pls:linsystem}) is equivalent to
\begin{equation}
  \label{eq:pls:itheq}
  x_i^+ - x_i^- = (\pp)_i - (\bb)_i.
\end{equation}
Now, let $B'$ denote the matrix
\begingroup
\newcommand{\twocells}[2]{\multicolumn{1}{|c}{#1} & \multicolumn{1}{c|}{#2}}
\newcommand{\upperBlock}{\twocells{1}{-1}}
\newcommand{\lowerBlock}{\twocells{-1}{1}}
\[
B' = \left(
\begin{array}{*7c}
\cline{1-2}
\upperBlock & & & & & 0 \\
\lowerBlock & & & & & \\ \cline{1-2} \cline{3-4}
& & \upperBlock & & & \\
& & \lowerBlock & & & \\ \cline{3-4}
& & & &\ddots & & \\  \cline{6-7}
& & & & & \upperBlock \\
0 & & & & & \lowerBlock \\ \cline{6-7}
\end{array}
\right) \in \Q^{(2d) \times (2d)}
\]
\endgroup
and set
\begin{equation*}
  B = \lt(\begin{array}{c|c}
    2 B' & Z_{(2d)\times (d+1)} \\ \hline
    \multicolumn{2}{c}{Z_{(d+1) \times (3d+1)}}
  \end{array}\rt) \in \Q^{(3d+1) \times (3d+1)},
\end{equation*}
where $Z_{a\times b} \in \Q^{a\times b}$ denotes the all-$0$ matrix with $a$
rows and $b$ columns. We claim that $\frac{1}{2} \xx^T B \xx = \| \pp - \bb
\|_2^2$. Indeed, by definition of $B$ and using~(\ref{eq:pls:itheq}), we have
\begin{equation*}
\frac{1}{2} \xx^T B \xx
= \sum_{i=1}^{d} x^+_i \lt(x^+_i - x^-_i\rt) + x^-_i
    \lt(x^-_i - x^+_i\rt)
= \sum_{i=1}^{d} \lt(x^+_i - x^-_i\rt)^2
= \sum_{i=1}^{d} \lt((\pp)_i - (\bb)_i\rt)^2
= \lt\|\pp - \bb\rt\|_2^2.
\end{equation*}
Because $B$ is symmetric, this further implies that $B$ is positive
semidefinite.

Let now $\xx^\star$ be an optimal
solution to the convex quadratic program
\begin{align*}
  \begin{aligned}[t]
  & \min \frac{1}{2} \xx^T B \xx \\
  \text{s.t.} & \begin{aligned}[t]
                 A\xx &=\bb, \\
                   \xx&\geq\0.
                \end{aligned}
  \end{aligned}
\end{align*}
Then, the point
\[
\pp^\star = \sum_{i=2d+1}^{3d} \lt(\xx^\star\rt)_i \cc_i \in \Q^d
\]
is contained in the positive span of $C$.
Moreover, since $\frac{1}{2} (\xx^\star)^T B \xx^\star = \|\pp^\star -
\bb\|_2^2$ is minimum over all feasible solutions and hence over all points in
the positive span of $C$, $\pp^\star$ is the point
in $\pos(C)$ with minimum $\ell_2$-distance to $\bb$. Using the algorithm from
\cite{KapoorVa1986} or \cite{kozlov1980polynomial}, we can compute $\pp^\star$
in $\Oh{\poly(d) L}$ time.
\end{proof}

Having an algorithm to compute the potential function in polynomial time,
we only need to translate the above proof of the colorful \Caratheodory theorem
to the language of \PLS.

\begin{theorem}\label{thm:pls}
  The problems \CCP, \Centerpoint, \Tverberg, and \SimCenter are in $\PPAD \cap \PLS$.
\end{theorem}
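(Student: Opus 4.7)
Plan: the \PPAD\ containment is exactly Theorem~\ref{thm:ppad}, and the polynomial-time reductions of Section~\ref{sec:app:reductions} transfer any upper bound on \CCP\ to \Centerpoint, \Tverberg, and \SimCenter. So the only new content is a \PLS\ formulation of \CCP, which I would model directly on B\'ar\'any's original potential-function proof of the colorful \Caratheodory\ theorem, using the squared Euclidean distance from $\ba$ to $\pos(C)$ as the potential.

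I would first preprocess the input via Lemma~\ref{stm:ccpgpos} into a general-position instance $I^\apx$ satisfying properties~\ref{p:inst}--\ref{p:eqsol}. Valid candidate solutions are either a colorful choice $C$ for $I^\apx$ or a colorful choice for the original $I$ that ray-embraces $\bb$, both polynomial-time verifiable. The standard solution is an arbitrary colorful choice (one point per color class), and the neighborhood $N_{I,C}$ consists of all $O(d^2)$ colorful choices obtained from $C$ by swapping one $c_i \in C$ with some $c \in \CA_i$. The cost is
\[
  c_{I,C} = \min_{\pp \in \pos(C)} \|\pp - \ba\|_2^2,
\]
computable in polynomial time via the convex quadratic program of Lemma~\ref{lem:pls_costs}.

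The core argument is that a local minimum $C$ satisfies $\ba \in \pos(C)$, after which property~\ref{p:eqsol} extracts a solution to the original $I$. Suppose for contradiction that $\pp^\star \ne \ba$, where $\pp^\star = \sum_i \lambda_i c_i$ achieves the minimum. The Karush--Kuhn--Tucker conditions give $c_i^T(\ba - \pp^\star) \le 0$ for all $i$, with equality whenever $\lambda_i > 0$, and $(\pp^\star)^T(\ba - \pp^\star) = 0$. So the hyperplane $H = \{x : x^T(\ba - \pp^\star) = 0\}$ passes through the origin, contains every active $c_i$, and has $\ba$ strictly on its positive side. Property~\ref{p:subspace} prevents $\pp^\star$ from being in the relative interior of a face whose support is all of $C$, so some $c_m \in C$ is inactive, i.e., $\lambda_m = 0$. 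Expanding $\ba = \sum_{c \in \CA_m} \alpha_c c$ with $\alpha_c \ge 0$ and taking the inner product with $\ba - \pp^\star$ forces some $c' \in \CA_m$ with $(c')^T(\ba - \pp^\star) > 0$. Let $C' = (C \setminus \{c_m\}) \cup \{c'\}$, a single-swap neighbor. Since $\lambda_m = 0$, we have $\pp^\star = \sum_{i \ne m} \lambda_i c_i \in \pos(C')$, hence $\pp^\star + t c' \in \pos(C')$ for every $t \ge 0$, and
\[
  \|\pp^\star + t c' - \ba\|_2^2 = \|\pp^\star - \ba\|_2^2 - 2t\,(c')^T(\ba - \pp^\star) + t^2 \|c'\|_2^2,
\]
which is strictly less than $\|\pp^\star - \ba\|_2^2$ for all sufficiently small $t > 0$. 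Thus $c_{I, C'} < c_{I, C}$, contradicting local optimality.

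The main obstacle is ensuring that the inactive index $m$ always exists when $\pp^\star \ne \ba$: without general position, $\pp^\star$ could conceivably lie in the relative interior of a face whose full support is $C$, and then the B\'ar\'any swap is unavailable. Property~\ref{p:subspace}, which forbids $\ba$ from lying in the linear span of any $(d-1)$-subset of $\bigcup_i \CA_i$, combined with the explicit integer-coordinate perturbation of Section~\ref{sec:app:eqccp}, suffices to rule this out; the one point that needs verification is that the perturbation does not destroy the existence of an inactive coordinate in any representation of $\pp^\star$. The remaining \PLS\ bookkeeping (verification of candidates, standard solution, neighborhood, cost) is routine once Lemma~\ref{lem:pls_costs} is in hand, which together with Theorem~\ref{thm:ppad} and Section~\ref{sec:app:reductions} places all four problems in $\PPAD \cap \PLS$.
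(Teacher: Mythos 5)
Your proposal matches the paper's proof: the \PPAD{} part is Theorem~\ref{thm:ppad}, the other three problems follow via the reductions of Lemma~\ref{lem:ccppolydesc}, and the \PLS{} formulation is exactly the paper's --- \Barany{} swap neighborhoods with the minimum distance from $\pos(C)$ to the target as cost, computed by the convex quadratic program of Lemma~\ref{lem:pls_costs} --- except that the paper works on the unperturbed instance and leaves the local-optimality argument you spell out implicit. The one step you flag, the existence of an inactive $c_m$, does hold, but not really because of Property~\ref{p:subspace}: it follows from conic \Caratheodory{} when $C$ is linearly dependent, and, when $C$ is linearly independent, from the fact that the nearest point of the full-dimensional simplicial cone $\pos(C)$ to a point outside it lies on its boundary.
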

\begin{proof}
  By Theorem~\ref{thm:ppad}, \CCP is in \PPAD. We now give a formulation of \CCP as a
  \PLS-problem. Then statement is then implied by Lemma~\ref{lem:ccppolydesc}.

  The set of problem instances $\mc{I}$ consists of all tuples
  $(C_1,\dots,C_d,\bb)$, where $d \in \N$, $\bb \in \Q^d$, $\bb \neq \0$, and
  for all $i \in [d]$, we have $C_i \subset \Q^d$ and $C_i$ ray-embraces $\bb$.
  The set of candidate solutions $\mc{S}$ then consists of all $d$-sets $C
  \subset \Q^d$, where $d \in \N$. Furthermore, for a given instance $I =
  (C_1,\dots,C_d,\bb)$, we
  define the set of valid candidate solutions $\mc{S}_I$ as the set of all
  colorful choices with respect to $C_1,\dots,C_d$. Using linear programming, we
  can check whether a given tuple $I= (C_1,\dots,C_d,\bb)$ is contained in
  $\mc{I}$ and
  clearly, we can check in polynomial time whether a set $C \subset \Q^d$ is a
  colorful choice with respect to $I$ and hence whether $C \in \mc{S}_I$.

  Let now $I \in \mc{I}$ be a fixed instance and $s \in \mc{S}_I$ a valid
  candidate solution. We then define the neighborhood $N_{I,s}$ of $s$ as the
  set of all colorful choices that can be obtained
  by swapping one point in $s$ with another point of the same color. The set
  $N_{I,s}$ can be constructed in polynomial time.

  We define the cost $c_{I,s}$ of a colorful choice $s$ as the minimum
  $\ell_2$-distance of a point in $\pos(s)$ to $\bb$. Using the algorithm from
  Lemma~\ref{lem:pls_costs}, we can compute $c_{I,s}$ in polynomial time. Finally, we
  set the standard solution the colorful choice that consists of the first point
  from each color class.
\end{proof}

\section{The Polytopal Complex}
\label{sec:app:polycompl}

We begin with the following standard lemma that bounds the
bit-complexity of basic feasible solutions for a linear program.

\begin{lemma}\label{lem:lpsol}
Let $L: A \xx = \bb$ be a linear system, where
$A \in \Z^{d\times n}$ and $\bb \in \Z^d$. Furthermore,
let $B$ be a feasible basis for $L$ and let $\xx$ be the corresponding
basic feasible solution. Let $m$ denote the largest absolute value of the
entries in $A$ and $\bb$, and set $N = d!m^d$. Then for $i \in \ind{B}$, we
have $\lt|(\xx)_i\rt| = \frac{n_i}{|\det A_\ind{B}|}$, where $n_i \in [N]_0$ and
$\lt|\det A_\ind{B}\rt| \in [N]$. For $i \in [n] \setminus \ind{B}$, we have $(\xx)_i
= 0$.
\end{lemma}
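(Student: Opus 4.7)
The plan is straightforward: apply Cramer's rule to the $d \times d$ system restricted to the basis columns, and bound the resulting determinants via the Leibniz formula.

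First, the claim for $i \in [n] \setminus \ind{B}$ is immediate: by definition of a basic feasible solution, all non-basic coordinates of $\xx$ are zero. So it only remains to analyze $(\xx)_i$ for $i \in \ind{B}$.

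Since $B$ is a basis, the submatrix $A_\ind{B}$ is an invertible $d \times d$ integer matrix, and the basic portion $\xx_\ind{B}$ is the unique solution of $A_\ind{B}\, \xx_\ind{B} = \bb$. By Cramer's rule,
\[
(\xx_\ind{B})_j = \frac{\det\bigl(A_\ind{B}^{(j)}\bigr)}{\det A_\ind{B}},
\]
where $A_\ind{B}^{(j)}$ is obtained from $A_\ind{B}$ by replacing its $j$-th column with $\bb$. Both numerator and denominator are integers, since $A \in \Z^{d \times n}$ and $\bb \in \Z^d$. The denominator is nonzero because $B$ is a basis, so $|\det A_\ind{B}| \geq 1$.

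To bound the size of the determinants, I would invoke the Leibniz formula: for any $d \times d$ integer matrix $M$ whose entries have absolute value at most $m$,
\[
|\det M| \;=\; \left|\sum_{\sigma \in S_d} \sgn(\sigma)\prod_{j=1}^d M_{j,\sigma(j)}\right| \;\leq\; \sum_{\sigma \in S_d} \prod_{j=1}^d \bigl|M_{j,\sigma(j)}\bigr| \;\leq\; d!\, m^d \;=\; N.
\]
This applies to both $A_\ind{B}$ and $A_\ind{B}^{(j)}$ since every entry of these matrices is either an entry of $A$ or an entry of $\bb$, hence has absolute value at most $m$. Therefore $|\det A_\ind{B}|$ is a positive integer in $[N]$, while $|\det A_\ind{B}^{(j)}|$ is a nonnegative integer in $[N]_0$; setting $n_i = |\det A_\ind{B}^{(j)}|$ (for $i$ the column index of the $j$-th basis column) gives the claimed form. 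There is no real obstacle here, as the proof is a direct combination of two standard facts.
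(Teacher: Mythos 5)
Your proposal is correct and follows essentially the same route as the paper's proof: Cramer's rule applied to $A_{\ind{B}}\,\xx_{\ind{B}} = \bb$, integrality of both determinants, and the Leibniz-formula bound $d!\,m^d = N$ on their absolute values. No gaps.
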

\begin{proof}
  Set $A' = A_\ind{B}$.
  By definition of a feasible basis, we have $\det A' \neq 0$, and by definition
  of a basic feasible solution $\xx$, we have $A' \xx_\ind{B} = \bb$ with
  $\xx \geq \0$ and $(\xx)_j = 0$ for $j \in [n] \setminus \ind{B}$. Applying
  Cramer's rule~\cite{MatousekGa2007}, we can express the
  $i$th coordinate of $\xx_\ind{B}$ as $\det A'_i/\det
  A'$, where $i \in [d]$ and $A'_i$ is the matrix that we obtain by replacing
  the $i$th column of $A'$ with $\bb$. Using the Leibniz formula, we can bound the
  determinant:
\[
  \lt|\det A'\rt| = \Biggl|\sum_{\sigma \in S_d} \sgn(\sigma)
  \prod_{i=1}^d \lt(A'\rt)_{i, \sigma(i)}\Biggr| \leq d!\, m^d =  N.
\]
And similarly, $\lt|\det A'_i\rt| \leq N$ can be obtained.
Because $\xx$ is a basic feasible solution, we have
\[
\frac{\det A'_i}{\det A'} = (\xx)_i  \geq 0.
\]
Moreover, since $A'$ and $\bb$ contain
only integer entries, the determinants $\det A'$ and $\det A'_i$
are integers. The implies the statement.
\end{proof}

Next, using the techniques from Section~\ref{sec:app:eqccp}, we can show that a
deterministic perturbation of polynomial bit-complexity ensures a non-degenerate
intersection of the parameter regions with $\MM$.

\begin{lemma}\label{g:red}
There exists a constant $c \in \N$ with $c\geq 3$ such that for $\eps = N^{-cd}$
the following holds. Let $B$ be an arbitrary but fixed feasible basis of $\LC$.
Let $h_j \subset \R^d$ denote the hyperplane
\[
  h_j = \left\{ \mm \in \R^d \midd \left(\er_{B,\cc_\mm}\right)_j =
  0 \right\},
\]
and set $H_\Phi = \lt\{ h_j \mid j \in \lt[d^2\rt] \setminus \ind{B} \rt\}$.
Furthermore, let $H_\square$ denote the set of supporting hyperplanes for
the facets of the unit cube in $\R^d$. Then, for all $k$-subsets $H'$ of  $H_\Phi \cup
H_\square$, the intersection $\bigcap_{h \in H'} h$ is either empty or has
dimension $d-k$. In particular, if $k > d$, the intersection must
be empty.
\end{lemma}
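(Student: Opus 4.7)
The plan is to reformulate Lemma~\ref{g:red} as the non-vanishing at $\eps = N^{-cd}$ of a specific integer polynomial in $\eps$ and then apply Lemma~\ref{lem:poly}. The key structural ingredient, already visible in~(\ref{eq:costs}), is that $(\cc_\mm)_j$ carries the monomial $\eps^j$; propagating this through~(\ref{eq:redcosts}) yields a separated-degree structure in the spirit of Lemma~\ref{stm:perturb}. Concretely, multiplying $(\er_{B,\cc_\mm})_j = 0$ by $\det A_\ind{B}$ (an integer of magnitude at most $N$ by Lemma~\ref{lem:lpsol}) rewrites each equation in $H_\Phi$ as $\sum_\ell a_{j,\ell}(\mm)_\ell = \rho_j(\eps)$ with $a_{j,\ell}$ an $\eps$-independent integer of magnitude $\Oh{dN^3}$ and $\rho_j(\eps) \in \Z[\eps]$ having integer coefficients of magnitude $\Oh{d^2N^3}$ whose only $\eps$-dependent monomials are $\det(A_\ind{B})\eps^j$ and $\gamma_{i,j}\eps^i$ for $i \in \ind{B}$. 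The $H_\square$ hyperplanes contribute the $\eps$-independent equations $(\mm)_\ell = 0$ or $(\mm)_\ell = 1$. Consequently, for each $j \in [d^2]\setminus\ind{B}$, the monomial $\eps^j$ occurs in the right-hand side of exactly one equation of the full system, namely the one coming from $L_j = 0$.

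Suppose for contradiction that at $\eps = N^{-cd}$ some $k$-subset $H'$ with $k \leq d$ violates the conclusion: the $k \times d$ normal matrix $A'$ has rank $r < k$ while the system is consistent. Rank-deficiency provides an $\eps$-independent nonzero rational $(c_h)_{h \in H'} \in \ker(A')$ with $|c_h| \leq N^{\Oh{d}}$ (by a Cramer-style construction on the normal matrix), and consistency forces $\sum_h c_h(\text{RHS}_h)(N^{-cd}) = 0$. Because consistency of the whole system entails consistency of its restriction to $H' \cap H_\square$, the set $H' \cap H_\square$ cannot contain both $(\mm)_\ell = 0$ and $(\mm)_\ell = 1$, so the normals of $H' \cap H_\square$ are distinct canonical vectors and therefore linearly independent. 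Hence every nontrivial $(c_h) \in \ker(A')$ must have some $c_{h^*} \neq 0$ at a hyperplane $h^* \in H' \cap H_\Phi$, say $h^* = \{L_{j_0} = 0\}$ for some $j_0 \in [d^2]\setminus\ind{B}$.

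Setting $P(\eps) = \sum_{h \in H'} c_h(\text{RHS}_h)(\eps)$, the coefficient of $\eps^{j_0}$ in $P$ equals $c_{h^*}\det(A_\ind{B}) \neq 0$ by the unique-tagging property established above, so $P$ is a nontrivial polynomial in $\Z[\eps]$. A routine estimate bounds its integer coefficients by $M = N^{c'd}$ for some absolute constant $c' \geq 3$ (using $d^d, d! \leq N^d$ to absorb $d$-factors into $N$-powers). Choosing $c = c' + 1$ makes $\eps = N^{-cd} < 1/(2M)$, so Lemma~\ref{lem:poly} yields $P(\eps) \neq 0$, contradicting consistency. The main obstacle I foresee is the quantitative coefficient bookkeeping needed to pin down the smallest viable $c$; once the ``$\eps^j$ uniquely tags $L_j = 0$ among the non-basis equations'' observation from~(\ref{eq:costs}) is extracted, the remainder reduces to a direct invocation of Lemma~\ref{lem:poly}, in close parallel to the proof strategy behind Lemma~\ref{stm:perturb}.
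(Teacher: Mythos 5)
Your proof is correct and follows essentially the same route as the paper's: both clear denominators with $\det A_{\ind{B}}$ via Lemma~\ref{lem:lpsol}, exploit the fact that the monomial $\eps^{j}$ from (\ref{eq:costs}) survives into exactly one reduced-cost equation of the system (the paper's ``$(\phi_1,\dots,\phi_n)$-separated with gap $0$'' condition), and conclude by applying Lemma~\ref{lem:poly} to a single nontrivial integer polynomial whose tagged coefficient is $\pm\det A_{\ind{B}}$. The only cosmetic difference is that you run the rank argument through a left-kernel row dependency of the full system (including the $H_\square$ rows), whereas the paper first substitutes away the $H_\square$ constraints and then invokes the prepackaged Lemma~\ref{stm:perturb}, a Cramer's-rule/column-span formulation of the same fact.
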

\begin{proof}
Let $H'$ be a $k$-subset of $H_\Phi \cup H_\square$, and suppose that
$\bigcap_{h \in H'} h \neq \emptyset$.
We denote with $H'_\Phi = H' \cap H_\Phi$ the hyperplanes from
$H_\Phi$ and similarly, we denote with $H'_\square = H' \cap H_\square$ the
hyperplanes from $H_\square$. Set $R = \lt[d^2\rt] \setminus \ind{B}$ and let
$\phi_1 < \dots < \phi_n \in R$ be the indices such that
$H'_\Phi = \{h_{\phi_1},\dots,h_{\phi_n}\}$, where $n = |H'_\Phi|$.
Then the intersection $\bigcap_{i=1}^n h_{\phi_i}$ is the
solution space to the system of linear equations
\begin{equation}\label{eq:red:hphilp}
 \lt(\lt(\cc_{\mm}\rt)_R - \lt(A^{-1}_\ind{B} A_R \rt)^T
 \lt(\cc_{\mm}\rt)_\ind{B}\rt)_{\rank_R(\phi_i)} = 0 \text{\ for $i \in [n]$},
\end{equation}
where $\rank_R(\phi_i)$ denotes the rank of $\phi_i$ in $R$.
We write $\ind{B} = \{ \beta_1,\dots,\beta_d\}$, with $\beta_1 < \dots <
\beta_d$ and $\ve{a}_i = \lt(A^{-1}_\ind{B} A_R\rt)_{\rank_R(\phi_i)}$, for 
$i \in [n]$. Then,
(\ref{eq:red:hphilp}) is equivalent to
\begin{equation}\label{eq:red:hphieps}
 - dN^2 (\mm)_\col{\phi_i}
 + dN^2 \ve{a}_i^T
 \begin{pmatrix}
   (\mm)_\col{\beta_1} \\
   \vdots\\
   (\mm)_\col{\beta_d}
 \end{pmatrix}
 =
  -1-dN^2-\eps^{\phi_i} +
 \ve{a}_i^T
 \begin{pmatrix}
   1 + dN^2 + \eps^{\beta_1}\\
   \vdots\\
   1 + dN^2 + \eps^{\beta_d}
 \end{pmatrix}
 \text{ for $i \in [n]$},
\end{equation}
where $\col{\phi_i}$ and $\col{\beta_i}$ denote the colors of the
columns with indices $\phi_i$ and $\beta_i$, respectively.
Thus, $(\ref{eq:red:hphieps})$ is of the form
\begin{equation}\label{eq:red:lp}
  A_\Phi \mm = \bb_\Phi,
\end{equation}
where $A_\Phi \in \Q^{n \times d}$ and the polynomials $(\bb_\Phi)_i$, $i \in
[n]$, are
$(\phi_1, \phi_2, \dots, \phi_n)$-separated with gap $0$. 
The entries of
$A_\Phi$ are not necessarily integers due to the occurrence of $A^{-1}_\ind{B}$
in the vectors $\ve{a}_i$. By Lemma~\ref{lem:lpsol}, the fractions in
$A^{-1}_\ind{B}$ all have the same denominator: $\det A_\ind{B} \in \Z$.
We set $A'_\Phi = \lt(\det A_\ind{B}\rt) A_\Phi$ and $\bb'_\Phi = \lt(\det
A_\ind{B}\rt) \bb_\Phi$. Then, the linear system
\begin{equation}\label{eq:red:intlp}
  A'_\Phi \mm = \bb'_\Phi
\end{equation}
is equivalent to~(\ref{eq:red:lp}), where $A'_\Phi \in \Z^{n\times d}$ and
$\lt(\bb'_\Phi\rt)_i$ is a polynomial in $\eps$ with integer coefficients and
a nontrivial monomial of degree $\phi_i$ for $i \in [n]$. Let $m'$ denote the
maximum absolute value of the coefficients of $\eps$-polynomials in
$A'_\Phi$ and $\bb'_\Phi$. Since the absolute value of the entries of $A_R$ is
at most $N$ and since by Lemma~\ref{lem:lpsol} the absolute value of the entries in
$A^{-1}_\ind{B}$ is at most $N$, there exists a constant $c' \in \N$ such that
$m' \leq N^{c'}$ and $c'$ is independent of the choice of $B$.

Set $n' = \lt|H'_\square\rt|$. Since we assume that the hyperplanes in
$H'$ have
a point in common and since $H'_\square \subseteq H'$, the hyperplanes in
$H'_\square$ fix the values of exactly $n'$ coordinates
$(\mm)_j$ to either $0$ or $1$. Let $J$ be the indices of the fixed
coordinates and let $J_i \subseteq J$ be the indices of the $(\mm)_j$ that
are set to $i$ for $i = 0,1$. Combining this with (\ref{eq:red:intlp}), we can
express the intersection of hyperplanes in $H'$ as
\begin{equation}\label{eq:red:finalls}
  \lt(A'_\Phi\rt)_{[d] \setminus J} (\mm)_{[d] \setminus J} = \bb'_\Phi - \sum_{j \in J_1} (A'_\Phi)_j.
\end{equation}
The matrix $\lt(A'_\Phi\rt)_J$ is an $n \times (d-n')$ integer matrix,
whose entries have absolute value at most $N^{c'}$ and
the polynomials $p_i = (\bb'_\Phi - \sum_{j \in J_1} (A'_\Phi)_j)_i$, $i \in
[n]$, are $(\phi_1, \phi_2, \dots, \phi_n)$-separated with gap $0$.
Then, Lemma~\ref{stm:perturb} implies
that for all $\eps \in \lt(0,\frac{1}{2M}\rt)$, the right hand vector of
$(\ref{eq:red:finalls})$ cannot lie in the span of $n-1$ columns of the
left hand matrix, where $M= d! (d^2+1) \big(N^{c'}\big)^d$. Thus, for $c=
\max(3,2c')$, we have $N^{-cd} \in \lt(0,\frac{1}{2M}\rt)$.
Since we know that (\ref{eq:red:finalls}) has a solution,
it follows that the rank of (\ref{eq:red:finalls})
must be $n$ and thus the intersection $\bigcap_{h \in H'} h$ has dimension
$d-n-n'=d-k$.
\end{proof}

Note that since $c$ is a constant, the number of bits needed to represent $\eps$
is polynomial in the size of the \CCP instance.  We continue by showing that
the elements from $Q$ are indeed polytopes and by characterizing precisely their
dimension and their facets.

\begin{lemma}\label{lem:para_region}
Let $q = \Phi(f) \cap g \neq \emptyset$ be an element from
$\QQ$, where $f \in \FF$ and $g$ is a face of $\MM$.
Then, $q$ is a simple polytope of dimension $\dim g - \dim f$.
Moreover, if $\dim q > 0$, the set of facets of $q$ can be written as
\[
  \left.\Big\{ \Phi\left(f\right) \cap \facet{g} \neq \emptyset \midd
  \text{$\facet{g}$ is a facet of $g$} \Big\}\right. \cup
  \left\{ \Phi\left(\up{f}\right) \cap g \neq \emptyset \midd \text{$f$ is a
  facet of $\up{f} \in \FF$}\right\}.
\]
\end{lemma}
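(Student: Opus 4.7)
The plan is to describe $q$ explicitly as a bounded polyhedron defined by linear (in)equalities, then derive both dimension and facet structure from Lemma~\ref{g:red}. Set $\kappa = \dim f$ and $\delta = \dim g$, and fix a feasible basis $B$ of some vertex of $f$. The set $\Phi(f)$ is the solution set of $\LR_{B,f}$: the $\kappa = |\supp{f}\setminus\ind{B}|$ equalities $(\er_{B,\cc_\mm})_{j'}=0$ for $j'\in\supp{f}\setminus\ind{B}$ together with the inequalities $(\er_{B,\cc_\mm})_{j'}\le 0$ for $j'\in[d^2]\setminus\supp{f}$. Writing $g=g(I_0,I_1)$, the face $g$ contributes $d-\delta$ coordinate equalities and box inequalities $0\leq(\mm)_i\leq 1$ for $i\notin I_0\cup I_1$. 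Since $q\subseteq g\subseteq[0,1]^d$ is bounded, $q$ is a polytope.

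For the dimension, I would count equality hyperplanes: there are $\kappa+(d-\delta)$ of them, all drawn from $H_\Phi\cup H_\square$. Since $q\neq\emptyset$, they have a common point, and Lemma~\ref{g:red} forces their intersection to be an affine subspace $U$ of dimension exactly $d-\kappa-(d-\delta)=\delta-\kappa$, giving $\dim q\leq\delta-\kappa$. For the matching lower bound, suppose some inequality hyperplane were tight throughout $q$. If it were a reduced-cost inequality $(\er_{B,\cc_\mm})_{j'}\leq 0$ with $j'\in[d^2]\setminus\supp{f}$, then $q\subseteq\Phi(\up{f})\cap g$ with $\supp{\up{f}}=\supp{f}\cup\{j'\}$; if it were a box inequality for a coordinate $i\notin I_0\cup I_1$, then $q\subseteq\Phi(f)\cap\facet{g}$ for a strict facet $\facet{g}$ of $g$. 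Either way $q$ would admit a representation with strictly larger $f$ or strictly smaller $g$. Ruling this out via the canonicity of the representation of $q\in\QQ$ (established in tandem with Lemma~\ref{stm:unique}) yields $\dim q=\delta-\kappa$.

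For simplicity of $q$, let $v$ be any vertex. As a single point in $\R^d$, $v$ lies on $d$ linearly independent hyperplanes from $H_\Phi\cup H_\square$, and by Lemma~\ref{g:red} at most $d$ such hyperplanes can be tight at $v$. The $\kappa+(d-\delta)$ equality hyperplanes of $q$ are among them, so exactly $\delta-\kappa=\dim q$ inequality hyperplanes are tight at $v$. This is precisely the condition for $v$ to be simple in a $\dim q$-dimensional polytope, so $q$ is simple.

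For the facet list, simplicity means each facet of $q$ corresponds to relaxing a single tight inequality at a vertex. An inequality $(\er_{B,\cc_\mm})_{j'}\leq 0$ becoming tight enlarges the support: the face $\up{f}$ of $\PC$ with $\supp{\up{f}}=\supp{f}\cup\{j'\}$ has $\dim\up{f}=\kappa+1$, contains $f$ as a facet, and the resulting facet of $q$ is $\Phi(\up{f})\cap g$. A box inequality of $g$ becoming tight fixes one more coordinate of $\mm$ to $0$ or $1$, producing a facet $\facet{g}$ of $g$ with corresponding facet $\Phi(f)\cap\facet{g}$ of $q$. Applying the dimension formula to these new pairs gives dimension $\delta-\kappa-1$, confirming they are indeed facets. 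The main obstacle is the genericity step securing $\dim q=\delta-\kappa$; it rests on combining Lemma~\ref{g:red} with the canonical choice of $(f,g)$ for $q$.
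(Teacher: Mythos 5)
Your setup, the upper bound $\dim q \le \dim g - \dim f$ via Lemma~\ref{g:red}, the simplicity argument at a vertex, and the identification of the two kinds of facets all match the paper's proof. The genuine gap is in the lower bound $\dim q \ge \dim g - \dim f$. You rule out an inequality constraint being tight throughout $q$ by appealing to the ``canonicity'' of the pair $(f,g)$, i.e.\ to the uniqueness statement of Lemma~\ref{lem:para_unique}. But that uniqueness is \emph{derived from} the dimension formula of the present lemma: its proof produces a second representing pair $\lt(\up{f},\down{g}\rt)$ and then invokes $\dim q = \dim \down{g} - \dim \up{f} < \dim g_1 - \dim f_1$ to reach a contradiction. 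So your argument is circular, and saying the two are ``established in tandem'' does not resolve this without exhibiting the joint induction. Moreover, the lemma as stated must hold for an \emph{arbitrary} pair with $q=\Phi(f)\cap g\neq\emptyset$ --- no canonical pair is available at this stage --- and the validity of the dimension formula for all pairs is precisely what makes the pair unique afterwards.

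What is needed instead is a direct argument that intersecting the affine space $S^=$ with the halfspaces cannot decrease the dimension. The paper adds the halfspaces one at a time: if the dimension dropped at step $j^\star$, the set $\Psi_{j^\star}$ would be a proper face of $\Psi_{j^\star-1}$, hence a $d_{j^\star}$-dimensional polyhedron lying on at least $d-d_{j^\star}+1$ hyperplanes drawn from $H^=$ and the boundary hyperplanes of $H^-$, contradicting the general position guaranteed by Lemma~\ref{g:red}. Your observation that a globally tight inequality would yield $q=\Phi\lt(\up{f}\rt)\cap g$ or $q=\Phi(f)\cap\facet{g}$ only relocates the problem to a new pair; it does not by itself produce a contradiction. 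The remainder of your write-up is correct and follows the paper's route, but as written the dimension claim --- on which the simplicity count and the facet list both depend --- is not established.
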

\begin{proof}
Let $B$ be a feasible basis for a vertex of $f$. As discussed above, the
solution space to the linear system $\LR_{B,f}$ is $\Phi(f)$.
We denote with $H^=_{\Phi(f)}$ the set of hyperplanes that are given by the equality
constraints
\[
  (\er_{B, \mm})_j = 0, \text{ for } j \in \supp{f} \setminus \ind{B},
\]
and we denote with $H^-_{\Phi(f)}$ the set of halfspaces that are given by the $d^2 -
(d + \dim f)$ inequalities
\[
  (\er_{B, \mm})_j \leq 0, \text{ for } j \in \lt[d^2\rt] \setminus \supp{f}
\]
in $\LR_{B,f}$.

Because $g$ is a face of $\MM$ and hence of the unit cube,
we can write it as the intersection of a set $H^=_g$ of $d - \dim g$ hyperplanes
and a set of halfspaces $H^-_g$, where $H^=_g$ and the boundary hyperplanes from
the halfspaces in $H^-_g$ are supporting hyperplanes of facets of the unit cube.

We set $H^= = H^=_g \cup H^=_{\Phi(f)}$ and $H^- = H^-_g \cup H^-_{\Phi(f)}$.
Now, $q$ is the intersection of the affine space
$S^= = \bigcap_{h \in H^=} h$ with the polyhedron $S^- = \bigcap_{h^-
\in H^-}h^-$. Hence, $q$ is a polyhedron and moreover, as $q
\subseteq \MM$, it is a polytope. By Lemma~\ref{g:red}, the
hyperplanes in $H^=$ and the boundary hyperplanes of $H^-$
are in general position, so $q$ is simple.

We now prove $\dim q  = \dim g  - \dim f$.
Because $|H^=_{g}| = d-\dim g$, $|H^=_{{\Phi(f)}}| = \dim f$, and by
Lemma~\ref{g:red}, we have $H^=_g \cap H^=_{{\Phi(f)}} =\emptyset$, the set $H^=$
contains $d-\dim g+\dim f $ hyperplanes. Again by Lemma~\ref{g:red}, the hyperplanes 
from $H^=$ are in
general position, and therefore $\dim S^=  = \max(\dim g - \dim f, -1)$, where we
set $\dim \emptyset = -1$.
Since we assume that $q \neq \emptyset$, it follows that
$\dim S^= \geq 0$, so in particular $\dim f \leq \dim g$. We show
that the dimension does not decrease by intersecting $S^=$ with the
halfspaces in $H^-$.
Fix an arbitrary ordering $h^-_1,\dots,h^-_{m}$, $m = |H^-|$,
of the halfspaces in $H^-$. For $j = 0, 1,\dots, m$, let $\Psi_j$
denote the polyhedron that we obtain by intersecting $S^=$ with the
first $j$ halfspaces $h^-_1,\dots,h^-_j$ from $H^-$. In particular, we
have $\Psi_0 = S^=$ and $\Psi_{m} = q$. Assume for the sake of
contradiction that $\dim q  < \dim S^= $, and let $j^\star$ be such
that $\dim \Psi_{j^\star-1}  = \dim S^= $ and $\dim \Psi_{j^\star}  =
d_{j^\star} < \dim S^= $. There are three possibilities:
(i) $\Psi_{j^\star-1} \cap h^-_{j^\star} = \emptyset$; (ii) $h^-_{j^\star}$
intersects the relative interior of $\Psi_{j^\star-1}$; or
(iii) $h^-_{j^\star}$
intersects only the boundary of $\Psi_{j^\star-1}$.
Now, since $q \neq \emptyset$, Case (i)
is impossible. Since by our assumption, $d_{j^\star} < \dim \Psi_{j^\star
-1}$,
Case (ii) also cannot occur. Hence, $\Psi_{j^\star}$ is a
proper face of $\Psi_{j^\star-1}$.
Then, $\Psi_{j^\star}$ is contained in the  intersection of the
$d-\dim g+\dim f $ hyperplanes from $H^=$ with at least
$\dim S^= -d_{j^\star} = \dim g-\dim f -d_{j^\star}$ boundary hyperplanes of
$h^-_1,\dots,h^-_{j^\star-1}$, and with the boundary hyperplane of 
$h^-_{j^\star}$.  Thus, the $d_{j^\star}$-dimensional polyhedron $\Psi_{j^\star}$
lies in the intersection of at least $d-d_{j^\star}+1$ hyperplanes
from $H^=$ and bounding hyperplanes from $H^-$. Hence, the
hyperplanes from $H^=$ together with the bounding hyperplanes from
$H^-$ are not in general position, a contradiction to Lemma~\ref{g:red}.

We now prove the second part of the statement. Let $\facet{q}$
be a facet of $q$. Since $\dim q > 0$, the facet $\facet{q}$ is nontrivial. Then,
$\facet{q}$ is the intersection of $q$ with a
hyperplane $h^\star$ that is a boundary hyperplane of some halfspace in
$H^-$. Let $h^-$ be the halfspace that generates $h^\star$. If $h^- \in
H^-_g$, then $\facet{g} = g \cap h$ is a facet of $g$ and we have
$\facet{q} = \Phi(f) \cap \facet{g}$. Assume now $h^- \in H^-_{\Phi(f)}$ and let $h$ be
defined by the equation $(\er_{B, \cc_\mm})_j = 0$ for some $j \in \supp{f}
\setminus \ind{B}$. Let $\up{f} \subseteq \PC$ be the face that is defined by
the columns from $A$ with indices $\supp{f} \cup \{j\}$, and note that $f$ is a
facet of $\up{f}$. Then, we can write $\facet{q}$ as
\[
\facet{q} = h^\star \cap q
=  h^\star \cap
  \left(\bigcap_{h \in H^=_{\Phi(f)}} h
        \cap \bigcap_{h^- \in H^-_{\Phi(f)}} h^-
        \cap g
  \right)
= \left(h^\star
        \cap \bigcap_{h \in H^=_{\Phi(f)}} h
        \cap \bigcap_{h^- \in H^-_{\Phi(f)}} h^-
  \right)
  \cap g
\]
and thus $\facet{q}$ contains all parameter vectors in $g$ for which
$\up{f}$ is optimal.

Now, let $\facet{g}$ be a facet of $g$ with $\facet{q} = \Phi(f) \cap \facet{g}
\neq \emptyset$. Then, there exists a boundary
hyperplane $h^\star$ from a halfspace in $H^-_g$ such that $\facet{q} =
h^\star \cap \lt(\bigcap_{h \in H^=} h\rt) \cap \lt(\bigcap_{h^- \in H^-}
h^-\rt)$. Clearly,
$\facet{q}$ is a face of $q$. Furthermore, since
$\facet{q} \neq \emptyset$ the first part of
the lemma implies
\[
\dim \facet{q}  = \dim \facet{g}  - \dim f  = (\dim g - 1)  - \dim f = \dim q  -
1 \geq 0.
\]
Hence $\facet{q}$ is a facet of $q$.
Let now $\up{f} \in \FF$ be a face that has $f$ as a facet with
$\facet{q} = \Phi(\up{f}) \cap g \neq \emptyset$. Then there exists a
boundary hyperplane $h^\star$ of a halfspace in $H^-_{\Phi(f)}$ such that
$\facet{q} =  h^\star \cap \lt(\bigcap_{h \in H^=} h\rt) \cap \lt(\bigcap_{h^-
\in H^-}
h^-\rt)$. As before, $\facet{q}$ is a face of $q$ and since
$\facet{q} \neq
\emptyset$, we get
\[
\dim \facet{q}  = \dim g  - \dim \up{f} = \dim g  - (\dim f + 1) = \dim q  -1
\geq 0.
\]
Thus, $\facet{q}$ is a facet of $q$.
\end{proof}

In particular, Lemma~\ref{lem:para_region} implies that within each $k$-face of
$\MM$, the set of parameter vectors that are optimal for some vertex $v \in \FF$ is
either empty or a $k$-dimensional polytope and the set of parameter vectors that
are optimal for a $k$-face $f \in \FF$ is either empty or a
single point. Furthermore, Lemma~\ref{lem:para_region} immediately bounds the maximum
dimensions of faces in $\FF$.

The next lemma shows that the intersection of any two polytopes in $\QQ$ is
again an element in $\QQ$.

\begin{lemma}\label{lem:para_intersection}
  Let $q_1 = \Phi(f_1) \cap g_1 \in \QQ$ and $q_2 = \Phi(f_2)
  \cap g_2 \in \QQ$ be two polytopes with $q_1\cap q_2 \neq \emptyset$, where
  $f_1,\, f_2 \in \FF$ and
  $g_1,\, g_2$ are faces of $\MM$. Then,
  \[
      q_1 \cap q_2 = \Phi\left(\up{f}\right) \cap \down{g},
  \]
  where $\up{f} \in \FF$ is the smallest face of $\PC$ that contains $f_1$ and
  $f_2$, and $\down{g} = g_1 \cap g_2$.
\end{lemma}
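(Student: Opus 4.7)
My plan is to reduce the proof to two essentially independent statements, one about the parameter regions $\Phi(\cdot)$ and one about the faces $g_i$ of $\MM$, and then combine them.

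First, note the trivial set-theoretic identity
\[
  q_1 \cap q_2 = \bigl(\Phi(f_1) \cap \Phi(f_2)\bigr) \cap \bigl(g_1 \cap g_2\bigr).
\]
So the real work is to identify the two factors with $\Phi(\up{f})$ and $\down{g}$, respectively, and to verify $\up{f} \in \FF$.

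For the first factor, I would use the standard fact that, for any linear program, the set of optimal solutions forms a face of the feasibility polyhedron. Concretely, for a fixed $\mm \in \R^d$, let $f^\mm \subseteq \PC$ denote the (unique) optimal face for $\LC_\mm$. Then by the definition of $\Phi(\cdot)$, we have $\mm \in \Phi(f)$ if and only if $f \subseteq f^\mm$. In particular, $\mm \in \Phi(f_1) \cap \Phi(f_2)$ iff $f_1 \cup f_2 \subseteq f^\mm$, and since the set of faces of $\PC$ containing $f_1 \cup f_2$ is closed under intersection and nonempty (it contains $f^\mm$), there is a well-defined smallest such face $\up{f}$, and $f_1 \cup f_2 \subseteq f^\mm$ iff $\up{f} \subseteq f^\mm$ iff $\mm \in \Phi(\up{f})$. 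This gives $\Phi(f_1) \cap \Phi(f_2) = \Phi(\up{f})$. Moreover, since the hypothesis $q_1 \cap q_2 \neq \emptyset$ supplies an $\mm$ in $\Phi(\up{f}) \cap (g_1 \cap g_2) \subseteq \Phi(\up{f}) \cap \MM$, we conclude $\up{f} \in \FF$.

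For the second factor, recall that $\MM$ is the boundary complex of the unit cube minus the faces meeting the origin, so a face of $\MM$ is exactly a face of the unit cube that fixes at least one coordinate to $1$. The intersection of any two faces of the unit cube is again a face of the unit cube (simply combine the coordinate-fixings), and since $g_1 \cap g_2$ contains a point of $q_1 \cap q_2 \subseteq \MM$, it still fixes some coordinate to $1$. Therefore $\down{g} = g_1 \cap g_2$ is a face of $\MM$.

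Combining these two identifications yields $q_1 \cap q_2 = \Phi(\up{f}) \cap \down{g}$, as required. The only step that needs more than a line or two of care is the existence of the smallest face $\up{f}$; once we observe that the collection of faces of $\PC$ containing $f_1 \cup f_2$ is nonempty (witnessed by the optimal face $f^\mm$ for any $\mm$ in the assumed-nonempty intersection) and closed under intersection of faces of a polyhedron, this follows immediately. I expect no further complications.
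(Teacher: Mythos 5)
Your proof is correct and follows essentially the same route as the paper's: both reduce the claim to the identity $\Phi(f_1)\cap\Phi(f_2)=\Phi(\up{f})$ via the observation that $\mm\in\Phi(f)$ exactly when $f$ is contained in the optimal face for $\LC_\mm$, and then intersect with $g_1\cap g_2$. Your additional care about the existence of the smallest face $\up{f}$ and about $g_1\cap g_2$ being a face of $\MM$ is fine but not a different argument.
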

\begin{proof}
We begin with showing that $\Phi(f_1) \cap
\Phi(f_2) = \Phi\left(\up{f}\right)$. Let $\mm \in \Phi(f_1) \cap \Phi(f_2)$ be a
vector. Since $\up{f}$ is the smallest face of $\PC$ that contains $f_1$ and
$f_2$, the face $\up{f}$ is optimal for $\LC_\mm$ and thus $\Phi(f_1) \cap
\Phi(f_2) \subseteq \Phi\left(\up{f}\right)$.
Let now $\mm$ be a parameter vector from $\Phi\left(\up{f}\right)$. Since $f_1$
and $f_2$ are subfaces of
$\up{f}$, the faces $f_1$ and $f_2$ are optimal for $\mm$ and thus we
have $\mm \in \Phi(f_1) \cap \Phi(f_2)$.
Hence, $\Phi\left(\up{f}\right) = \Phi(f_1) \cap
\Phi(f_2)$.
Then, we can express $q_1 \cap q_2$ as
\[
q_1 \cap q_2 = \left(\Phi(f_1) \cap g_1\right) \cap
\left(\Phi(f_2) \cap g_2\right)
= \Phi\left(\up{f}\right) \cap \down{g},
\]
where $\down{g} = g_1 \cap g_2$. Moreover, since $q_1 \cap q_2 \neq \emptyset$
and $\down{g}$ is a face of $\MM$, the face $\up{f}$ is contained in $\FF$.
\end{proof}

Equipped with Lemmas~\ref{lem:para_region} and~\ref{lem:para_intersection}, we
are now ready to show that $\QQ$ is a polytopal complex.

\begin{lemma}\label{stm:para_polycompl}
The set $\QQ$ is a $(d-1)$-dimensional polytopal complex that
decomposes $\MM$.
\end{lemma}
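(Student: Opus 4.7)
The plan is to verify the three defining properties of a polytopal complex of dimension $d-1$ that subdivides $\MM$: covering of $\MM$, the dimension bound, closure under faces, and the face-intersection property. Most of the substantive work has already been done by Lemmas~\ref{lem:para_region} and~\ref{lem:para_intersection}, so what remains is essentially structural.

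First I would establish the covering property. For every $\mm \in \MM$, the cost vector $\cc_\mm$ defined in~(\ref{eq:costs}) has strictly positive entries, since $\|\mm\|_\infty=1$ forces $(\mm)_i \leq 1$ and the additive term $\eps^j$ is positive. Combined with the fact that $\LC$ is feasible (each $C_i$ ray-embraces $\bb$), this shows that $\LC_\mm$ attains its optimum on a nonempty face $f \subseteq \PC$, so $\mm \in \Phi(f) \cap \MM$ and thus $f \in \FF$. Letting $g$ be the unique face of $\MM$ whose relative interior contains $\mm$, we get $\mm \in \Phi(f) \cap g \in \QQ$, hence $\bigcup_{q \in \QQ} q = \MM$. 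The dimension claim then follows directly from Lemma~\ref{lem:para_region}: the maximum of $\dim g - \dim f$ over nonempty $\Phi(f) \cap g$ is $d-1$, attained when $g$ is a $(d-1)$-face of $\MM$ (one coordinate fixed to $1$) and $f$ is a vertex of $\PC$ optimal for some generic parameter in the relative interior of $g$; no element can exceed this bound.

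Next I would verify closure under faces by iterating the second part of Lemma~\ref{lem:para_region}. Every nonempty facet of $q = \Phi(f) \cap g \in \QQ$ has one of two forms: either $\Phi(f) \cap \facet{g}$ for a facet $\facet{g}$ of $g$, or $\Phi(\up{f}) \cap g$ for a face $\up{f}$ of $\PC$ having $f$ as a facet. In the first case the facet is visibly an element of $\QQ$; in the second case, since the facet is nonempty and contained in $\MM$, we get $\up{f} \in \FF$ automatically, so it again lies in $\QQ$. Induction on dimension then shows that every face of every element of $\QQ$ lies in $\QQ$.

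Finally, for the face-intersection property, I would apply Lemma~\ref{lem:para_intersection} to obtain $q_1 \cap q_2 = \Phi(\up{f}) \cap \down{g} \in \QQ$, where $\up{f} \in \FF$ is the smallest face of $\PC$ containing both $f_1, f_2$ and $\down{g} = g_1 \cap g_2$; it then remains to argue that this intersection is a face of each $q_i$. For $q_1 = \Phi(f_1) \cap g_1$, the set $\down{g}$ is a face of $g_1$ by elementary polytope theory on the unit cube, and $\Phi(\up{f})$ is a face of $\Phi(f_1)$: taking a feasible basis $B$ of a vertex of $f_1$ (which is automatically a vertex of $\up{f}$ since $f_1 \subseteq \up{f}$), the reduced-cost description $\LR_{B,f_1}$ turns each inequality $(\er_{B,\cc_\mm})_j \leq 0$ for $j \in \supp{\up{f}} \setminus \supp{f_1}$ into an equality in $\LR_{B,\up{f}}$, which is the standard characterization of a face. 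Intersecting the two face relations yields that $\Phi(\up{f}) \cap \down{g}$ is a face of $\Phi(f_1) \cap g_1 = q_1$, and the symmetric argument handles $q_2$. The main subtlety in the plan is carefully tracking the face inclusion $\Phi(\up{f}) \subseteq \Phi(f_1)$ at the level of constraint systems and ensuring that the general-position result Lemma~\ref{g:red} guarantees no dimension collapse occurs when equality and inequality constraints from both $\LR_{B,\up{f}}$ and the cube description of $\down{g}$ are combined, but Lemma~\ref{lem:para_region} has already handled this.
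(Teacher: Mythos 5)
Your proof is correct and follows essentially the same route as the paper: it leans on Lemma~\ref{lem:para_region} for the polytope/dimension/facet structure and on Lemma~\ref{lem:para_intersection} for the intersection property, with closure under faces obtained by iterating the facet characterization. The only differences are cosmetic --- you additionally spell out the covering of $\MM$ and the exact dimension $d-1$ (which the paper leaves implicit), and you establish that $\Phi(\up{f})\cap\down{g}$ is a face of $q_1$ directly from the constraint systems rather than by the paper's repeated application of Lemma~\ref{lem:para_region}; both arguments are sound.
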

\begin{proof}
Lemma~\ref{lem:para_region} guarantees that every element $q \in
\QQ$ is a polytope in $\R^d$ of dimension at most $d-1$. By the second
part of Lemma~\ref{lem:para_region}, if $\dim q > 0$, all facets of $q$ and hence
inductively all nonempty faces of $q$ are contained in $\QQ$.
Furthermore, since $\emptyset$ is a face of $\MM$, it is contained
in $\QQ$ as well.

Now, let $q_1 ,\, q_2 \in \QQ$ be two polytopes. If $q_1 \cap q_2 =
\emptyset$, then clearly $q_1\cap q_2$ is a face of both polytopes $q_1$ and
$q_2$, so assume $q_1 \cap q_2 \neq \emptyset$. By definition of
$\QQ$, there are faces $f_1,\, f_2 \in
\FF$ and faces $g_1,\, g_2$ of $\MM$ such that $q_1 =
\Phi(f_1) \cap g_1$ and $q_2 = \Phi(f_2) \cap g_2$. Then, we can apply
Lemma~\ref{lem:para_intersection} to express the intersection
of $q_1$ and $q_2$ as $\Phi\left(\up{f}\right) \cap \down{g}$. Since $\up{f} \in
\FF$ and since $\down{g}$ is a face of $\MM$, $q_1 \cap q_2 \in \QQ$.
Moreover, as $\up{f}$ is a superface of $f_1$ and $\down{g}$ is a face of $g_1$,
a repeated application of Lemma~\ref{lem:para_region} shows that $q_1 \cap q_2$ is
a face of $q_1$. Similarly, because $\up{f}$ is a superface of $f_2$ and
$\down{g}$ is a face of $g_2$, a repeated application of
Lemma~\ref{lem:para_region} proves that $q_1 \cap q_2$ is a face of $q_2$, as desired.
\end{proof}

A further implication of Lemmas~\ref{lem:para_region}
and~\ref{lem:para_intersection} is that each polytope in $\QQ$ can be
represented uniquely as the intersection of a
parameter region of a face of $\PC$ and a face of $\MM$.
\begin{lemma}\label{lem:para_unique}
  Let $q \in \QQ$ be a polytope.
  Then, there exists unique pair of faces
  $f,\,  g$, where $f \in \FF$ and $g$ is a face of $\MM$, such that
  $q = \Phi(f) \cap g$.
\end{lemma}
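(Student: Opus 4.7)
Existence is immediate from the definition of $\QQ$, so the entire task is uniqueness. We restrict attention to $q \neq \emptyset$ (the case $q = \emptyset$ is outside the scope of the representation since $\emptyset \notin \FF$). The plan is to feed the two assumed representations into Lemma~\ref{lem:para_intersection} and then let the dimension formula of Lemma~\ref{lem:para_region} do the work.

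Concretely, suppose $q = \Phi(f_1) \cap g_1 = \Phi(f_2) \cap g_2$ with $f_1, f_2 \in \FF$ and $g_1, g_2$ faces of $\MM$. Since $q \cap q = q \neq \emptyset$, Lemma~\ref{lem:para_intersection} (applied to these two representations of $q$) yields a third representation $q = \Phi(\up{f}) \cap \down{g}$, where $\up{f} \in \FF$ is the smallest face of $\PC$ containing both $f_1$ and $f_2$, and $\down{g} = g_1 \cap g_2$. By Lemma~\ref{lem:para_region} applied to each of the three representations,
\[
  \dim q = \dim g_1 - \dim f_1 = \dim g_2 - \dim f_2 = \dim \down{g} - \dim \up{f}.
\]

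Now I exploit the inclusions $f_1 \subseteq \up{f}$ and $\down{g} \subseteq g_1$, which give $\dim \up{f} \geq \dim f_1$ and $\dim \down{g} \leq \dim g_1$. Plugging into the first and third equalities, $\dim \up{f} - \dim f_1 = \dim \down{g} - \dim g_1 \leq 0$, so $\dim \up{f} = \dim f_1$; combined with $f_1 \subseteq \up{f}$ as faces of the same polyhedron, this forces $f_1 = \up{f}$. The symmetric argument with $f_2, g_2$ gives $f_2 = \up{f}$, hence $f_1 = f_2$. Substituting back, $\dim \down{g} = \dim g_1 = \dim g_2$, and since $\down{g} = g_1 \cap g_2$ is a face of both cube-faces $g_1$ and $g_2$ of the same dimension, we conclude $g_1 = \down{g} = g_2$.

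There is no real obstacle here beyond the bookkeeping: the proof is entirely a dimension count, and the only genuine input is that Lemma~\ref{lem:para_intersection} produces a representation whose face $\up{f}$ simultaneously contains $f_1$ and $f_2$, while Lemma~\ref{lem:para_region} pins down dimensions exactly. The mild subtlety is realising that Lemma~\ref{lem:para_intersection} can be applied to a polytope with itself (the two representations being distinct a priori), which is what couples the $f_i$'s and $g_i$'s into a single comparison.
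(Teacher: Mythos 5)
Your proof is correct and follows essentially the same route as the paper: both apply Lemma~\ref{lem:para_intersection} to the two assumed representations to obtain $q = \Phi(\up{f}) \cap \down{g}$ and then use the dimension formula of Lemma~\ref{lem:para_region} to force $\up{f} = f_1 = f_2$ and $\down{g} = g_1 = g_2$. The paper phrases the dimension count as a one-line contradiction ($\dim q < \dim q$ if either containment were proper) where you split it into two inequalities, but the argument is the same.
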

\begin{proof}
Let
$f_1,\, f_2$ be two faces of $\PC$ and let $g_1,\, g_2$ be two faces of $\MM$
such that
\[
q = \Phi(f_1) \cap g_1 = \Phi(f_2) \cap g_2.
\]
Then, by
Lemma~\ref{lem:para_intersection}, we can write $q$ as $\Phi\left(\up{f}\right) \cap \down{g}$, where
$\up{f} \in\FF$ is the smallest face in $\PC$ that contains $f_1$ and
$f_2$ and $\down{g}$ is a face of $g_1$ and of $g_2$. If $\up{f} \neq f_1$ or
$\down{g} \neq g_1$, then by Lemma~\ref{lem:para_region}, 
\[
\dim q = \dim \down{g}  - \dim \up{f} < \dim g_1 - \dim f_1 = \dim q,
\]
a contradiction. Hence,
we must have $\up{f} = f_1$ and $\down{g} = g_1$. Similarly, we must have
$\up{f} = f_2$ and $\down{g} = g_2$, and thus $f_1 = f_2$ and $g_1 = g_2$.
\end{proof}

Lemmas~\ref{stm:face_dim} and~\ref{stm:polycompl} are now immediate consequences
from Lemmas~\ref{lem:para_intersection},~\ref{lem:para_unique},
and~\ref{lem:para_intersection}.

We conclude with the proof of Lemma~\ref{lem:colors}. For this, we need the following
observation that is a direct consequence of Property~(P2) of the \CCP instance.

\begin{observation}\label{g:basis}
  For any feasible
  basis $B$ of $\LC$, the coordinates for $B$ in the corresponding
  basic feasible solution are strictly positive. Equivalently, $\PC$
  is simple.
\end{observation}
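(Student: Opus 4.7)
The plan is to derive the statement as a contrapositive of property (P2). Suppose, for contradiction, that there is a feasible basis $B$ of $\LC$ whose basic feasible solution $\xx$ has $(\xx)_{j^\star} = 0$ for some $j^\star \in \ind{B}$. Since $B$ is a basis, $|\ind{B}|=d$ and the defining equality $A\xx = \bb$ reduces to $A_\ind{B} \xx_\ind{B} = \bb$, and by hypothesis $\xx_\ind{B} \geq \0$. Setting $J = \ind{B} \setminus \{j^\star\}$, the identity $A_J \xx_J = \bb$ therefore expresses $\bb$ as a nonnegative combination of $|J| = d-1$ columns of $A$; equivalently, the $d-1$ points whose column indices lie in $J$ form a subset $P \subset \bigcup_{i=1}^d C_i$ of size $d-1$ with $\bb \in \pos(P)$. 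This contradicts property (P2) from Section~\ref{sec:ccpppad}, so no such basis exists. Since the argument places no restriction on $B$ beyond feasibility, every feasible basis must have strictly positive values on its basis coordinates, which is exactly the first assertion.

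For the ``equivalently $\PC$ is simple'' clause, the plan is simply to invoke the standard dictionary between non-degeneracy of the LP $\LC$ and simplicity of its feasible polyhedron $\PC \subset \R^{d^2}$ (see, e.g., \cite{MatousekGa2007}). A vertex $\vv$ of $\PC$ is obtained by turning $d^2 - d = d^2 - d$ of the $d^2$ nonnegativity inequalities $x_j \geq 0$ into equalities together with the $d$ equations $A\xx = \bb$. Non-degeneracy of the basic feasible solution corresponding to $\vv$ is precisely the statement that none of the $d$ remaining coordinates in $\ind{B}$ accidentally also vanish, so that $\vv$ lies on exactly $d^2 - d$ facets of $\PC$. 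By the previous paragraph this holds at every vertex, which is the definition of $\PC$ being simple.

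The only non-mechanical step is the extraction of the $(d-1)$-subset that ray-embraces $\bb$, and even that is immediate once we read $A_\ind{B} \xx_\ind{B} = \bb$ with a zero coordinate dropped. I do not anticipate any real obstacle: (P2) was essentially engineered via the perturbation of Section~\ref{sec:app:eqccp} (Lemma~\ref{stm:p2}) so that exactly this kind of degeneracy is excluded, and the equivalence with simplicity is textbook.
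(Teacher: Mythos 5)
Your proof is correct and is exactly the argument the paper intends: the observation is stated as ``a direct consequence of Property~(P2),'' and your contrapositive extraction of a $(d-1)$-subset ray-embracing $\bb$ from a vanishing basis coordinate, together with the standard non-degeneracy/simplicity dictionary, is precisely that consequence spelled out. No gap.
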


\begin{proof}[Proof of Lemma~\ref{lem:colors}]
Let $\xx^\star$
be the basic feasible solution for $B^\star$ with respect to $\LC_\mm$. For the
sake of contradiction, suppose that $B^\star$ contains some vector of $C_{i^\times}$, and
let $k$ be the index of the corresponding coordinate in $\xx^\star$.
By Observation~\ref{g:basis} and Lemma~\ref{lem:lpsol}, we have $\lt(\xx^\star\rt)_k \geq 1/N$.
Hence,
\[
\cc_{\mm}^T \xx^\star \geq \lt(\cc_\mm\rt)_k \lt(\xx^\star\rt)_k \geq
  \left( 1 + d N^2 \right)  \lt(\xx^\star\rt)_k \geq dN + \frac{1}{N}
\]
since $\cc_{\mm} \geq \1$ and $\xx^\star \geq \0$.
By construction, there is a color $i^\star \in [d]$ such that 
$(\cc_{\mm})_j = 1 + \eps^j$
for all columns $j$ with color $i^\star$. Let $\xx^{(i^\star)}$ be the basic
feasible solution for the basis $C_{i^\star}$. By Lemma~\ref{lem:lpsol},
$\left(\xx^{(i^\star)}\right)_j$ is upper bounded by $N$ for all $j \in
\ind{C_{i^\star}}$, so we can lower bound the costs of $\xx^{(i^\star)}$ as follows:
\[
  \cc_{\mm}^T \xx^{(i^\star)} = \sum_{j \in \ind{C_{i^\star}}} (\cc_{\mm})_j
  \left(\xx^{(i^\star)}\right)_j
\leq \sum_{j \in \ind{C_{i^\star}}} \lt(1 + \frac{1}{N^3}\rt)
\left(\xx^{(i^\star)}\right)_j \leq d N + \frac{d}{N^2} < d N + \frac{1}{N},
\]
where we use that $0 < \eps \leq N^{-3}$.
This contradicts the optimality of $B^\star$.
\end{proof}

\section{The Barycentric Subdivison -- Omitted Proofs}
\label{sec:app:barycentric}

\begin{proof}[Proof of Lemma~\ref{lem:fullycolored_colbasis}]
Let $q_0 \subset \dots \subset q_{d-1}$ be the chain that corresponds to
$\sigma$ in $\sd \QQ_\DD$. By Lemma~\ref{stm:unique}, we can write each polytope $q_i
\in \QQ_\DD$ uniquely as $\Phi_\DD(f_i) \cap g_i$, where $i \in [d-1]_0$, $f_i
\in \FF$, and $g_i \in \SS$. By the definition of the barycentric subdivision
and since $\QQ_\DD$ is a $(d-1)$-dimensional polytopal complex,
$q_{i-1}$ is a facet of $q_i$ for $i \in [d-1]$. Then, Lemma~\ref{stm:face_dim}
states that either $g_{i-1}$ is a facet of $g_i$ or $f_{i}$ is a facet of
$f_{i-1}$ for $i \in [d-1]$. Because $\sigma$ is fully-labeled, we must have
$f_i \neq f_j$ for all $i,\, j \in [d-1]_0$ with $i\neq
j$. Hence, $f_{i}$ is a facet of $f_{i-1}$ for $i \in [d-1]$ and thus $g_0 =
\dots = g_{d-1}$. Since $\dim q_{d-1} = d-1$, Lemma~\ref{stm:face_dim}
implies that $\dim f_i = d - 1 - i$ and hence $\lt|\supp{f_i}\rt| = 2d - 1 -i$ for $i
\in [d-1]_0$. In particular, $\dim f_{d-1} = 0$ and thus the columns from
$A_{\supp{f_{d-1}}}$ are a feasible basis for $\LC$.
For $i \in [d-1]$, let $a_{i-1} \in \lt[d^2\rt]$ denote the column index such that
$\supp{f_{i-1}} = \supp{f_i} \cup \{a_{i-1}\}$. Since the faces
$f_0,\dots,f_{d-1}$ have pairwise distinct labels and since $\lt|\supp{f_{i-1}}\rt| =
\lt|\supp{f_i}\rt| + 1$
for $i \in [d-1]$, the column vectors $A_{a_{0}},\dots,A_{a_{d-2}}$ have
pairwise distinct colors by the definition of $\lambda$ (see
(\ref{eq:ppad:labeling})).
Now assume for the sake of contradiction that the columns from
$A_{\supp{f_{d-1}}}$ are not a colorful feasible basis. Then, there is some color
$i^\times \in [d]$ that does not appear in $A_{\supp{f_{d-1}}}$ and hence there
is some color $i^\star \in [d]$ with $\lt|\ind{C_{i^\star}} \cap \supp{f_{d-1}}\rt|
\geq 2$.
Since there is at most one column with color $i^\times$ among
$A_{a_0},\dots,A_{a_{d-2}}$, we have $\lt|\supp{f_i} \cap \ind{C_{i^\times}}\rt|
\leq 1$ for all $i \in [d-1]_0$. Since $\supp{f_i} \supseteq \supp{f_{d-1}}$ for
$i \in [d-1]_0$ and since $\lt|\ind{C_{i^\star}} \cap \supp{f_{d-1}}\rt| \geq
2$, we have
$\lambda(f_i) \neq i^\times$ for all $i \in [d-1]_0$,
a contradiction to $\sigma$ being fully-labeled.
\end{proof}

\begin{proof}[Proof of Lemma~\ref{lem:enc_bij}]
  We begin by showing that the encoding $\en{\sigma}$ of a simplex $\sigma \in
  \FS_k$ is a valid $k$-tuple. Let $q_0 \subset \dots \subset q_{k-1}$ be the
  corresponding
  face chain in $\QQ_\DD$ such that the $i$th vertex of $\sigma$ is the
  barycenter of $q_i \in \QQ_\DD$ and $q_i \neq \emptyset$ for $i \in [k-1]_0$.
  By Lemma~\ref{stm:unique}, for
  each $q_i$, $i \in[k-1]_0$, there exists a unique pair of faces $f_i \in \FF$
  and $g_i \in \SS$ such that $q_i = \Phi_\DD(f_i) \cap g_i$. Because $q_{k-1}
  \neq \emptyset$, we have $\MM(q_{k-1}) = \Phi(f_i) \cap
  g\lt(I^{(k-1)}_0,I^{(k-1)}_1\rt) \neq \emptyset$.
  We further observe that
  $g_i \subset \DD_{[k]}$. Otherwise we would have $q_i = \Phi_\DD(f_i) \cap
  \lt(g_i \cap \DD_{[k]}\rt)$ with $g_i \cap \DD_{[k]} \in \SS$, a contradiction
  to $g_i, f_i$ being the unique pair. Since $q_i \subset \DD_{[k]}$ for $i \in
  [k-1]_0$ and since $\dim \DD_{[k]} = k-1$, we must have $\dim q_i = i$ for $i \in
  [k-1]$. Then,
  Lemma~\ref{stm:face_dim} implies that $\dim g_{k-1} = k-1$ and $\dim f_{k-1} = 0$.
  In particular, $\supp{f_{k-1}}$ is the index set of a feasible basis and
  $\lt|I^{(k-1)}_0 \cup I^{(k-1)}_1\rt| = d- k +1$. Because $g_{k-1} \subset
  \DD_{[k]}$, we have $[d] \setminus [k] \subseteq I^{(k-1)}_0$ and since
  $g_{k-1}$ is the projection of a face of $\MM$, the set $I^{(k-1)}_1$ is
  nonempty. Thus, $I^{(k-1)}_0 = [d] \setminus [k]$ and $\lt|I^{(k-1)}_1\rt|=1$.

  Let now $i \in[k-1]$ be a fixed index and write $\en{q_{i-1}} =
  \lt(\supp{f_{i-1}}, I^{(i-1)}_0, I^{(i-1)}_1\rt)$ and $\en{q_{i}} =
  \lt(\supp{f_i}, I^{(i)}_0, I^{(i)}_1\rt)$.
  Since $q_{i-1}$ is a facet of $q_i$, Lemma~\ref{stm:face_dim} implies that either
  (a) $f_{i}$ is a facet of $f_{i-1}$ and $g_{i-1} = g_i$ or (b) $f_{i-1} = f_i$
  and $g_{i-1}$ is a facet of $g_i$.  In Case~(a), we have $\supp{f_{i-1}} =
  \supp{f_i} \cup \left\{  a_{i-1} \right\}$ and $I^{(i-1)}_0 = I^{(i)}_0$ as
  well as $I^{(i-1)}_1 = I^{(i)}_1$, where $a_{i-1} \in\lt[d^2\rt] \setminus
  \supp{f_i}$. In Case~(b), we have $\supp{f_{i-1}} = \supp{f_i}$. Furthermore,
  since $\MM(g_{i-1})$ is a facet of $\MM(g_i)$, we either have $I^{(i-1)}_0 =
  I^{(i)}_0 \cup \left\{ j_{i-1} \right\}$ and 
  $I^{(i-1)}_1 = I^{(i)}_1$,
  or $I^{(i-1)}_1 = I^{(i)}_1 \cup \left\{ j_{i-1} \right\}$ and
  $I^{(i-1)}_0 = I^{(i)}_0$, for an index
  $j_{i-1} \in [d] \setminus \lt(I^{(i)}_0 \cup I^{(i)}_1\rt)$.
  Thus, $\en{\sigma}$ is a valid $k$-tuple.

  We now show that $\enOperator$ is a bijection. Let $\sigma_1,\sigma_2 \in
  \FS_k$ be two simplices. Since the barycenters of the polytopes in a polytopal
  complex are pairwise distinct, the face chains in $\QQ_\DD$ that corresponds
  to $\sigma_1$ and $\sigma_2$ must differ in at least one face. Then,
  (\ref{eq:enc:simplex}) together with Lemma~\ref{stm:unique} directly implies that
  $\en{\sigma_1} \neq \en{\sigma_2}$.

  Let now $T=\lt(Q_0,\dots,Q_{k-1}\rt)$, $k \in [d-1]$, be a valid $k$-tuple,
  where $Q_i = \lt(S^{(i)}, I^{(i)}_0, I^{(i)}_1\rt)$. For $i \in [k-1]_0$,
  let $g'_i = g\lt(I^{(i)}_0 \cup I^{(i)}_1\rt)$ be the subset of $\MM$ that
  is defined by the index sets $I^{(i)}_0, I^{(i)}_1$. Since $[d] \setminus[k]
  \subseteq I^{(i)}_0$ for all $i \in [k-1]_0$, the projection $g_i=\DD(g'_i)$
  is a subset of $\DD_{[k]}$. Moreover, since $I^{(i)}_1 \neq \emptyset$ for $i
  \in [k-1]_0$, the set $g'_i$ is a face of $\MM$ and hence $g_i \in \SS$.
  Furthermore, since the columns in
  $A_{S^{(k-1)}}$ are a feasible basis, they define a vertex $f_{k-1}$. Because
  $S^{(k-1)} \subseteq S_i$ for $i \in [k-1]_0$, the index set $S_i$ is the
  support of a face $f_i \in \FF$. Set $q_i = \Phi_\DD(f_i) \cap g_i \in \QQ$
  for $i \in [k-1]_0$. Because $g_i \subset \DD_{[k]}$, the polytope $q_i$ is
  also contained in $\DD_{[k]}$.  By Property~\ref{valid:ini} of a valid
  sequence, the intersection
  $\Phi(f_{k-1}) \cap g'_{k-1}$ is nonempty and hence its projection $q_{k-1}$
  onto $\DD$ is nonempty. Then, Lemma~\ref{stm:face_dim} states
  that $\dim q_{k-1} = k-1$. Moreover by Lemma~\ref{stm:face_dim} and
  properties~\ref{valid:facet:f} and~\ref{valid:facet:g} of $T$,
  either $g_{i-1}$ is a facet of $g_i$ or $f_i$ is a facet of $f_{i-1}$ for $i
  \in [k-1]$. Thus by Lemma~\ref{stm:face_dim}, $q_{i-1}$ is a facet of $q_i$, $i \in
  [k-1]$. Then, $\dim q_i = i$ for all $i \in [k-1]_0$ and hence the face chain
  $q_0 \subset \dots \subset q_{k-1}$ defines a $(k-1)$-simplex $\sigma \in
  \FS_k$ with $\en{\sigma} = T$.
\end{proof}

\begin{proof}[Proof of Lemma~\ref{lem:enc_verify}]
  Clearly, we can check if $T$ fulfills all syntactic requirements on valid
  $k$-tuples in polynomial time. Furthermore, we can check in polynomial time
  whether the columns $B$ from $A_{S^{(k-1)}}$ are a feasible basis for a vertex
  $f$. Finally, we express
$
\Phi(f) \cap g\lt(I^{(k-1)}_0, I^{(k-1)}_1\rt)
$
as the solution
  space to the linear system $\LC_{B,f}$ extended by the constraints $\mm \in
  g\lt(I^{(k-1)}_0, I^{(k-1)}_1\rt)$. Then, we can check in polynomial time
  whether this system has a solution.
\end{proof}

The key for Lemma~\ref{lem:enc_algs} is the following lemma that guarantees that
simplices with facets in common have a similar encoding.

\begin{lemma}\label{lem:enc_share}
  Let $\sigma, \sigma' \in \FS_k$ be two simplices, where $k \in [d]$. Then,
  $\sigma$ and $\sigma'$ share a facet if and only if the tuples $\en{\sigma}$
  and $\en{\sigma'}$ agree in all but one position.
  Furthermore, let $\sigma \in \FS_{k}$ and $\up{\sigma} \in \FS_{k+1}$ be two
  simplices, where $k \in [d-1]_0$. Write $\en{\sigma}$ as
  \[
  \en{\sigma} = \lt(Q_0,\dots, Q_{k-1}=\lt(S^{(k-1)},I^{(k-1)}_0,
  I^{(k-1)}_1\rt)\rt).
  \]
   Then, $\sigma$ is a facet of $\up{\sigma}$ if and only if
  \[
  \en{\up{\sigma}} = \lt(Q_0,\dots,Q_{k-1}, \lt(S^{(k-1)},I^{(k-1)}_0
  \setminus \left\{ k+1 \right\}, I^{(k-1)}_1\rt)\rt).
  \]
\end{lemma}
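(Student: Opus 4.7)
The plan is to translate facet relations between simplices in $\sd\QQ_\DD$ into combinatorial statements about their encoding tuples via the bijection in Lemma~\ref{lem:enc_bij} and the uniqueness guaranteed by Lemma~\ref{stm:unique}. Each simplex $\sigma\in\FS_k$ is determined by a chain $q_0\subset\dots\subset q_{k-1}$ in $\QQ_\DD$ with $\dim q_i=i$, and by the definition of the barycentric subdivision the facets of $\sigma$ correspond exactly to the subchains obtained by omitting one element of the chain.

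For Part~1, I would observe that two simplices $\sigma,\sigma'\in\FS_k$ with chains $q_0\subset\dots\subset q_{k-1}$ and $q'_0\subset\dots\subset q'_{k-1}$ share a facet if and only if there exists $i\in[k-1]_0$ with $q_j=q'_j$ for all $j\neq i$; the shared facet is then the $(k-2)$-simplex associated with the common subchain. Since by Lemma~\ref{stm:unique} every element of $\QQ_\DD$ has a unique $(f,g)$-representation and therefore a unique encoding, chain equality at position $j$ is equivalent to tuple equality at position $j$, yielding both directions.

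For Part~2, assume $\sigma\in\FS_k$ has chain $q_0\subset\dots\subset q_{k-1}$ and is a facet of $\up\sigma\in\FS_{k+1}$ with chain $\up q_0\subset\dots\subset \up q_k$. Then $\up\sigma$'s chain is obtained from $\sigma$'s by inserting one element at some position $i^\star$. Condition~\ref{valid:facet} of validity implies that $I_0^{(\cdot)}$ is weakly decreasing along any valid tuple; combined with $I^{(k-1)}_0=[d]\setminus[k]$ in $\en{\sigma}$ (by Condition~\ref{valid:ini}) and the requirement $I^{(k)}_0=[d]\setminus[k+1]\subsetneq[d]\setminus[k]$ in $\en{\up\sigma}$, the last entry of $\en{\up\sigma}$ cannot coincide with any entry of $\en{\sigma}$, forcing $i^\star=k$ and hence $\up q_i=q_i$ for $i\in[k-1]_0$. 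Because $\dim f_{k-1}=\dim f_k=0$, applying the second part of Lemma~\ref{stm:face_dim} to $q_{k-1}\subset\up q_k$ rules out the case where $f_{k-1}$ is a superface of $f_k$, and therefore forces $f_{k-1}=f_k$ and $g_{k-1}$ to be a facet of $g_k$. This gives $S^{(k)}=S^{(k-1)}$; comparing $I^{(k-1)}_0=[d]\setminus[k]$ with $I^{(k)}_0=[d]\setminus[k+1]$ shows that the single released index is $k+1$ and $I^{(k)}_1=I^{(k-1)}_1$, yielding exactly the encoding stated in the lemma. Conversely, given an encoding of the stated form, one verifies directly that all clauses of a valid $(k+1)$-tuple are satisfied, so by Lemma~\ref{lem:enc_bij} it encodes some $\up\sigma\in\FS_{k+1}$ containing $\sigma$ as the facet obtained by dropping the top vertex.

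The main obstacle is the rigidity argument in Part~2: one must combine the rigid shape that Lemma~\ref{lem:enc_bij} imposes on the top entries of valid tuples in $\FS_k$ and $\FS_{k+1}$ with the monotonicity of the $I_0$ and $I_1$ sets along a chain to pin down \emph{where} the new element must be inserted and \emph{what shape} it must have. The remaining verifications are routine bookkeeping via Lemmas~\ref{stm:unique} and~\ref{stm:face_dim}.
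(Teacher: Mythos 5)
Your proof is correct and follows essentially the same route as the paper's: both parts reduce the facet relation to a subchain relation in the barycentric subdivision and then use the injectivity of the encoding together with the structure of valid tuples and Lemma~\ref{stm:face_dim} to pin down the top entry of $\en{\up{\sigma}}$. The only cosmetic differences are that you locate the inserted chain element via the monotonicity of the $I_0$-sets and rule out the superface case geometrically (using $\dim f_{k-1}=\dim f_k=0$), whereas the paper uses the forced dimensions $\dim q_i=i$ and the fact that $S^{(k-1)}$ and $S^{(k)}$ are nested feasible bases of equal size $d$; both variants are sound.
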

\begin{proof}
  Let $\sigma, \sigma' \in \FS_k$ be two simplices  and let $q_0\subset \dots
  \subset q_{k-1}$ and $q'_0\subset \dots \subset q'_{k-1}$ be the corresponding
  face chains in $\QQ_\DD$. Then $\sigma$ and $\sigma'$ share a facet if and
  only if the face chains agree on all but one position and hence if and only
  if $\en{\sigma}$ and $\en{\sigma'}$ agree on all but one position.

  Let now $\sigma \in \FS_{k}$ and $\up{\sigma} \in \FS_{k+1}$ be two
  simplices. Let $q_0 \subset \dots \subset q_{k-1}$ be the face chain in
  $\QQ_\DD$ that corresponds to $\sigma$ with $\dim q_i = i$ for $i \in
  [k-1]_0$. Similarly, let $\up{q}_0 \subset
  \dots \subset \up{q}_k$ be the face chain in $\QQ_\DD$ that corresponds to
  $\up{\sigma}$ with $\dim \up{q}_i = i$ for $i \in [k]_0$. Furthermore, we write
  $\en{q_{k-1}} = \lt(S^{(k-1)},I^{(k-1)}_0,I^{(k-1)}_1\rt)$ and
  $\en{\up{q}_k} = \lt(S^{(k)},I^{(k)}_0,I^{(k)}_1\rt)$.
  Then, $\sigma$ is a facet of $\up{\sigma}$ if and only if the
  faces $q_0,\dots,q_{k-1}$ appear in the face chain of $\up{\sigma}$ and hence
  if and only if $q_i = q'_i$ for $i \in [k-1]_0$. Moreover, since by
  Lemma~\ref{lem:enc_bij} the encodings $\en{\sigma}$ and $\en{\up{\sigma}}$ are
  valid tuples, the columns of $A_{S^{(k-1)}}$ and $A_{S^{(k)}}$ are feasible
  bases. Since $S^{(k-1)} \subseteq S^{(k)}$ by Property~\ref{valid:facet} of
  valid tuples, we must have $S^{(k-1)} = S^{(k)}$. Moreover, by
  Property~\ref{valid:ini}, we have $I^{(k-1)}_0 = [d] \setminus [k]$,
  $I^{(k)}_0 = [d] \setminus [k+1]$, and $\lt|I^{(k-1)}_1\rt| =
  \lt|I^{(k)}_1\rt| = 1$.
  Because of Property~\ref{valid:facet}, the index set $I^{(k-1)}_1$ is a subset
  of $I^{(k)}_1$ and hence $I^{(k-1)}_1 = I^{(k)}_1$. We conclude that
  \[
  \en{\up{\sigma}} = \lt(\en{q_0},\dots,\en{q_{k-1}}, \lt(S^{(k-1)},
  I^{(k-1)}_0 \setminus \left\{ k+1\right\}, I^{(k-1)}_1\rt)\rt),
  \]
  as claimed.
\end{proof}

\begin{proof}[Proof of Lemma~\ref{lem:enc_algs}]
We begin with the first problem. By Lemma~\ref{lem:enc_share}, if there is a simplex
$\sigma' \in \FS_k$ that shares the facet $\conv\left\{ \vv_j \midd j \in
[k-1]_0,\, j \neq i \right\}$ with $\sigma$, the encodings $\en{\sigma}$ and
$\en{\sigma'}$ agree on all but one position. Thus, there are only polynomially
many possibilities for the encoding of $\en{\sigma'}$ that we can check in
polynomial time with the algorithm from Lemma~\ref{lem:enc_verify}.
Furthermore, Lemma~\ref{lem:enc_share} directly implies polynomial-time algorithms
for the second and third problem.
\end{proof}

\section{The PPAD Graph}
\label{sec:app:ppadgraph}
We begin by characterizing by showing that the graph consists only of paths and
cycles and by characterizing the degree one nodes.

\begin{proof}[Proof of Lemma~\ref{stm:deg}]
Let $\en{\sigma} \in V_k$ be the encoding of a simplex $\sigma \in \FS_k$.
If $\sigma \in \FS_1$ then $\deg \en{\sigma} = 1$ since the only adjacent
node is the encoding of the simplex in $\FS_2$ with $\sigma$ as a facet.
Similarly, if $\en{\sigma} \in V_d$ with $\lambda(\sigma) = [d]$, then $\deg
\en{\sigma}=1$ since the only adjacent node is either the encoding of the single
$[d-1]$-labeled facet of $\sigma$ or the encoding of the simplex in $\FS_d$ that
shares this facet.

If $k>1$ and $\sigma$ has two $[k-1]$-labeled facets, then $\deg \en{\sigma}=2$
since each $[k-1]$-labeled facet is either shared with another simplex in
$\FS_k$ or the facet is itself in $\FS_{k-1}$. Otherwise, if $k<d$ and
$\lambda(\sigma) = [k]$, then we have again $\deg \en{\sigma} = 2$ as there
exists exactly one simplex in $\FS_{k+1}$ with $\sigma$ as a facet and either
the single $[k-1]$-labeled facet of $\sigma$ is shared with another simplex in
$\FS_k$ or it is itself a simplex in $\FS_{k-1}$. Note that actually
Lemma~\ref{lem:enc_bij} implies in this case that the $[k-1]$-labeled facet must
be shared with another simplex in $\FS_k$.
\end{proof}

We continue with the orientation of the edges in $G$. In the following,
we assume that given a node $\en{\sigma} \in V$, we are able to
compute in polynomial time the vertices of the corresponding simplex
$\sigma \in \FS$. We show afterwards how to implement this step.
With this assumption, the orientation can be defined similarly as
in~\cite{Papadimitriou1994}.

Let $\en{\sigma}, \en{\sigma'} \in V_d$ be two adjacent nodes.
By definition, the encoded simplices $\sigma = \conv(\vv_0,\dots,\vv_{d-1})$ and
$\sigma'$ share a
facet $\facet{\sigma}=\conv(\vv_1,\dots,\vv_{d-1})$ with
$\lambda(\facet{\sigma})=[d-1]$. Let the indices be such that $\lambda(\vv_i) =
i$ for $i \in [d-1]$. Then, the edge between $\en{\sigma}$ and $\en{\sigma'}$
is directed from $\en{\sigma}$ to
$\en{\sigma'}$ if and only if the function $\dir(\sigma,\sigma')$
is positive, where
\[
 \dir(\sigma, \sigma') = \sgn
  \det \begin{pmatrix}
    1 & 1 & \dots & 1\\
    \vv_0 & \vv_1 & \dots & \vv_{d-1}
  \end{pmatrix}.
\]
Only for the sake of orientation,  we define a set of
$d-1$ vertices $\wv_2,\dots, \wv_{d}$ with colors $2,\dots,d$ to
lift lower-dimensional simplices in order to avoid dealing with simplices of
different dimensions. For $i = 2,\dots,d$, let $\wv_i \in \R^d$ denote the
parameter vector
\[
  \left( \wv \right)_j =
  \begin{cases}
    2 & \text{if } j< i,\\
    1-2(i-1) & \text{if } j=i,\text{ and}\\
    0 & \text{otherwise,}
  \end{cases}
\]
where $j \in [d]$. Furthermore, we set $\lambda(\wv_i)=i$. Since
$(\wv_i)_i < 0$ for $i=2,\dots,d$, we have $\wv_i \notin \DD$ and for $k < i$,
$\wv_i \notin
\aff(\DD_{[k]})$. However, a quick calculation shows that $\wv_i \in
\aff(\DD_{[i]})$ and that
within $\aff(\DD_{[i]})$, the hyperplane $\aff(\DD_{[i-1]})$ separates
$\e_{i}$ and $\wv_i$. Now, let
$\sigma=\conv(\vv_0,\dots,\vv_{k-1})$ denote a simplex that corresponds
to some node in $G$, where $k \in [d-1]_0$.
Then, we denote with
$\sigma_\wv=\conv(\vv_0,\dots,\vv_{k-1},\wv_{k+1},\dots,\wv_{d})$ the
$(d-1)$-simplex that we obtain by lifting $\sigma$ with our additional
vertices outside of $\DD$. Note that $\sigma_{\wv}$ is non-degenerate by our
choice of $\wv_2,\dots,\wv_d$.
If $\sigma$ is already a $(d-1)$-simplex, we set $\sigma_\wv =
\sigma$. Let now $\en{\sigma}$ and
$\en{\sigma'} \in V$ be two adjacent nodes.
Then the two lifted simplices $\sigma_\wv$ and $\sigma'_\wv$ share a
$[d-1]$-labeled facet. Now, we set $\dir(\sigma, \sigma') = \dir(\sigma_\wv,
\sigma'_\wv)$ and we direct the edge between $\en{\sigma'}$ and
$\en{\sigma}$ as discussed before. The following lemma guarantees
that the orientation of the edge is
the same if seen from either $\sigma$ or $\sigma'$ and that the
only sinks and sources remain the nodes of degree $1$ that are
characterized by Lemma~\ref{stm:deg}.

\begin{lemma}\label{lem:deg_orient}
  The orientation of $G$ is well-defined. Furthermore,
  $\en{\sigma} \in V$ is a sink or a source if and only if
  $\deg \en{\sigma} = 1$ in the underlying undirected graph.
\end{lemma}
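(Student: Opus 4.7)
For well-definedness, I need to show $\dir(\sigma,\sigma')=-\dir(\sigma',\sigma)$ for any adjacent pair $\en{\sigma},\en{\sigma'}\in V$. The lifted simplices $\sigma_\wv,\sigma'_\wv$ are $(d-1)$-simplices sharing a common facet $F$, and the matrices defining $\dir(\sigma,\sigma')$ and $\dir(\sigma',\sigma)$ agree on the $d-1$ columns holding the vertices of $F$ (placed in label order) and differ only in column $0$: the first matrix carries the unique vertex $\ve{u}$ of $\sigma_\wv$ outside $F$, the second carries the unique vertex $\ve{u}'$ of $\sigma'_\wv$ outside $F$. Both determinants are nonzero since the $d$ vertices of each lifted simplex are affinely independent, so their signs are opposite iff $\ve{u}$ and $\ve{u}'$ lie on opposite sides of $\aff(F)$. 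I will verify this opposition by a short case analysis over the three edge types (within-level, up-edge, down-edge), using that $\sd\QQ_\DD$ is a simplicial decomposition of $\DD$---so in $\aff(\DD)$ the two original simplices $\sigma,\sigma'$ sit on opposite sides of their shared sub-facet---together with the design of the lifting vertices: since $(\wv_i)_i<0$ while every vertex of $\sd\QQ_\DD$ has non-negative coordinates, the $\wv_j$'s lie strictly outside $\DD$ and cannot flip which half-space of $\aff(F)$ contains $\ve{u}$ or $\ve{u}'$.

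A node of degree one has a single incident edge and is therefore trivially a source or a sink. For a degree-two node $\en{\sigma}\in V_k$, I plan to show that its two incident edges receive opposite orientations, so $\en{\sigma}$ is neither. By Lemma~\ref{stm:deg} and the edge classification there are two distinguished vertices $\ve{u}_1,\ve{u}_2$ of $\sigma_\wv$ such that the two incident edges correspond to the facets of $\sigma_\wv$ obtained by deleting $\ve{u}_1$ and $\ve{u}_2$ respectively: when $\lambda(\sigma)=[k-1]$ these are the two vertices bearing the repeated label $\ell\in[k-1]$; when $\lambda(\sigma)=[k]$ they are the label-$k$ vertex of $\sigma$ (driving the within-level edge) and the lifting vertex $\wv_{k+1}$ (driving the up-edge). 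Let $M_i$ be the matrix entering $\dir(\sigma,\sigma'_i)$, with $\ve{u}_i$ placed in column $0$ and the remaining $d-1$ vertices of $\sigma_\wv$ placed in columns $1,\dots,d-1$ in increasing order of label. A direct inspection shows that $M_2$ is obtained from $M_1$ exactly by swapping the two columns that hold $\ve{u}_1$ and $\ve{u}_2$, every other column being identical. Since swapping two columns flips the determinant, $\dir(\sigma,\sigma'_1)=-\dir(\sigma,\sigma'_2)$; one incident edge is outgoing and the other incoming, so $\en{\sigma}$ is neither a sink nor a source.

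I expect the main obstacle to lie in the column-swap claim for the up-edge case, where $\ve{u}_1$ and $\ve{u}_2$ carry the different labels $k$ and $k+1$. There one must carefully unfold the label-ordered placement of columns $1,\dots,d-1$ and verify that the remaining lifting vertices $\wv_{k+2},\dots,\wv_d$ and the other vertices of $\sigma$ occupy identical columns in both $M_1$ and $M_2$. Once this combinatorial bookkeeping is settled, the alternating property of the determinant finishes the sign comparison in one line.
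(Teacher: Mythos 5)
Your strategy coincides with the paper's: for well-definedness you reduce to showing that the two off-facet vertices of the lifted simplices are separated by the affine hull of the shared lifted facet, and for degree-two nodes you compare the two determinants via a single column transposition. The second part is sound, and your bookkeeping worry for the up-edge case resolves exactly as you anticipate: when $\lambda(\sigma)=[k]$, the label-$k$ vertex of $\sigma$ sits in position $k$ of the matrix for the within-level edge, and $\wv_{k+1}$ sits in position $k$ of the matrix for the up-edge (once the facet vertices are sorted by label), so the two matrices differ by swapping column $0$ with column $k$ and the alternating property of the determinant gives opposite orientations. (The paper asserts that the two exchanged vertices carry \emph{the same} label, which is literally off in this case since the labels are $k$ and $k+1$; the one-transposition conclusion is what matters, and you identify the right thing to check.)

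The gap is in the separation argument for edges between consecutive levels. If $\en{\sigma}\in V_{k-1}$ and $\en{\up{\sigma}}\in V_k$ with $\sigma$ a facet of $\up{\sigma}$, then the vertex of $\sigma_\wv$ outside the shared lifted facet $F=\conv(\vv_1,\dots,\vv_{k-1},\wv_{k+1},\dots,\wv_d)$ is the lifting vertex $\wv_k$ itself, while the vertex of $\up{\sigma}_\wv$ outside $F$ is the extra vertex $\vv_0$ of $\up{\sigma}$. Neither of your two cited facts covers this case: there are no ``two original simplices on opposite sides of a shared sub-facet'' (here $\sigma$ \emph{is} the shared sub-facet, and $\wv_k$ is not a vertex of $\sd\QQ_\DD$ at all), and knowing only that $(\wv_k)_k<0$, i.e.\ that $\wv_k$ lies outside $\DD$, does not determine on which side of $\aff(F)$ it lies --- a point outside $\DD$ can be on either side. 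What is needed, and what the construction of the $\wv_i$ is designed to provide, is the sharper property that $\wv_k\in\aff(\DD_{[k]})$ and that, within $\aff(\DD_{[k]})$, the hyperplane $\aff(\DD_{[k-1]})$ separates $\wv_k$ from $\e_k$. Combined with $\aff(\sigma)=\aff(\DD_{[k-1]})$ (because $\sigma$ is a nondegenerate $(k-2)$-simplex contained in $\DD_{[k-1]}$) and $\vv_0\in\DD_{[k]}\setminus\aff(\DD_{[k-1]})$, this shows that $\aff(\sigma)$ separates $\wv_k$ from $\vv_0$ inside $\aff(\DD_{[k]})$; and since $\wv_{k+1},\dots,\wv_d\notin\aff(\DD_{[k]})$, the separation carries over to $\aff(F)$, which is what the sign comparison of the two determinants actually requires. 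The within-level case of your sketch is fine and matches the paper, but without this additional property of the lifting vertices the inter-level case of well-definedness does not go through.
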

\begin{proof}
  Let $\en{\sigma}, \en{\sigma'} \in V$ be two adjacent nodes.
  Assume first that $\en{\sigma}, \en{\sigma'}
  \in V_k$ for some $k \in [d]$. Let $\sigma =
  \conv(\vv_0,\vv_1,\dots,\vv_{k-1})$ and $\sigma'=\conv(\vv'_0,
  \vv_1,\dots,\vv_{k-1})$ denote the encoded simplices with
  $\lambda(\vv_i) = i$ for $i \in [k-1]$.
  That is, $\sigma$ and $\sigma'$ share the facet
  $\facet{\sigma}=\conv(\vv_1,\dots,\vv_{k-1})$. Because both simplices are
  contained in $\FS_k$,
  the two vertices $\vv_0$ and $\vv'_0$ are separated within the
  $(k-1)$-dimensional affine space $\aff(\DD_{[k]})$ by the $(k-2)$-dimensional
  affine space $\aff(\facet{\sigma})$. Since $\wv_{k+1},\dots,\wv_d \notin
  \aff(\DD_{[k]})$, the two vertices $\vv_0$ and $\vv'_0$ are separated in
  $\R^d$ by the
  hyperplane $\aff(\vv_1,\dots, \vv_{k-1},\wv_{k+1},\dots,\wv_d)$. Then, we have
  $\dir(\sigma, \sigma') = -\dir(\sigma', \sigma)$, since
\begin{align*}
  \dir(\sigma, \sigma')
   &= \dir(\sigma_{\wv}, \sigma'_{\wv})\\
   &=
  \sgn \det \begin{pmatrix}
    1 & 1 &  \dots & 1 & 1 & \dots & 1\\
    \vv_0 & \vv_1 & \dots & \vv_{k-1} & \wv_{k+1} & \dots &  \wv_d
  \end{pmatrix}\\
  &=
  - \sgn \det \begin{pmatrix}
    1 & 1 & \dots & 1 & 1 & \dots & 1\\
    \vv'_0 & \vv_1 & \dots & \vv_{k-1} & \wv_{k+1} & \dots &  \wv_d
  \end{pmatrix}\\
   &= -\dir(\sigma'_{\wv}, \sigma_{\wv})\\
  &= -\dir(\sigma', \sigma).
\end{align*}
Let now $\en{\sigma} \in V_{k-1}$ and $\en{\up{\sigma}}
\in V_k$ be two adjacent nodes for some $k \in [d]$. By
definition of $E$, we then have $\lambda(\sigma) = [k-1]$ and $\sigma$ is a
facet of $\up{\sigma}$. We write
$\sigma=\conv(\vv_1,\dots,\vv_{k-1})$ and
$\up{\sigma}=\conv(\vv_0,\vv_1, \dots,\vv_{k-1})$, where the indices are such
that $\lambda(\vv_i) = i$ for $i \in [k-1]$.
Then,
\[
\sigma_\wv = \conv(\vv_1,\dots,\vv_{k-1},\wv_k,\dots,\wv_{d}) \text{~and }
\up{\sigma}_\wv=\conv(\vv_0,\vv_1,\dots,\vv_{k-1},\wv_{k+1},\dots,\wv_{d}).
\]

Hence, $\sigma_\wv$ and $\up{\sigma}_\wv$ share the facet $\facet{\sigma}_\wv =
\conv(\vv_1,\dots,\vv_{k-1},\wv_{k+1},\dots,\wv_{d})$.
By construction,
both vertices $\vv_{0}$ and $\wv_{k}$ are contained in
$\aff(\DD_{[k]})$. Within the $(k-1)$-dimensional affine space $\aff(\DD_{[k]})$,
the vertex $\wv_{k}$ is separated from $\DD_{[k]}$ by the $(k-2)$-dimensional
affine space $\aff(\DD_{[k-1]})$ and hence it is separated from $\vv_0$ by
$\aff(\DD_{[k-1]})$. Since $\sigma \in \FS_{k-1}$, $\sigma$ is a $(k-2)$ simplex
that is contained in $\DD_{[k-1]}$ and thus $\aff(\sigma) = \aff(\DD_{[k-1]})$
separates
$\vv_0$ and $\wv_k$ in $\aff(\DD_{[k]})$.
Now, because $\wv_{k+1},\dots,\wv_k \notin \aff(\DD_{[k]})$,
$\vv_0$ and $\ww_k$ are separated in $\R^d$ by the hyperplane
$\aff(\facet{\sigma}_\wv)$. Again we have $\dir(\sigma, \up{\sigma}) =
-\dir(\up{\sigma}, \sigma)$, since
\begin{align*}
  \dir(\sigma, \up{\sigma})
   &= \dir(\sigma_{\wv}, \up{\sigma}_{\wv})\\
   &=
  \sgn \det \begin{pmatrix}
    1 & 1 &  \dots & 1 & 1 & \dots & 1\\
    \wv_{k} & \vv_1 & \dots & \vv_{k-1} & \wv_{k+1} & \dots &  \wv_{d}
  \end{pmatrix}\\
  &=
  - \sgn \det \begin{pmatrix}
    1 & 1 & \dots & 1 & 1 & \dots & 1\\
    \vv_0 & \vv_1 & \dots & \vv_{k-1} & \wv_{k+1} & \dots &  \wv_{d}
  \end{pmatrix}\\
   &= -\dir(\up{\sigma}_{\wv}, \sigma_{\wv})\\
  &= -\dir(\up{\sigma}, \sigma).
\end{align*}
It remains to show the second part of the statement. Let $\en{\sigma}
\in V$ be a node with two adjacent nodes $\en{\sigma'},
\en{\sigma''}$.
We want to show that the two incident edges are
oriented differently. In any case, the lifted simplices
$\sigma_\wv$ and $\sigma_\wv'$ share a $[d-1]$-labeled facet 
$\facet{\sigma}_{\wv}'$ and similarly, $\sigma_\wv$ and 
$\sigma''_\wv$ share a $[d-1]$-labeled facet
$\facet{\sigma}_{\wv}''$. The facets $\facet{\sigma}_{\wv}'$
and $\facet{\sigma}_{\wv}''$ of $\sigma_\wv$ differ in exactly
one vertex with the same label. Thus, 
the determinants in $\dir(\sigma, \sigma')$ and
$\dir(\sigma, \sigma'')$ differ by exactly one column-swap.
The properties of the determinant now ensure that
$\dir(\sigma, \sigma') = - \dir(\sigma, \sigma'')$, as desired.
\end{proof}

Our next lemma shows that for purposes of orientation, we can
replace the barycenters by arbitrary interior points in the
corresponding parameter faces.
\begin{lemma}\label{lem:proxy}
Let $q_0,\dots,q_{k-1} \subset \R^{d}$ be $k$ polytopes such that $q_0 \subset
\dots \subset q_{k-1}$ and $\dim q_i = i$ for $i \in [k-1]_0$.
Furthermore let $\vv_i$ denote the barycenter of $q_i$ for $i \in [k-1]_0$ and
let $\vv'_0,\dots,\vv'_{k-1}$ be $k-1$ vectors such that
$\vv'_i \in q_i$ and $\aff(\vv'_0,\dots,\vv'_i) = \aff(q_i)$ for all $i \in
[k-1]_0$. Then,
\[
  \sgn \det \begin{pmatrix}
    1 &  \dots & 1 & 1 & \dots & 1\\
    \vv_{0}   & \dots & \vv_{k-1} & \xx_{k+1} & \dots &  \xx_d
  \end{pmatrix}
  =
  \sgn \det \begin{pmatrix}
    1 & \dots & 1 & 1 & \dots & 1\\
    \vv'_0  & \dots & \vv'_{k-1} & \xx_{k+1} & \dots &  \xx_d
  \end{pmatrix},
\]
where $\xx_i \in \R^d \setminus \aff q_{k-1}$, $i \in [d]\setminus [k]$, is an
arbitrary point.
\end{lemma}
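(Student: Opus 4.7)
The plan is to reduce the statement to showing that a certain change-of-basis matrix has positive determinant. Since $\aff(\vv'_0,\dots,\vv'_i) = \aff(q_i)$ by hypothesis, and the analogous identity $\aff(\vv_0,\dots,\vv_i) = \aff(q_i)$ also holds for the barycenters (each $\vv_j$ lies in the relative interior of $q_j$ and hence outside $\aff(q_{j-1})$, by an easy induction on $j$), each $\vv'_i$ admits a unique expression as an affine combination $\vv'_i = \sum_{j=0}^{i} \lambda_{j,i}\vv_j$ with $\sum_j \lambda_{j,i} = 1$. Because the coefficients sum to $1$, the same identity carries over to the augmented columns, $\TwoRowVec{1}{\vv'_i} = \sum_j \lambda_{j,i}\TwoRowVec{1}{\vv_j}$. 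Setting $M_{ji} = \lambda_{j,i}$ (with $M_{ji} = 0$ for $j>i$), the right-hand matrix is obtained from the left-hand matrix by right-multiplying the first $k$ columns by $M$ while leaving the $\xx$-columns fixed; multilinearity of the determinant then implies that the two determinants differ precisely by the factor $\det M$, so it suffices to show $\det M > 0$.

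By construction $M$ is triangular, so $\det M = \prod_{i=0}^{k-1} \lambda_{i,i}$, and I would verify each diagonal entry is positive. For $i=0$, the polytope $q_0$ is a single point, forcing $\vv'_0 = \vv_0$ and $\lambda_{0,0}=1$. For $i \geq 1$, both $\vv_i$ and $\vv'_i$ lie in $q_i$ but outside $\aff(q_{i-1})$ by the affine-independence hypotheses. In the chain arising from the barycentric subdivision $\sd\QQ_\DD$, $q_{i-1}$ is a facet of $q_i$, so $\aff(q_{i-1})$ is a facet-supporting hyperplane of $q_i$ within $\aff(q_i)$ and $q_i$ lies in a single closed halfspace bounded by it. Hence $\vv_i$ and $\vv'_i$ lie in the same open halfspace; in the affine frame $\vv_0,\dots,\vv_i$ of $\aff(q_i)$, this halfspace is exactly the region where the last barycentric coordinate $\lambda_i$ is positive (since $\lambda_i(\vv_i)=1$). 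Therefore $\lambda_{i,i}=\lambda_i(\vv'_i)>0$, which gives $\det M>0$ and completes the reduction.

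The main obstacle is the halfspace step: the sign is preserved only because $\aff(q_{i-1})$ does not separate $\vv_i$ from $\vv'_i$ inside $\aff(q_i)$, and this genuinely requires $q_{i-1}$ to be a facet of $q_i$ rather than an arbitrary lower-dimensional subpolytope. Without the face relation, $\aff(q_{i-1})$ could cut through the relative interior of $q_i$, $\lambda_i(\vv'_i)$ could be negative, and the claim would fail. Luckily in every application of the lemma the $q_i$ form a flag in the polytopal complex $\QQ_\DD$, which is exactly the structural assumption needed to make the halfspace argument go through.
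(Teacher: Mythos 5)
Your proof is correct and matches the paper's argument: the paper likewise proceeds by induction to write each $\vv'_i$ as an affine combination $\sum_{l\le i}\alpha_{i,l}\vv_l$ with $\alpha_{i,i}>0$ (using that $q_{i-1}$ is a facet of $q_i$, so that $\vv_i$ and $\vv'_i$ lie strictly on the same side of $\aff(q_{i-1})$ within $\aff(q_i)$) and then concludes by column operations on the determinant, which is exactly your triangular change-of-basis matrix with positive diagonal. Your observation that the facet relation is needed but absent from the lemma's hypotheses is well taken --- the paper's own proof invokes it implicitly, and it does hold in every application since the $q_i$ form a flag in the polytopal complex $\QQ_\DD$.
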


\begin{proof}
The prove involves only basic linear algebra, however it is included for
completeness.  We show by induction on $i$ that $\aff(q_i) =
\aff(\vv'_0,\dots,\vv'_i)$ and
that for all $j \in [i]_0$, $\vv'_{j} = \sum_{l=0}^{j} \alpha_{j,l} \vv_l$ is an
affine combination of $\vv_0,\dots,\vv_{j}$ with $\alpha_{j,j} > 0$.

For $i = 0$ the induction hypothesis trivially holds since $\dim q_0 = 0$ and
hence $q_0 = \vv_0 = \vv'_0$. Assume now that $i>0$ and that the induction
hypothesis holds for all $i' < i$. Since $q_{i-1}$ is a facet of $q_i$,
within the $i$-dimensional affine space $\aff(q_i)$,
$q_i$ lies on one
side of the $(i-1)$-dimensional affine space $\aff(q_{i-1})$ and thus it lies
on one side of $\aff(\vv'_0,\dots,\vv'_{i-1})$. Since both $\vv_i$ and $\vv'_i$ lie
on the same side of $\aff(\vv'_0,\dots,\vv'_{i-1})$ in $\aff(q_i)$,
we can write $\vv'_i$ as
$\sum_{l=0}^{i-1} \beta_l \vv'_l + \alpha_i \vv_i$ with $\alpha_i > 0$. By our
induction hypothesis, $\vv'_0,\dots,\vv'_{i-1} \in \aff(\vv_0,\dots,\vv_{i-1})$
and hence the hypothesis holds for $i$.
The claim now follows directly from the properties of the determinant:

\begin{align*}
  &\sgn \det \begin{pmatrix}
    1 &  \dots & 1&\dots& 1 & 1 & \dots & 1\\
    \vv_{0}' & \dots & \vv'_{i} & \dots & \vv_{k-1}' & \xx_{k+1} &
    \dots &  \xx_d
  \end{pmatrix} \\
  = &
  \sgn \det \begin{pmatrix}
    1 &  \dots & 1&\dots& 1 & 1 & \dots & 1\\
    \vv_0 & \dots & \sum_{l=0}^{i} \alpha_{i,l} \vv_l  & \dots &
    \sum_{l=0}^{k-1} \alpha_{k-1,l} \vv_l  & \xx_{k+1} &
    \dots &  \xx_d
  \end{pmatrix} \\
  = &
  \sgn \det \begin{pmatrix}
    1 &  \dots & 1&\dots& 1 & 1 & \dots & 1\\
    \vv_0 & \dots & \alpha_{i,i} \vv_l  & \dots &
    \alpha_{k-1,k-1} \vv_{k-1}  & \xx_{k+1} &
    \dots &  \xx_d
  \end{pmatrix} \\
  = &
  \sgn \det \begin{pmatrix}
    1 &  \dots & 1&\dots& 1 & 1 & \dots & 1\\
    \vv_0 & \dots & \vv_i  & \dots & \vv_{k-1}  & \xx_{k+1} &
    \dots &  \xx_d
  \end{pmatrix},
\end{align*}
where the last equality holds since $\alpha_{i,i} > 0$ for $i \in [k-1]$.
\end{proof}

As the next lemma shows, computing parameter vectors in the relative interior of
faces in $\QQ_\DD$ is computationally feasible.

\begin{lemma}\label{lem:relint_vertices}
Let $\en{\sigma}=\lt(\en{q_0},\dots,\en{q_{k-1}}\rt) \in V$ be a node of $G$,
where $k \in [d]$. Then, we can compute in polynomial time $k-1$ parameter
vectors $\vv_0,\dots,\vv_{k-1}$ such that $\vv_i \in q_i$ and
$\aff(\vv_0,\dots,\vv_i)=\aff(q_i)$ for $i \in [k-1]_0$.
\end{lemma}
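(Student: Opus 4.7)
The plan is to describe, for each $i \in [k-1]_0$, the face $\Phi(f_i) \cap g\lt(I^{(i)}_0, I^{(i)}_1\rt)$ of $\MM$ as the feasible region of a polynomial-size linear system, compute a point $\mm_i$ in its relative interior by a linear program, and set $\vv_i = \DD(\mm_i)$.

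From the encoding $\en{q_i} = \lt(S^{(i)}, I^{(i)}_0, I^{(i)}_1\rt)$ and Lemma~\ref{lem:enc_bij} we have $\supp{f_i} = S^{(i)}$ and $\MM(g_i) = g\lt(I^{(i)}_0, I^{(i)}_1\rt)$. Property~\ref{valid:ini} guarantees that the columns of $A_{S^{(k-1)}}$ form a feasible basis $B^\star$ corresponding to the vertex $f_{k-1}$ of $\PC$, and Property~\ref{valid:facet} forces $\supp{f_{k-1}} \subseteq \dots \subseteq \supp{f_0}$, so $f_{k-1}$ is a vertex of every $f_i$ and $B^\star$ may be used as the basis in $\LR_{B^\star, f_i}$. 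An explicit linear description of $\Phi(f_i) \cap g\lt(I^{(i)}_0, I^{(i)}_1\rt)$ in $\R^d$ is then obtained by combining $\LR_{B^\star, f_i}$ with the constraints $(\mm)_j = 0$ for $j \in I^{(i)}_0$, $(\mm)_j = 1$ for $j \in I^{(i)}_1$, and $0 \leq (\mm)_j \leq 1$ for the remaining coordinates; all reduced cost coefficients can be read off from $A^{-1}_{\ind{B^\star}}$ in polynomial time.

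For $i = 0$, Lemma~\ref{lem:para_region} guarantees that this polytope is a single point, which we recover directly by solving the corresponding square linear system. For $i \geq 1$ the plan is to use the standard max-slack LP: introduce a variable $t$, leave all equalities untouched, replace each inequality $\alpha^T \mm \leq \beta$ by $\alpha^T \mm + t \leq \beta$, and maximize $t$. Since Lemma~\ref{lem:para_region} gives $\dim\lt(\Phi(f_i) \cap g\lt(I^{(i)}_0, I^{(i)}_1\rt)\rt) = \dim g_i - \dim f_i = i > 0$, the optimum $t^\star$ is strictly positive and every defining inequality is strict at the corresponding $\mm_i$, which therefore lies in the relative interior. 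Setting $\vv_i = \DD(\mm_i) = \mm_i/\|\mm_i\|_1$ places $\vv_i$ in $q_i$.

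The affine-hull property follows by induction. The central projection $\DD$ restricted to $\MM$ is a continuous bijection of compact spaces and hence a homeomorphism, so it sends relative interiors to relative interiors; therefore $\vv_i \in \relint(q_i)$. By Lemma~\ref{stm:face_dim}, $q_{i-1}$ is a facet of $q_i$, so $\aff(q_{i-1})$ is a hyperplane of $\aff(q_i)$ meeting $q_i$ in precisely $q_{i-1}$, and consequently $\relint(q_i) \cap \aff(q_{i-1}) = \emptyset$. Combined with the inductive hypothesis $\aff(\vv_0, \dots, \vv_{i-1}) = \aff(q_{i-1})$, this yields $\vv_i \in \aff(q_i) \setminus \aff(q_{i-1})$, and a dimension count gives $\aff(\vv_0, \dots, \vv_i) = \aff(q_i)$. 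The main obstacle I anticipate is the bookkeeping in translating the encoding into the explicit mixed equality-inequality system and verifying that $B^\star$ serves as a basis for every $f_i$ simultaneously; once the descriptions are in place, the relative-interior LP and the projection step are entirely standard, and the total running time is polynomial.
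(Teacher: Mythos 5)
Your argument is correct, but it takes a genuinely different route from the paper's. The paper anchors everything at the vertex $\vv_0 = q_0$: it observes that the linear system describing $\MM(q_{i-1})$ is exactly the system for $\MM(q_i)$ with one additional inequality made tight, so relaxing that single inequality in the fully tight system (whose solution set is $\vv_0$) yields an edge of $\MM(q_i)$ emanating from $\vv_0$ that does not lie in $\MM(q_{i-1})$; the far endpoint of that edge serves as $\vv_i$. You instead compute a point in the relative interior of each $\MM(q_i)$ by a max-slack LP and project it to $\DD$. Both constructions deliver $\vv_i \in q_i \setminus \aff(q_{i-1})$, which is all the affine-hull induction (and the downstream use in Lemma~\ref{lem:proxy}) needs. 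Your route is more standard, but it silently relies on two facts. First, positive dimension of $\MM(q_i)$ alone does \emph{not} imply $t^\star > 0$: if some inequality were an implicit equality on the feasible region, the max-slack optimum would be $0$ even though $\dim \MM(q_i) = i > 0$. Here this cannot happen, but the reason is the general position guaranteed by Lemma~\ref{g:red}: a bounding hyperplane of one of the inequalities containing all of $\MM(q_i)$ would, together with the $d - \dim g_i + \dim f_i$ equality hyperplanes, cut out an affine space of dimension $\dim q_i - 1 < \dim \MM(q_i)$, a contradiction. You should cite this explicitly rather than attribute $t^\star > 0$ to the dimension count. Second, you need that the central projection onto $\DD$ carries relative interiors to relative interiors; this holds because the projection is a projective transformation on each face of $\MM$ (your homeomorphism argument also suffices). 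The paper's edge-walking sidesteps both issues by staying entirely at vertices and edges of the nested systems; your version buys a shorter description at the cost of these two justifications, both of which are available from the paper's lemmas.
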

\begin{proof}
By definition of the encoding, $q_0$ is a vertex and hence we can choose $\vv_0 =
q_0$. The algorithm iteratively computes now incident edges $e_i =
\conv(\vv_0,\vv_i)$ to $\vv_0$ for $i \in [k-1]$ such that $e_i$ is an edge of
$q_i$ and no
edge of $q_{i-1}$. The resulting vectors have the desired properties:
$\vv_i \in q_i$ and $\aff(\vv_0,\dots,\vv_i)=\aff(q_i)$ for $i \in [k-1]_0$.

We construct these edges as follows. Write $\en{q}_i = \lt(\supp{f_i},
I^{(i)}_0, I^{(i)}_1\rt)$ and let $g_i$ be the
face $g\lt(I^{(i)}_0, I^{(i)}_1\rt)$ of $\MM$ that is encoded by the index sets
$I^{(i)}_0$ and $I^{(i)}_1$.
Since $\en{\sigma}$ is a valid $k$-tuple, the columns $B$ from
$A_{\supp{f_{k-1}}}$ are a feasible basis and moreover, since $\supp{f_{k-1}} \subseteq
\supp{f_i}$ for $i \in [k-1]_0$, the set $B$ is a feasible basis for all faces
$f_i$, $i \in [k-1]_0$. Similar to the proof of Lemma~\ref{lem:enc_verify}, we can
express each polytope $\MM(q_i)$ as the solution to the linear system $\LR_{B,f_i}$
extended by the constraints $\mm \in g_i$, where $i \in [k-1]_0$. Let $L_i$
denote the resulting linear system. Again by the properties of a valid
$k$-tuple, either $\supp{f_{i-1}} = \supp{f_i} \cup \left\{ a_{i-1} \right\}$, where
$a_i \in \lt[d^2\rt] \setminus \supp{f_i}$. Or there is an index $j_{i-1} \in [d]
\setminus \lt(I^{(i)}_0\cup I^{(i)}_1\rt)$ such that $I^{(i-1)}_0 =
I^{(i)}_0\cup\left\{ j_{i-1} \right\}$ and $I^{(i-1)}_1 =
I^{(i)}_1$, or $I^{(i-1)}_0 = I^{(i)}_0$ and $I^{(i-1)}_1 =
I^{(i)}_1\cup\left\{ j_{i-1} \right\}$. This means, that the linear system
$L_{i-1}$ equals the linear system $L_i$ where one inequality becomes tight.
In the following we call this inequality $e_{i}$. Note that $L_0$ is then the
linear system $L_{k-1}$ in which all inequalities $e_1,\dots,e_{k-1}$ are tight.

Assume now that we already have computed the vectors $\vv_0,\dots,\vv_{i-1}$
such that $\vv_j \in q_j$ and $\aff(\vv_0,\dots,\vv_j) =\aff(q_j)$ for $j \in
[i-1]_0$
and we want to compute $\vv_i$, where $i \in [k-1]$. We consider the linear
system $L'_{i}$ that we obtain by relaxing the tight inequality $e_i$ in $L_0$.
Since the solution space of $L_0$ is the vertex $\vv_0$, the solution space to
$L'_{i}$ is an edge $\conv(\vv_0,\vv_i)$. We can compute the other endpoint
$\vv_i$ of this edge in polynomial time by computing the line that is defined by
the equalities in $L'_{i}$ and intersect this
iteratively with the halfspaces that are defined by the inequalities in $L'_i$
while keeping track of the endpoints. Now, we have $\vv_i \in q_i$ since the
solution space of the linear system $L'_i$ is a subset of the solution space of
the linear system $L_i$. Moreover, since in $L_{i-1}$ the inequality $e_i$
is tight, $\vv_i \in q_i \setminus q_{i-1}$ and thus $\aff(\vv_0,\dots,\vv_i)
=\aff(q_i)$.
\end{proof}

The following lemma is now an immediate consequence of
Lemmas~\ref{lem:proxy} and~\ref{lem:relint_vertices}.

\begin{lemma}\label{stm:orientation}
  Let $\en{\sigma}, \en{\sigma} \in V$ be two adjacent nodes. Then, we can
  compute $\dir(\sigma, \sigma')$ in polynomial time. \qed
\end{lemma}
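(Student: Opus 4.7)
The plan is to combine Lemma~\ref{lem:relint_vertices} and Lemma~\ref{lem:proxy} to sidestep any explicit computation of barycenters. Given two adjacent nodes $\en{\sigma}, \en{\sigma'} \in V$, I first read off, from each encoding, the corresponding chain $q_0 \subset \dots \subset q_{k-1}$ in $\QQ_\DD$ (and the analogous chain for $\sigma'$), where $k$ is the dimension plus one. Then I invoke Lemma~\ref{lem:relint_vertices} to compute, in polynomial time, proxy vectors $\vv'_0, \dots, \vv'_{k-1}$ with $\vv'_i \in q_i$ and $\aff(\vv'_0, \dots, \vv'_i) = \aff(q_i)$, and similarly for $\sigma'$. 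In the two possible adjacency cases (two simplices in the same $\FS_k$ sharing a facet, or a simplex in $\FS_k$ being a facet of one in $\FS_{k+1}$), the common facet of $\sigma$ and $\sigma'$ corresponds to a common initial segment of the face chain, so the same proxy vectors can be reused for the shared vertices.

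Next, I form the lifted simplices $\sigma_\wv$ and $\sigma'_\wv$ by appending the explicitly defined auxiliary vectors $\wv_{k+1}, \dots, \wv_d$ (and, in the inter-level case, one additional $\wv_k$ on the lower-dimensional side), whose entries are fixed small integers. To apply Lemma~\ref{lem:proxy}, I have to check that these auxiliary vectors lie outside $\aff(q_{k-1})$; this is immediate because $q_{k-1} \subseteq \DD_{[k]}$ and by construction $\wv_j \notin \aff(\DD_{[k]})$ for all $j > k$, as observed in Section~\ref{sec:ppad:graph}. Lemma~\ref{lem:proxy} then guarantees that the sign of the lifted determinant computed with the proxy vectors equals the sign computed with the barycenters, so the value of $\dir(\sigma, \sigma')$ is recovered exactly.

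The final step is numerical: assemble the matrix whose columns are the all-ones-padded proxies $\vv'_i$ and the padded auxiliary vectors $\wv_j$, and compute the sign of its determinant by exact rational Gaussian elimination. The proxies have polynomial bit-complexity since they are produced by Lemma~\ref{lem:relint_vertices} as solutions to linear systems of polynomial bit-complexity (the coefficients come from the perturbed \CCP instance and from the explicit description of the parameter region $\LR_{B, f}$), and the $\wv_j$ have constant bit-complexity by definition. Hence the sign of the determinant is computable in polynomial time, as required.

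The only nontrivial bookkeeping, and the item I would be most careful about, is reconciling the two orderings of vertices that appear in the literature: Lemma~\ref{lem:relint_vertices} returns proxies in the face-chain order $\vv'_0, \dots, \vv'_{k-1}$ (indexed by ambient dimension of $q_i$), while the definition of $\dir$ in Section~\ref{sec:ppad:graph} indexes vertices by their Sperner label so that the shared facet is $\conv(\vv_1, \dots, \vv_{k-1})$ with $\lambda(\vv_i) = i$. Both orderings are computable from the encoding (labels are determined by $\supp{f_i}$ via~\eqref{eq:ppad:labeling}), so the permutation relating them, and thus its sign, is computable in polynomial time; multiplying the determinant sign by this permutation sign yields $\dir(\sigma, \sigma')$ in the convention required by Lemma~\ref{lem:deg_orient}.
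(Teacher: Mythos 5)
Your proposal is correct and follows exactly the route the paper intends: the paper states Lemma~\ref{stm:orientation} as an immediate consequence of Lemma~\ref{lem:proxy} (barycenters may be replaced by any points spanning the same flats without changing the determinant sign) and Lemma~\ref{lem:relint_vertices} (such points are computable in polynomial time), which is precisely the combination you carry out. Your additional remarks on verifying that the auxiliary vectors $\wv_j$ lie outside $\aff(q_{k-1})$, on exact rational evaluation of the determinant sign, and on reconciling the face-chain ordering with the label-based ordering are details the paper leaves implicit, and they are handled correctly.
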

\section{A Polynomial-Time Case}
\label{sec:app:halfhalf}

In the following, we use the same notation as in Section~\ref{sec:ppad} (see
Table~\ref{tab:notation} on Page~\pageref{tab:notation} for an overview). Let $C_1,C_2 \subset \Q^d$ be two color
classes, each of size $d$, let $\bb
\in \Q^d$, $\bb \neq \0$, be a point that is ray-embraced by $C_1$ and by $C_2$,
and let $k \in [d-1]$ be a number.
Although not needed in the algorithm, to comply with the
formulations of our results in Section~\ref{sec:app:eqccp} and
Section~\ref{sec:ppad}, we introduce $d-2$ ``dummy'' color classes $C_3,\dots,C_d$ that
trivially ray-embrace $\bb$ by setting $C_3 = \dots = C_d = \{\bb\}$.
Let $(C'_1,\dots,C'_d,\bb')$ be the instance of \CCP in general position that we
obtain by applying Lemma~\ref{stm:ccpgpos} to $(C_1,\dots,C_d,\bb)$. Then, let $\PC
\subset \Q^{d^2}$ denote the polyhedron that is defined by the linear system
$\LC$ (see (\ref{eq:lp}) on Page~\pageref{eq:lp}) for the instance
$(C'_1,\dots,C'_d,\bb')$. Furthermore, let $\DD_1 = \DD \cap
\conv\lt(\e_1,\e_2\rt)$ denote the edge of the standard simplex $\DD^{d-1}$ that
connects $\e_1$ with $\e_2$ and set $\QQ_{\DD_1} = \lt\{ q \in \QQ_\DD \midd q
\subseteq \DD_1 \rt\}$. Note that by Lemma~\ref{stm:polycompl}, the set
$Q_{\DD_1}$ is a $1$-dimensional polytopal complex that decomposes $\DD_1$.
We begin with the following basic lemma on $\QQ_{\DD_1}$.

\begin{lemma}\label{lem:edges}
  Let $e, e' \in \QQ_{\DD_1}$, $ e\neq e'$, be two adjacent edges with $e =
  \Phi_\DD(f) \cap
  g$ and $e' = \Phi_\DD(f') \cap g'$, where $f, f' \in \FF$ and $g,g' \in \SS$.
  Then, $f$ and $f'$ are vertices of $\PC$ with $\supp{f}, \supp{f'} \subseteq
  \ind{C'_1\cup C'_2}$
  and $\supp{f}, \supp{f'}$ differ in at most one column index.
\end{lemma}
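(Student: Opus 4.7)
The plan is to analyze the structure of $\QQ_{\DD_1}$ by combining the uniqueness statement of Lemma~\ref{stm:unique}, the dimension/facet description of Lemma~\ref{stm:face_dim}, and the color-control of Lemma~\ref{lem:colors}, exploiting the fact that $\DD_1$ is only one-dimensional.

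First I would classify which $g \in \SS$ can contain an element of $\QQ_{\DD_1}$ of dimension $1$. Since $\SS$ is the set of central projections of faces of $\MM$ onto $\DD$, the faces of $\MM$ whose projections lie inside $\DD_1 = \conv(\e_1,\e_2)\cap\DD$ are exactly those fixing $(\mm)_3=\dots=(\mm)_d=0$ together with $(\mm)_1=1$ or $(\mm)_2=1$. There are two such faces of dimension $1$, and all other faces in $\SS$ projecting inside $\DD_1$ are $0$-dimensional. Since $e=\Phi_\DD(f)\cap g$ with $\dim e=1$, Lemma~\ref{stm:face_dim} forces $\dim g=1$ and $\dim f=0$, so $f$ and symmetrically $f'$ are vertices of $\PC$.

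Next, for the color constraint, I would pick any $\mm\in g\cap\MM$ lying in the preimage of the relative interior of $e$. Then $(\mm)_i=0$ for $i\in\{3,\dots,d\}$, and $f$ is optimal for $\LC_\mm$. Applying Lemma~\ref{lem:colors} once for each $i^\times\in\{3,\dots,d\}$ shows $\supp{f}\cap\ind{C'_{i^\times}}=\emptyset$, hence $\supp{f}\subseteq\ind{C'_1\cup C'_2}$; the identical argument applies to $f'$.

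For the final claim, I would use adjacency of $e$ and $e'$: they share a $0$-dimensional element $v\in\QQ_{\DD_1}$, written uniquely as $v=\Phi_\DD(f_v)\cap g_v$ by Lemma~\ref{stm:unique}. Applying the facet description of Lemma~\ref{stm:face_dim} to both $e$ and $e'$ gives two independent dichotomies: either $f_v=f$ with $g_v$ a facet of $g$, or $g_v=g$ with $f$ a facet of $f_v$ (and analogously for $f',g'$). The mixed cases would force a vertex $f$ or $f'$ of $\PC$ to be a facet of another vertex, which is impossible, so only two cases survive. If both shrinkings are through the $g$-coordinate, uniqueness of the representation of $v$ gives $f=f'$, so the supports coincide. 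If both shrinkings are through the $f$-coordinate, then $g=g_v=g'$ and $f,f'$ are both vertices of the common $1$-dimensional face $f_v$ of $\PC$; since $\PC$ is simple by Observation~\ref{g:basis}, any edge of $\PC$ has exactly two endpoints whose bases are related by a single pivot, so $\supp{f}$ and $\supp{f'}$ differ in at most one column index. The main obstacle is simply the bookkeeping of the four cases and ruling out the mixed ones cleanly; no new ideas beyond the already established structural lemmas are required.
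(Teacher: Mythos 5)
Your proposal is correct and takes essentially the same route as the paper's proof: Lemma~\ref{stm:face_dim} to force $\dim g=1$, $\dim f=0$ (so $f,f'$ are vertices), Lemma~\ref{lem:colors} applied at a parameter vector in $\MM(e)$ with coordinates $3,\dots,d$ equal to $0$, the facet dichotomy of Lemma~\ref{stm:face_dim} together with the uniqueness from Lemma~\ref{stm:unique} at the shared vertex, and Observation~\ref{g:basis} (simplicity of $\PC$) for the single-pivot conclusion. Your explicit four-case analysis merely spells out what the paper compresses into ``Observation~\ref{g:basis} implies the statement,'' so no further changes are needed.
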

\begin{proof}
By Lemma~\ref{stm:face_dim}, the faces $f,f'$ are vertices of $\PC$. Furthermore,
since $\MM(e), \MM(e') \subset \lspan(\e_1,\e_2)$, Lemma~\ref{lem:colors} implies
that $\supp{f}, \supp{f'} \subseteq \ind{C'_1 \cup C'_2}$. Now, since $e$ and
$e'$ are adjacent, they share a vertex $\vv = \Phi_\DD(f_\vv) \cap g_\vv \in
\QQ_{\DD_1}$, where $f_\vv \in \FF$ and $g_\vv \in \SS$. Then, by
Lemma~\ref{stm:face_dim}, either $f$ is a facet of $f_\vv$ and $g = g_\vv$, or $f =
f_\vv$ and $g_\vv$ is a facet of $g$. Similarly, either $f'$ is a facet of
$f_\vv$ and $g' = g_\vv$, or $f' = f_\vv$ and $g_\vv$ is a facet of $g'$.
Then, Observation~\ref{g:basis} implies the statement.
\end{proof}

Using Lemma~\ref{lem:edges}, we now present
a polynomial-time checkable criterion whether an interval $[\mm_1,\mm_2] \subset
\DD_1$ intersects
an edge $e^\star
= \Phi_\DD(f^\star) \cap g^\star \in \QQ_{\DD_1}$, where $f \in \FF$ and $g \in
\SS$, such that $\supp{f^\star}$ defines a $(k,d-k)$-colorful choice that
ray-embraces $\bb'$.

\begin{corollary}\label{stm:halfhalf_binary}
  Let $k \in [d-1]$, be a number and let $e, e' \in \QQ_{\DD_1}$ be two
  edges with $e = \Phi_\DD(f) \cap g$ and $e' = \Phi_\DD(f') \cap g'$, where $f,
  f' \in \FF$ and $g,g' \in \SS$. If $|\ind{C_1} \cap \supp{f}| < k$ and
  $|\ind{C_1} \cap \supp{f'}| > k$, then there exists an edge $e^\star =
  \Phi_\DD(f^\star) \cap
  g^\star \subset \conv(e,e')$, $e^\star \in \QQ_{\DD_1}$, such that
  $\supp{f^\star}$ defines a $(k, d-k)$-colorful choice of $C_1$ and $C_2$ that
  ray-embraces $\bb'$, where $f^\star \in \FF$ and $g^\star \in \SS$.
\end{corollary}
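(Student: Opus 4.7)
The plan is to carry out a discrete intermediate value argument along the $1$-dimensional complex $\QQ_{\DD_1}$. Since $\QQ_{\DD_1}$ subdivides the segment $\DD_1$, its edges are linearly ordered along $\DD_1$, so I would list the edges from $e$ to $e'$ as a sequence $e_0 = e, e_1, \ldots, e_n = e'$ in which each consecutive pair $e_i, e_{i+1}$ shares a vertex. Writing $e_i = \Phi_\DD(f_i) \cap g_i$ with $f_i \in \FF$ and $g_i \in \SS$, I would set $n_i = |\ind{C'_1} \cap \supp{f_i}|$. By hypothesis, $n_0 < k$ and $n_n > k$.

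The key step is to show that $(n_i)$ changes by at most one between consecutive indices. This is where Lemma~\ref{lem:edges} enters: it guarantees that every $f_i$ is a vertex of $\PC$ (hence $|\supp{f_i}| = d$), that each $\supp{f_i} \subseteq \ind{C'_1 \cup C'_2}$, and that $\supp{f_i}$ and $\supp{f_{i+1}}$ agree in all but at most one column index. A single column swap can change $|\ind{C'_1} \cap \supp{\cdot}|$ by at most one, so $|n_{i+1} - n_i| \leq 1$. A standard discrete intermediate value argument then produces some index $i^\star$ with $n_{i^\star} = k$. Setting $e^\star := e_{i^\star}$, $f^\star := f_{i^\star}$, and $g^\star := g_{i^\star}$, the edge $e^\star$ lies between $e$ and $e'$ along $\DD_1$, hence $e^\star \subseteq \conv(e, e')$.

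It remains to check that $\supp{f^\star}$ has the asserted properties. Since $|\supp{f^\star}| = d$, exactly $k$ of its indices lie in $\ind{C'_1}$, and $\supp{f^\star} \subseteq \ind{C'_1 \cup C'_2}$, the remaining $d - k$ indices must lie in $\ind{C'_2}$, so the columns $A_{\supp{f^\star}}$ form a $(k, d-k)$-colorful choice from $C'_1$ and $C'_2$. Because $f^\star$ is a vertex of $\PC$, the corresponding basic feasible solution writes $\bb'$ as a nonnegative combination of those columns, so this colorful choice ray-embraces $\bb'$. Property~\ref{p:eqsol} of the perturbation construction then translates this into a $(k, d-k)$-colorful choice of the original $C_1, C_2$ ray-embracing $\bb$, which is what the surrounding binary-search algorithm ultimately needs.

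There is no single hard obstacle here: the structural bookkeeping has already been done in Lemmas~\ref{stm:face_dim}, \ref{stm:polycompl}, and \ref{lem:edges}, and Lemma~\ref{lem:colors} ensures that supports stay confined to $\ind{C'_1 \cup C'_2}$ everywhere along $\DD_1$. Given those tools, the corollary collapses to the elementary observation that an integer-valued function on a finite path that jumps by at most one at each step must attain every intermediate value between its endpoints.
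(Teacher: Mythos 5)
Your proposal is correct and follows essentially the same route as the paper's own proof: both rely on Lemma~\ref{lem:edges} to get that supports of adjacent edges in the linearly ordered complex $\QQ_{\DD_1}$ differ in at most one column, apply a discrete intermediate-value argument to hit $|\ind{C'_1}\cap\supp{f^\star}|=k$, and use that $f^\star$ is a vertex (so $|\supp{f^\star}|=d$ and the basic feasible solution ray-embraces $\bb'$). Your extra remark on translating back to the unperturbed instance via the perturbation properties is not needed for the corollary itself but matches how the surrounding algorithm uses it.
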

\begin{proof}
By Lemma~\ref{lem:edges}, the supports of the faces in $\FF$ that corresponds to two
adjacent edges in $\QQ_{\DD_1}$ differ in at most one column. Since $|\ind{C_1} \cap
\supp{f}| < k$, $|\ind{C_1} \cap \supp{f'}| > k$, and since $Q_{\DD_1}$ is a
polytopal complex, there must be an edge $e^\star =
\Phi_\DD(f^\star) \cap g^\star \in \QQ_{\DD_1}$ between $e$ and $e'$ such that
$|\ind{C_1} \cap \supp{f^\star}| = k$. By Lemma~\ref{stm:face_dim}, $f^\star$ is a
vertex and hence $|\supp{f^\star}| = d$. In particular, then $|\ind{C_2} \cap
\supp{f^\star}| = d-k$.
\end{proof}

The algorithm to find this $(k,d-k)$-colorful choice is now a straightforward
application of binary search. Initially we set $\mm_1 = \e_1$ and $\mm_2 =
\e_2$ and we maintain the invariant that the interval $[\mm_1,\mm_2]$ contains
an edge $e^\star = \Phi_\DD(f^\star) \cap g^\star \in \QQ_{\DD_1}$ such that
$\supp{f^\star}$ defines a $(k,d-k)$-colorful choice that ray-embraces
$\bb'$. The single optimal feasible
basis for $\e_1$ is $C_1$ and similarly, the single optimal feasible basis for
$\e_2$ is $C_2$. Then, Corollary~\ref{stm:halfhalf_binary} implies the invariant for the
initial interval. We repeatedly proceed as follows: set $\mm' = \frac{1}{2}
(\mm_1 + \mm_2)$ and solve the linear program $\LC_{\MM(\mm')}$. Let $\supp{f'}$
be the support of the maximum face $f' \in \FF$ that is optimal for
$\LC_{\MM(\mm')}$. First assume that $|\supp{f'}| = d$, i.e., assume that $f'$ is
a vertex of $\PC$. If $|\ind{C_1} \cap \supp{f'}| = k$, we have found the desired
solution. If $|\ind{C_1} \cap \supp{f'}| < k$, we set $\mm_2 =
\mm'$ and otherwise, if $|\ind{C_1} \cap \supp{f'}| > k$, we set $\mm_1 = \mm'$. By
Corollary~\ref{stm:halfhalf_binary}, the invariant is maintained. Now, assume that
$|\supp{f'}| = d+1$, i.e., assume that $f'$ is an edge of $\PC$. Then, by
Lemma~\ref{stm:face_dim}, $\mm' = \Phi_\DD(f') \cap g$ is a vertex of $\QQ_{\DD_1}$
and since $\mm' \in \relint \DD_1$, it is incident to two edges $e_1, e_2 \in
\QQ_{\DD_1}$ with $e_1= \Phi_\DD(f_1)\cap g$ and $e_2 = \Phi_\DD(f_2)\cap g$, where
$f_1$ and
$f_2$ are the two incident vertices to the edge $f'$. We compute both supports
$\supp{f_1}$ and $\supp{f_2}$ by checking every $d$-subset of $\supp{f'}$ whether
it constitutes a basis. Then, we check whether one of the two supports is a
$(k,d-k)$-colorful choice. If not, then by Lemma~\ref{lem:edges}, either
both supports contain less than $k$ columns from $C_1$ or both contain more than
$k$ columns from $C_1$. In the first case, we set $\mm_2 = \mm'$ and in the
second case, we set $\mm_1 = \mm'$. Again, Corollary~\ref{stm:halfhalf_binary} guarantees
that the invariant is maintained.

Clearly, each update of the interval $[\mm_1,\mm_2]$ needs weakly polynomial
time since $\Oh{d}$ linear programs are solved. Furthermore, the
number of the steps needed before a solution is found is logarithmic in the
length of the shortest edge. The following lemma shows that the minimum length
of an edge in $\QQ_{\DD_1}$ is at least exponentially small in the length of the
\CCP instance.

\begin{lemma}\label{lem:halfhalf_edgelength}
Let $L$ be the length of the binary encoding of the \CCP instance
$(C'_1,\dots,C'_d,\bb')$ and
let $e=[\mm_1,\mm_2] \in \QQ_{\DD_1}$ be an edge. Then, $-\log \|\mm_2 - \mm_1\| =
\Om{\poly L }$.
\end{lemma}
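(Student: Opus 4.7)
The plan is to show that each endpoint of an edge $e \in \QQ_{\DD_1}$ is a rational point whose coordinates have numerator and denominator of bit-length polynomial in $L$; two distinct such points must then differ in at least one coordinate by at least $2^{-\poly L}$, yielding the claim.

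First, I would fix an endpoint $\mm_1$ of $e$. By Lemma~\ref{stm:unique}, there is a unique pair $(f,g)$ with $f \in \FF$ and $g \in \SS$ such that $\mm_1 = \Phi_\DD(f) \cap g$, i.e., $\mm_1$ is the projection to $\DD$ of the unique point $\mm_1' = \Phi(f) \cap \MM(g)$. Because $\MM(g)$ is a face of the unit cube, the requirement $\mm_1' \in \MM(g)$ amounts to fixing $d - \dim g$ coordinates of $\mm_1'$ to explicit values in $\{0,1\}$. On the other coordinates, $\mm_1'$ is determined by the equations in $\LR_{B,f}$ that set the reduced costs indexed by $\supp{f} \setminus \ind{B}$ to zero, where $B$ is any feasible basis of a vertex of $f$. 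By Lemma~\ref{stm:face_dim} the resulting system has rank equal to its number of variables, so $\mm_1'$ is the unique solution of a square linear system $A' \mm' = \bb'$ that I would extract explicitly.

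Next, I would bound the bit-complexity of the entries of $A'$ and $\bb'$. Using the formula for the reduced cost vector in (\ref{eq:redcosts}) and the definition of $\cc_\mm$ in (\ref{eq:costs}), each entry is an integer linear combination of $1$, $dN^2$, and powers of $\eps$, with coefficients coming from entries of $A^{-1}_\ind{B} A_R$. By Lemma~\ref{lem:lpsol} these coefficients are rationals with numerator and denominator bounded by $N$, and after clearing the common denominator $\det A_\ind{B}$ (as in the proof of Lemma~\ref{g:red}) the resulting integer entries are polynomials in $\eps$ whose coefficients have absolute value at most $N^{O(1)}$. Since $\eps = N^{-cd}$ with $c$ constant and $N = d! m^d$, both $\eps$ and these coefficients can be written as rationals of bit-length $\poly(L)$. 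After substituting the numeric value of $\eps$, the entries of $A'$ and $\bb'$ become rationals of bit-length $\poly(L)$. Applying Cramer's rule and the Leibniz formula bound used in Lemma~\ref{lem:lpsol} to this system, each coordinate of $\mm_1'$ is a rational $p_i / q_i$ with $|p_i|, |q_i| \leq 2^{\poly(L)}$. The projection $\mm_1 = \mm_1' / \|\mm_1'\|_1$ onto $\DD$ preserves this property (the $\ell_1$-norm of $\mm_1'$ is itself a rational of $\poly(L)$ bit-length).

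Finally, I would use this bit-complexity bound to lower-bound $\|\mm_2 - \mm_1\|$. The two endpoints $\mm_1, \mm_2$ of $e$ are distinct, so some coordinate differs: writing the two coordinates as $p_1/q_1$ and $p_2/q_2$ with $|p_j|, |q_j| \leq 2^{\poly(L)}$, we have
\[
\lt|\frac{p_1}{q_1} - \frac{p_2}{q_2}\rt| = \frac{|p_1 q_2 - p_2 q_1|}{|q_1 q_2|} \geq \frac{1}{2^{\poly(L)}}
\]
since the numerator is a nonzero integer. Hence $\|\mm_2 - \mm_1\| \geq 2^{-\poly(L)}$, which gives $-\log \|\mm_2 - \mm_1\| = \Oh{\poly(L)}$, as needed for the binary search to terminate in polynomially many steps.

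The main technical obstacle is the careful bookkeeping in the second step: one must verify that introducing the perturbation $\eps$ does not blow up the bit-complexity beyond a polynomial in $L$, and that clearing denominators at every stage (both in defining the reduced costs via $A^{-1}_\ind{B}$ and in substituting the explicit rational value of $\eps$) keeps all magnitudes under $2^{\poly(L)}$. Once this bookkeeping is done, the lower bound on $\|\mm_2 - \mm_1\|$ is a standard application of the ``integer separation'' argument.
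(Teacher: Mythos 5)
Your proposal is correct and follows essentially the same route as the paper: express each endpoint of $e$ as the unique solution of the linear system $\LR_{B,f_i}$ extended by the cube-face constraints, bound the bit-complexity of its coordinates via Cramer's rule (the paper simply invokes Lemma~\ref{lem:lpsol} where you spell out the bookkeeping for $\eps$ and the denominators from $A^{-1}_{\ind{B}}$), and conclude with the integer-separation argument. Your extra care with the projection onto $\DD$ and your writing the conclusion as $-\log\|\mm_2-\mm_1\| = \Oh{\poly L}$ (the direction actually needed for the binary search) are both fine.
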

\begin{proof}
We write $e$ as $\Phi_\DD(f) \cap g$ and the two incident vertices as
$\mm_1 = \Phi_\DD(f_1) \cap g_1$ and $\mm_2 = \Phi_\DD(f_2) \cap g_2$, where
$\lt\{f,f_1,f_2\rt\} \subseteq \FF$ and $\lt\{g,g_1,g_2\rt\} \subseteq \SS$.
We denote with $\up{\mm}_1 = \MM(\mm_1)$ and with
$\up{\mm}_1 =
\MM(\mm_1)$ the vertices in $\QQ$ whose central projections onto $\DD$ resulted
in $\mm_1$ and $\mm_2$, respectively. Since $e$ is an edge, $\up{\mm}_1\neq
\up{\mm}_2$ and hence there is a $j \in[d]$ with $\lt(\up{\mm}_1\rt)_j \neq
\lt(\up{\mm}_2\rt)_j$. By Lemma~\ref{stm:face_dim}, $f$ is a vertex of $\PC$ and
$\supp{f} \subseteq \supp{f_i}$ for $i=1,2$. Let $B$ denote the columns in
$A_\supp{f}$. Then, we can express $\up{\mm}_i$, $i=1,2$, as the unique solution
to the linear system $\LR_{B, f_i}$ extended by the constraints $\mm \in
\MM(g_i)$. Now, Lemma~\ref{lem:lpsol} guarantees that the logarithm of
$\lt(\up{\mm}_i\rt)_j$,
$i \in [2]$, is a polynomial in the size of the linear system and hence in $L$.
Since $(\mm_1)_j \neq
(\mm_2)_j$, we have $=-\log \|\mm_2 - \mm_1\| = \Om{\poly L}$, as claimed.
\end{proof}

The described binary-search algorithm needs therefore only polynomial time in
$L$ to compute a $(k,d-k)$-colorful choice $C'$ for $C'_1$ and $C'_2$. Since $L$
is polynomial in the length of the of the
original instance $(C_1,\dots,C_d,\bb)$, the running time is weakly polynomial
in the length of the original instance. Furthermore, we can obtain a
$(k,d-k)$-colorful choice $C$ for $C_1$ and $C_2$ by
replacing the perturbed points in $C'$ with the original points in $C_1 \cup
C_2$. Lemma~\ref{stm:p3} then guarantees that $C$ ray-embraces $\bb$.


\begin{thebibliography}{10}

\bibitem{AartsLe2003}
E.~Aarts and J.~K. Lenstra.
\newblock {\em Local search in combinatorial optimization}.
\newblock Princeton University Press, 2003.

\bibitem{AnstreicherBo1992}
K.~M. Anstreicher and R.~A. Bosch.
\newblock Long steps in an {$\Oh{n^3 L}$} algorithm for linear programming.
\newblock {\em Math. Program.}, 54(1-3):251--265, 1992.

\bibitem{ArochaBaBrFaMo2009}
J.~L. Arocha, I.~B{\'a}r{\'a}ny, J.~Bracho, R.~Fabila, and L.~Montejano.
\newblock Very colorful theorems.
\newblock {\em Discrete Comput. Geom.}, 42(2):142--154, 2009.

\bibitem{Barany1982}
I.~B{\'a}r{\'a}ny.
\newblock A generalization of {C}arath\'eodory's theorem.
\newblock {\em Discrete Math.}, 40(2-3):141--152, 1982.

\bibitem{BaranyOn1997}
I.~B{\'a}r{\'a}ny and S.~Onn.
\newblock Colourful linear programming and its relatives.
\newblock {\em Math. Oper. Res.}, 22(3):550--567, 1997.

\bibitem{barman2015}
S.~Barman.
\newblock Approximating {N}ash equilibria and dense bipartite subgraphs via an
  approximate version of {C}arath{\'{e}}odory's theorem.
\newblock In {\em Proc. 47th Annu. ACM Sympos. Theory Comput. (STOC)}, pages
  361--369, 2015.

\bibitem{chvatal1983}
V.~Chv{\'a}tal.
\newblock {\em Linear programming}.
\newblock W. H. Freeman and Company, 1983.

\bibitem{Cohen1967}
D.~I.~A. Cohen.
\newblock On the {S}perner lemma.
\newblock {\em J. Combinatorial Theory}, 2(4):585--587, 1967.

\bibitem{Dantzig1963}
G.~B. Dantzig.
\newblock {\em Linear Programming and Extensions}.
\newblock Princeton University Press, 1963.

\bibitem{DaskalakisPa11}
C.~Daskalakis and C.~Papadimitriou.
\newblock Continuous local search.
\newblock In {\em Proc. 22nd Annu. ACM-SIAM Sympos. Discrete Algorithms
  (SODA)}, pages 790--804, 2011.

\bibitem{holmsen2013}
A.~Holmsen.
\newblock A combinatorial version of the colorful {C}arath{\'e}odory theorem.
\newblock 2013.

\bibitem{JohnsonPaYa1988}
D.~S. Johnson, C.~H. Papadimitriou, and M.~Yannakakis.
\newblock How easy is local search?
\newblock {\em J. Comput. System Sci.}, 37(1):79--100, 1988.

\bibitem{kalai2005}
G.~Kalai and R.~Meshulam.
\newblock A topological colorful {H}elly theorem.
\newblock {\em Adv. Math.}, 191(2):305 -- 311, 2005.

\bibitem{KapoorVa1986}
S.~Kapoor and P.~M. Vaidya.
\newblock Fast algorithms for convex quadratic programming and multicommodity
  flows.
\newblock In {\em Proc. 18th Annu. ACM Sympos. Theory Comput. (STOC)}, pages
  147--159, 1986.

\bibitem{kozlov1980polynomial}
M.~K. Kozlov, S.~P. Tarasov, and L.~G. Khachiyan.
\newblock The polynomial solvability of convex quadratic programming.
\newblock {\em USSR Comput. Math. and Math. Phys.}, 20(5):223--228, 1980.

\bibitem{Matouvsek2002}
J.~Matou{\v{s}}ek.
\newblock {\em Lectures on discrete geometry}.
\newblock Springer, 2002.

\bibitem{Matousek2008}
J.~Matou{\v{s}}ek.
\newblock {\em Using the {B}orsuk-{U}lam theorem: lectures on topological
  methods in combinatorics and geometry}.
\newblock Springer, 2008.

\bibitem{MatousekGa2007}
J.~Matousek and B.~G{\"a}rtner.
\newblock {\em Understanding and using linear programming}.
\newblock Springer, 2007.

\bibitem{MegiddoCh1989}
N.~Megiddo and R.~Chandrasekaran.
\newblock On the {$\epsilon$}-perturbation method for avoiding degeneracy.
\newblock {\em Oper. Res. Lett.}, 8(6):305--308, 1989.

\bibitem{MeunierSa2014}
F.~Meunier and P.~Sarrabezolles.
\newblock Colorful linear programming, {N}ash equilibrium, and pivots.
\newblock arXiv:1409.3436, 2014.

\bibitem{AartsMiKo2007}
W.~Michiels, E.~Aarts, and J.~Korst.
\newblock {\em Theoretical aspects of local search}.
\newblock Springer, 2007.

\bibitem{MulzerSt2015}
W.~Mulzer and Y.~Stein.
\newblock Computational aspects of the colorful {C}arath{\'e}odory theorem.
\newblock In {\em Proc. 31st Int. Sympos. Comput. Geom. (SoCG)}, volume~34,
  pages 44--58, 2015.

\bibitem{Papadimitriou1994}
C.~H. Papadimitriou.
\newblock On the complexity of the parity argument and other inefficient proofs
  of existence.
\newblock {\em J. Comput. System Sci.}, 48(3):498--532, 1994.

\bibitem{Rado1946}
R.~Rado.
\newblock A theorem on general measure.
\newblock {\em J. London Math. Soc.}, 21:291--300 (1947), 1946.

\bibitem{Sarkaria1992}
K.~S. Sarkaria.
\newblock {T}verberg's theorem via number fields.
\newblock {\em Israel J. Math.}, 79(2-3):317--320, 1992.

\bibitem{Tverberg1966}
H.~Tverberg.
\newblock A generalization of {R}adon’s theorem.
\newblock {\em J. London Math. Soc}, 41(1):123--128, 1966.

\bibitem{Ziegler1995}
G.~M. Ziegler.
\newblock {\em Lectures on polytopes}.
\newblock Springer, 1995.

\end{thebibliography}
\end{document}